\title{Beyond representing orthology relations by trees}
\author{K.T. Huber, G. E. Scholz}
\thanks{School of Computing Sciences, 
University of East Anglia, UK}
\date{\today}
\newtheorem{theorem}{Theorem}[section]
\newtheorem{lemma}[theorem]{Lemma} 
\newtheorem{corollary}[theorem]{Corollary}
\newtheorem{proposition}[theorem]{Proposition}
\begin{document}

\maketitle

\begin{abstract}
  Reconstructing the evolutionary past of a family of genes
  is an important aspect of many genomic studies. To help with
  this, simple operations on a set of sequences called
  orthology relations may be employed. In addition to being interesting from
  a practical point of view they are also attractive from a theoretical
  perspective in that e.\,g.\,a characterization is known
  for when such a relation is representable by a certain type of
  phylogenetic tree. For an orthology relation inferred from
  real biological data it is however generally too
  much to hope for that it satisfies that
  characterization. Rather than trying to correct
  the data in some way or another which has its own drawbacks,
  as an alternative, we propose
  to represent an orthology relation $\delta$ in
  terms of a structure more general
  than a phylogenetic tree called a phylogenetic network.
  To compute such a network in the form
  of a level-1 representation for $\delta$, we introduce
  the novel {\sc Network-Popping} algorithm which has several
  attractive properties. In addition,
  we characterize orthology relations $\delta$ on some set $X$
  that have a level-1 representation in terms of eight
  natural properties for $\delta$
  as well as in terms for level-1 representations
  of orthology relations on certain subsets of $X$.
  \end{abstract}

\section{Introduction}

  Unraveling the evolutionary past of a family $\mathcal G$ of genes
  is an important aspect for many genomic studies.
  For this, it is generally assumed that the genes in $\mathcal G$ are
  orthologs, that is, have arisen from a common ancestor through
  speciation. However it is known that shared ancestry of genes can
  also arise
  via whole genome duplication (paralogs). This potentially obscures the signal
  used for reconstructing the evolutionary
  past of the genes in $\mathcal G$ in the form of a gene tree (essentially
  a rooted tree whose leaves are labelled by the elements of $\mathcal G$ --
  we present precise definitions of the main concepts used in the next
  section).
  To tackle this problem, tree-based approaches have been proposed. These
  typically work by reconciling a gene tree with
  an assumed further tree (species tree) in terms of a map that
  operates on their vertex sets. For this, certain
  evolutionary events are postulated such
  as the ones mentioned above (see e.\,g.\,\cite{N13} for a recent review as
  well as e.\,g.\,\cite{MM15} and the references therein).
  
  To overcome the problem that the resulting reconciliation
  very much depends on the quality of the
  employed trees and also that such approaches can be
  computationally demanding for larger datasets, orthology relations have been
  proposed as an alternative.
  These  operate directly on the set of
  sequences from which a gene tree is built (see e.\,g.\,\cite{Altenhoff:09}).
  In addition to having attractive practical properties, such relations
  are also interesting from a theoretical point of view due to their
  relationship with e.\,g.\,co-trees (see e.\,g.\,\cite{HHHMSW13,HW15}).
  Furthermore, a characterization is known for
  when an orthology relation can be represented in terms of a certain
  type of phylogenetic tree \cite{HHHMSW13}. 

  Due to e.\,g.\,errors or noise in an orthology relation,
  it is however in general too much to hope for that
  an orthology relation obtained from a real biological dataset
  satisfies that characterization. A natural strategy therefore might
  be to try and correct for this
  in some way. As was pointed out in \cite{LeM15} however,
  even if an underlying tree-like evolutionary scenario is assumed 
  for this many natural formalizations lead to NP-complete
  problems. 
  Furthermore,  true non-treelike evolutionary
  signal such as hybridization might be overlooked. As an alternative,
  we propose to represent orthology relations in terms of 
  phylogenetic networks. These naturally 
  generalize phylogenetic trees by permitting additional edges.
   To infer such a structure from an orthology relation $\delta$,
  we introduce the novel {\sc Network-Popping} algorithm
\begin{figure}[h]
\begin{center}
  \includegraphics[scale=0.6]{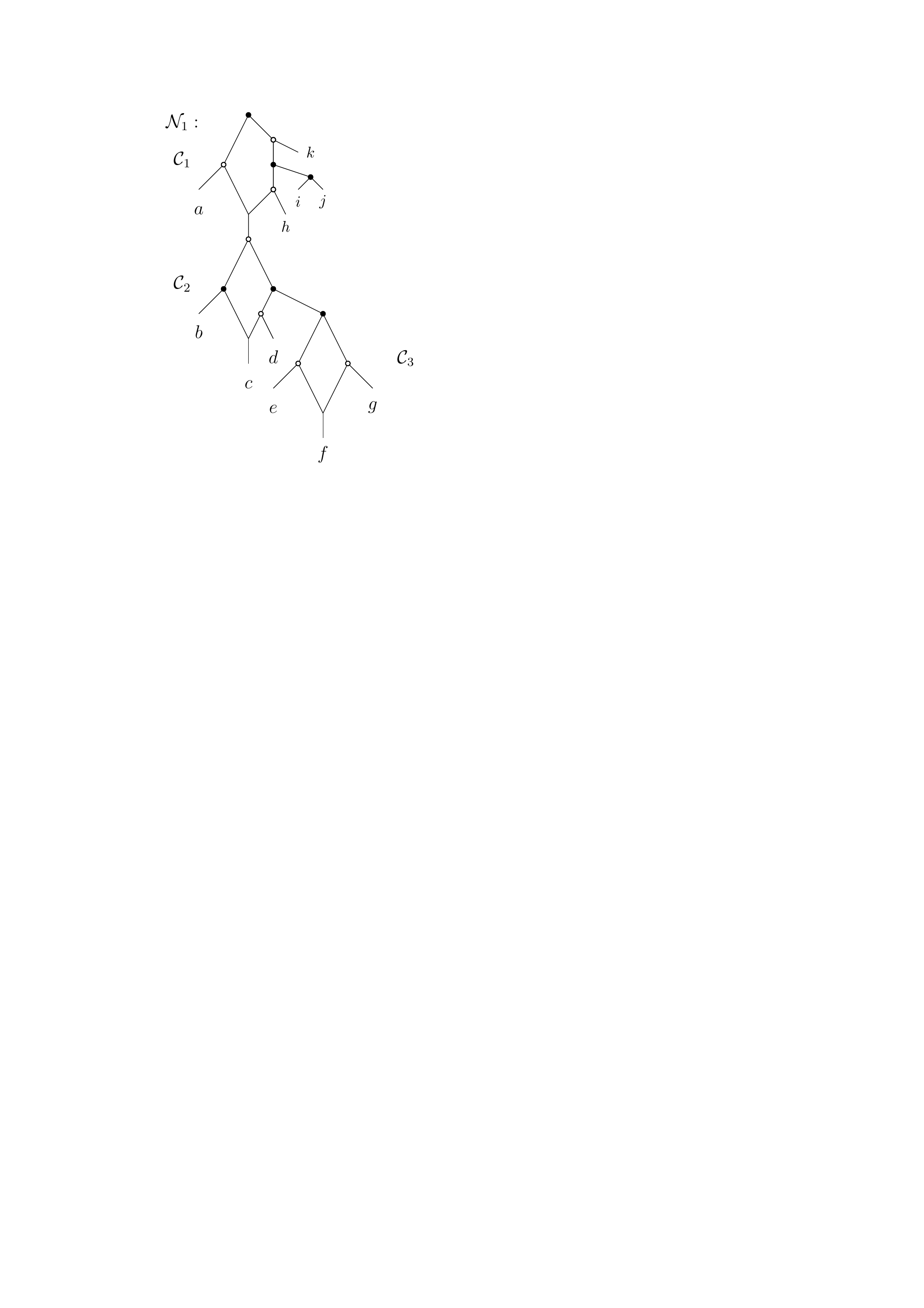}
  \includegraphics[scale=0.6]{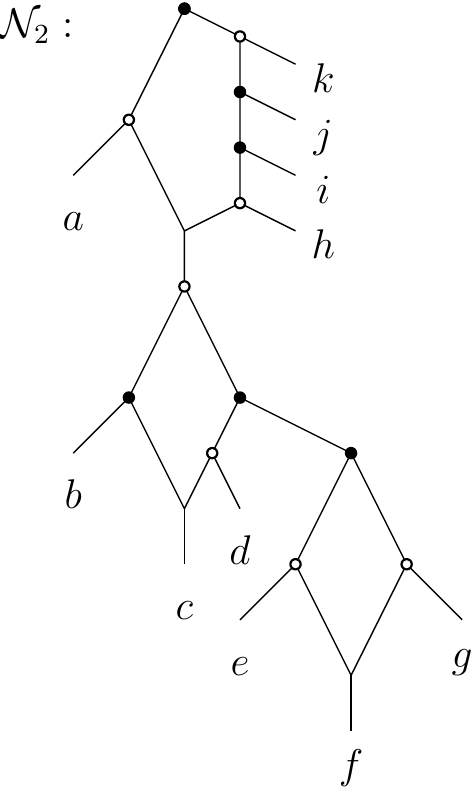}
  \includegraphics[scale=0.6]{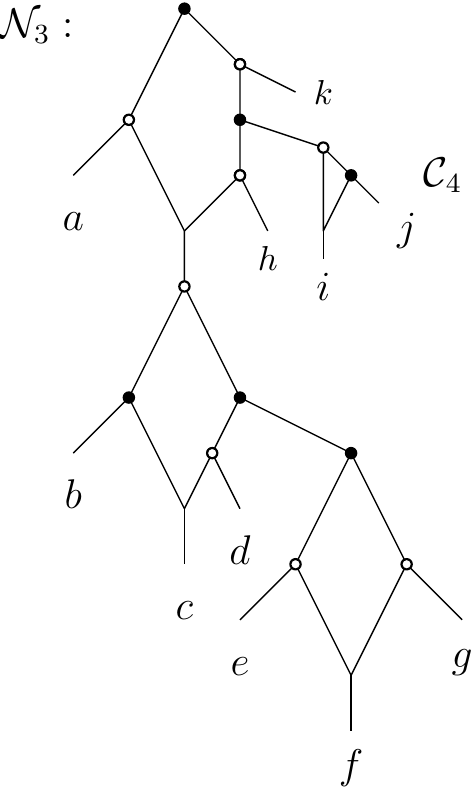}
  \caption{Three distinct level-1 representations
    for the symbolic 3-dissimilarity $\delta_{\mathcal N_2}$
    on $X=\{a,\ldots,k\}$
    induced by $\mathcal N_2$.  However, only $\mathcal N_1$ is returned by
    {\sc Network-Popping} when given $\delta_{\mathcal N_2}$.
    In all three cases the underlying phylogenetic network
      is a level-1 network.
    Furthermore, $\mathcal N_2$,
is not semi-discriminating but weakly labelled whereas  $\mathcal N_3$
is semi-discriminating but not weakly labelled
-- See text for details.
\label{exnet}}
\end{center}
\end{figure}
which returns a level-1 representation of $\delta$
in the form of a structurally very simple phylogenetic network called
a level-1 network (see e.\,g.\,Fig.~\ref{exnet} for examples of
such representations where the interior vertices labelled in terms
of $\bullet$ and $\circ$ represent two distinct evolutionary events
such as speciation and whole genome duplication and the unlabelled
interior vertices represent hybridization events).

Bearing in mind
the point made in \cite[Chapter 12]{F03}, that
$k$-estimates, $k\geq 3$, are potentially more accurate
than mere distances as they capture more information,
we formalize an orthology relation in terms of a
symbolic 3-dissimilarity rather than a symbolic 2-dissimilarity (i.e.
a distance), as was the case in \cite{HHHMSW13}.
From a technical point of view this also allows us to overcome the
problem that using a symbolic 2-dissimilarity in a network context
can be problematic. An example illustrating this
is furnished by the three level-1 representations
depicted in Fig.~\ref{3ns3d} which all represent the same
2-dissimilarity induced by taking the lowest common ancestor between
pairs of leaves.

\begin{figure}[h]
\begin{center}
\includegraphics[scale=0.6]{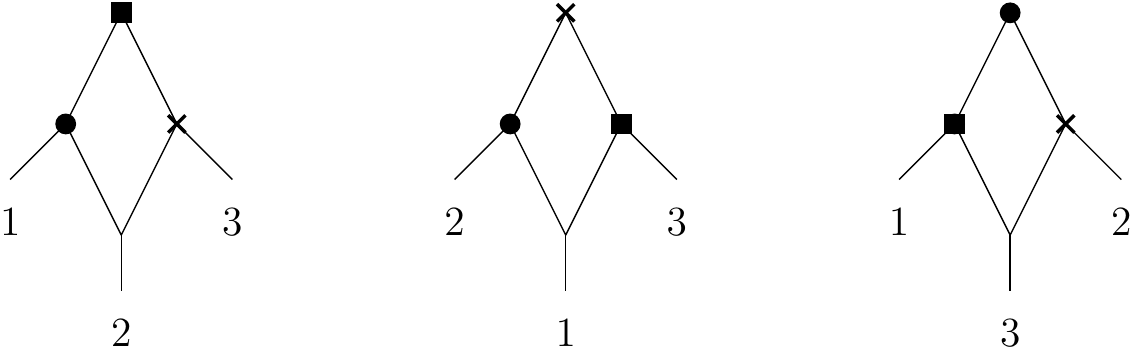}
\label{3ns3d}
\caption{Three distinct level-1 representations of the 
  2-dissimilarity
  $\delta : {\{1,2,3\}\choose 2} \to M=\{\bullet, \times, \blacksquare\}$
  defined by taking lowest common ancestors of pairs of leaves.
}
\end{center}
\end{figure}

As we shall see, algorithm {\sc Network-Popping} is
guaranteed to find, in polynomial time,
a level-1 representation of a symbolic 3-dissimilarity
if such a representation exists. For this, it
relies on the three further algorithms below which
we also introduce. It works by first finding
for a symbolic 3-dissimilarity $\delta$ on
$X$ all pairs of subsets of $X$ that support a cycle using algorithm
\textsc{Find-Cycles}. Subsequent to this, it
employs algorithm \textsc{Build-Cycles} to
construct from each such pair $(H,R')$ a structurally
very simple level-1 representation for the symbolic 3-dissimilarity
induced on $H\cup R'$.
Combined with algorithm \textsc{Vertex-Growing}
which constructs a symbolic discriminating representation
for a symbolic 2-dissimilarity,  
{\sc Network-Popping} then recursively grows the level-1
representation for $\delta$
by repeatedly
applying algorithms \textsc{Build-Cycles} and \textsc{Vertex-Growing}
in concert.
For the convenience of the reader, we illustrate all four algorithms
by means of the level-1 representations depicted in Fig.~\ref{exnet}.
As part of our analysis of algorithm {\sc Network-Popping},
we characterize level-1 representable
symbolic 3-dissimilarities $\delta$ on $X$ in terms of
eight natural properties (P1) -- (P8) enjoyed by $\delta$.
(Theorem~\ref{theo:char}).
Furthermore,
we characterize such dissimilarities in terms of level-1 representable
symbolic 3-dissimilarities
on subsets of $X$ of size $|X|-1$ (Theorem~\ref{iffsubsets}).
Within a Divide-and-Conquer framework the resulting speed-up of
algorithm {\sc Network-Popping} might allow it
to also be applicable to large datasets.

 The paper is organized as follows. In the next section, we
present basic definitions and results. Subsequent to this,
we introduce in Section~\ref{sec:delta-trinets} the crucial concept of
a $\delta$-trinet associated to a symbolic 3-dissimilarity
and state Property~(P1).
In Section~\ref{sec:find-cycles}, we present algorithm {\sc Find-Cycles}
as well as Properties~(P2) and (P3).
In Section~\ref{sec:build-cycles}, we introduce and analyse
algorithm {\sc Build-Cycles}.
Furthermore, we state Properties (P4) -- (P6).
In Section~\ref{sec:network-popping}, we present algorithms
{\sc Vertex-Growing}
and {\sc Network-Popping}. As suggested by
the example in Fig.~\ref{exnet}, algorithm {\sc Network-Popping}
need not return the level-1 representation of a symbolic 3-dissimilarity
that induced it. Employing a further algorithm
called {\sc Transform}, we address in Section~\ref{sec:uniqueness} the
associated uniqueness question (Corollary~\ref{cor:uniqueness}).
As part of this we establish
Theorem~\ref{theo:char}
which includes stating Properties (P7) and (P8). In
Section~\ref{sec:characterizing-level-1}, we
establish Theorem~\ref{iffsubsets}. We conclude with
Section~\ref{sec:conclusion}
where we present research directions that might be worth 
pursuing.

\section{Basic definitions and results}
\label{sec:prelim}

In this section, we collect relevant basic terminology and results
concerning phylogenetic networks and 
symbolic $2$- and $3$-dissimilarities.
From now on and unless stated otherwise, $X$ denotes a finite set of 
size $n \geq 3$, $M$ denotes a finite set of symbols of size at least two
and $\odot$ denotes a symbol not already contained in $M$.
Also, all directed/undirected graphs have no loops or multiple 
directed/undirected edges.
 
\subsection{Directed acyclic graphs}
Suppose $G$ is a rooted directed acyclic graph (DAG), that is, a
DAG with a unique vertex with indegree zero. We call that
vertex the {\em root} of $G$, denoted by $\rho_G$. Also, we call the graph
$U(G)$ obtained from $G$ by ignoring the directions of its edges
the {\em underlying graph} of $G$. By abuse of terminology, we call
an induced
subgraph $H$ of $G$ a {\em cycle} of $G$ if the induced subgraph
$U(H)$ of $U(G)$ is a cycle of $U(G)$. We call a vertex $v$ of $G$
an {\em interior vertex} of $G$ if $v$ is not
a leaf of $G$ where we say that a vertex $v$ is a {\em leaf} if
the indegree of $v$
is one and its outdegree is zero. We denote the set of interior vertices
of $G$ by $V(G)_{int}$ and the set of leaves of $G$
by $L(G)$. We call a vertex $v$ of $G$ a {\em tree vertex} if the
indegree of $v$ is at most one and its 
outdegree is at least two, and
a {\em hybrid vertex} of $G$ if the indegree of $v$ is two
and its outdegree is not zero.
The set of interior vertices of $G$ that are
not hybrid vertices of $G$ is denoted by $V(G)_{int}^-$. 
 We say that $N$ is {\em binary} if, with the
 exception of $\rho_N$,  the indegree and outdegree of each of its
 interior vertices add up to three.
Finally, we say that two DAG's $N$ and $N'$ with leaf set $X$ 
are {\em isomorphic} if there exists a bijection
from $V(N)$ to $V(N')$ that extends to a (directed) graph 
isomorphism between $N$ and $N'$ which is the identity on $X$.

\subsection{Phylogenetic networks and last common ancestors}
A {\em (rooted) phylogenetic network $N$ (on $X$)}
is a rooted DAG that does not contain a vertex that has
indegree and outdegree one and $L(N)=X$.
In the special case that a phylogenetic network $N$  is
such that each of its interior vertices
belongs to at most one cycle we call $N$ a
{\em a level-1 (phylogenetic) network (on $X$)}.
Note that a phylogenetic network may contain cycles
of length three and that a phylogenetic network that
does not contain a cycle is called
a {\em phylogenetic tree $T$ (on $X$)}.

For the following, let $N$ denote a level-1 network on $X$.
For $Y\subseteq X$ with $|Y|\geq 3$, we denote by 
$N|_Y$ the subDAG of $N$ induced by $Y$ (suppressing any resulting
vertex that have indegree and outdegree one). Clearly, $N|_Y$
is a phylogenetic network on $Y$. 
 
 
 Suppose $v$ is a non-leaf vertex of $N$. We say that a further
 vertex $w\in V(N)$ is
 {\em below} $v$ if there is a directed path from $v$ to $w$ and call
 the set of leaves of $N$ below $v$ the \emph{offspring set} of $v$,
 denoted by $\mathcal F(v)$. Note that $\mathcal F(v)$
 is closely related to the hardwired cluster of $N$
 induced by $v$ (see e.g. \cite{HRS10}).
 For a leaf $x\in \mathcal F(v)$, we refer
 to $v$ as an {\em ancestor} of $x$. In case
 $N$ is a phylogenetic tree, we define
 the \emph{lowest common ancestor} $lca_N(x,y)$ of two distinct  leaves
 $x,y\in L(N)$ to be the (necessarily unique) ancestor $v\in V(N)$
 such that $\{x,y\} \subseteq \mathcal F(v)$ and 
 $\{x,y\} \nsubseteq \mathcal F(v')$ holds for all children
 $v'\in V(N)$ of $v$. More
generally, for
$Y \subseteq X$ with $2 \leq |Y| \leq |X|$, we denote by $lca_N(Y)$
 the unique vertex $v$ of $N$ such that $Y \subseteq \mathcal F(v)$,
 and $Y \nsubseteq \mathcal F(v')$ holds for all children
 $v'\in V(N)$ of  $v$. Note that in case the tree $N$ we are
 referring too is clear from the context, we shall write $lca(Y)$
 rather than $lca_N(Y)$.

 It is easy to see that the notion of a lowest common ancestor
 is not well-defined for phylogenetic networks in general.
 However the situation changes in case the network in question
 is a level-1 network, as
 the following central result shows. Since its proof is
 straight-forward, we omit it.

\begin{lemma}\label{lcrefaunique}
  Let $N$ be a level-1 network on $X$ and assume that $Y\subseteq X$ such
  that $|Y|\geq 2$.
 Then there exists a unique interior vertex $v_Y\in V(N)$ such that 
$Y\subseteq \mathcal F(v_Y)$ but $Y\not\subseteq \mathcal F(v')$, 
for all children $v'\in V(N)$ of $v_Y$. Furthermore, there exists two 
distinct elements $x,y \in Y$ such that $v_Y=v_{\{x,y\}}$.
\end{lemma}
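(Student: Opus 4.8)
The plan is to split the statement into its two assertions: first, existence and uniqueness of $v_Y$; second, the fact that $v_Y$ is realised as $v_{\{x,y\}}$ for some pair $x,y\in Y$. For the first part I would argue by contradiction on uniqueness, since existence is essentially immediate: $\rho_N$ is an interior vertex with $Y\subseteq\mathcal F(\rho_N)$, so among all interior vertices $v$ with $Y\subseteq\mathcal F(v)$ we may pick one, call it $v_Y$, for which no child $v'$ satisfies $Y\subseteq\mathcal F(v')$ — such a choice exists because $N$ is finite and acyclic, so one can keep descending from $\rho_N$ to a child containing $Y$ until no such child remains. For uniqueness, suppose $u$ and $v$ are two such vertices, $u\neq v$. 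If one lies below the other, say $v$ below $u$, then following a directed path from $u$ down to $v$ the first edge leaves $u$ for some child $v'$ with $v$ below $v'$, hence $Y\subseteq\mathcal F(v)\subseteq\mathcal F(v')$, contradicting the defining property of $u$. So $u$ and $v$ are incomparable. Then for each $x\in Y$ both $u$ and $v$ are ancestors of $x$; tracing directed paths from $\rho_N$ to $x$ through $u$ and through $v$, one obtains two distinct directed paths from $\rho_N$ to $x$, which in a level-1 network forces $x$ to lie below a hybrid vertex on some cycle, and moreover $u$ and $v$ must both lie on (or above) that cycle. The key point is that in a level-1 network the set of vertices lying above a given leaf $x$ but off the single cycle through $x$'s relevant hybrid ancestor is totally ordered, and the two "sides" of the cycle re-converge only at the hybrid vertex below; chasing this for two elements $x,x'\in Y$ that would need incomparable $u,v$ as common ancestors yields the contradiction. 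This cycle analysis is where I expect the real work to be, and it is presumably why the authors call the result "central" while still deeming the proof "straight-forward".

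For the second assertion, I would proceed as follows. Let $v'_1,\dots,v'_k$ be the children of $v_Y$. By the defining property of $v_Y$, for each $i$ we have $Y\not\subseteq\mathcal F(v'_i)$, so there is an element $y_i\in Y\setminus\mathcal F(v'_i)$. I claim $Y$ is covered by at most two of the offspring sets $\mathcal F(v'_i)$ in the sense needed: more precisely, pick any $x\in Y$; since $Y\subseteq\mathcal F(v_Y)=\bigcup_i\mathcal F(v'_i)$ (this union equality holds because every leaf below $v_Y$ lies below some child), $x$ lies in some $\mathcal F(v'_i)$. Now consider the pair $\{x,y_i\}$ with $y_i\notin\mathcal F(v'_i)$. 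Then $v_{\{x,y_i\}}$ is an ancestor of both $x$ and $y_i$; since $y_i\notin\mathcal F(v'_i)$ and $x\in\mathcal F(v'_i)$, the vertex $v_{\{x,y_i\}}$ cannot lie below $v'_i$, and since it must be an ancestor of $x$ it therefore equals $v_Y$ or lies above $v_Y$. To force equality I would show $v_{\{x,y_i\}}$ is below $v_Y$: indeed $x,y_i\in Y\subseteq\mathcal F(v_Y)$, so $v_Y$ is a common ancestor of $x$ and $y_i$, and by the uniqueness part applied to the two-element set $\{x,y_i\}$, the vertex $v_{\{x,y_i\}}$ is the unique lowest such vertex, hence it lies below or equals $v_Y$; combined with the previous sentence, $v_{\{x,y_i\}}=v_Y$. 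Taking $x$ arbitrary and $y=y_i$ gives the desired pair.

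The main obstacle, as flagged, is the uniqueness argument in a level-1 network: one must rule out that a set $Y$ has two incomparable "lowest" common ancestors, and the only obstruction to uniqueness in a general DAG — multiple directed paths to a common descendant — is precisely what level-1 structure controls. I would isolate the needed fact as a short sublemma: \emph{in a level-1 network $N$, for any vertex $w$ the set of ancestors of $w$ is such that any two incomparable ancestors $u,v$ of $w$ lie on a common cycle whose hybrid vertex is an ancestor of $w$} — and then observe that a single cycle cannot simultaneously accommodate incomparable lowest common ancestors for $Y$, because descending from either of $u,v$ along the cycle toward the hybrid vertex eventually reaches a child still above $w$, and doing this for two leaves of $Y$ collapses $u$ and $v$ to the hybrid vertex or below, contradicting incomparability. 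Everything else is bookkeeping with offspring sets and the elementary fact that $\mathcal F(v)=\bigcup_{v'\text{ child of }v}\mathcal F(v')$ for non-leaf $v$.
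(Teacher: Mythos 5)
The paper itself omits the proof of this lemma (``Since its proof is straight-forward, we omit it''), so there is no official argument to compare against; I can only assess your proposal on its own terms. Your first part is essentially correct: existence by descending from $\rho_N$ is fine, comparable candidates are ruled out as you say, and your sublemma about incomparable ancestors is the right key fact --- two incomparable ancestors $u,v$ of a leaf must sit on the two sides of the (unique, by the level-1 condition) cycle $C$ containing $u$, with the leaf below $h(C)$; since the cycle is determined by $u$ alone, \emph{every} element of $Y$ lies below $h(C)$, so the child of $u$ on $C$ towards $h(C)$ already has $Y$ in its offspring set, contradicting minimality. That is a complete argument, even if your write-up of the final collapse is a little vague.

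The second part, however, has a genuine gap. From ``$v_{\{x,y_i\}}$ is not below $v'_i$'' and ``$v_{\{x,y_i\}}$ is an ancestor of $x$'' you infer that $v_{\{x,y_i\}}$ equals or lies above $v_Y$. That inference is false exactly in the situation the lemma is built to handle: if $v_Y=r(C)$ for a cycle $C$ and $x$ lies below $h(C)$, then $x$ is below \emph{both} cycle-children $v'_1,v'_2$ of $v_Y$, so an ancestor of $x$ not below $v'_1$ may perfectly well be strictly below $v'_2$. Concretely, take $Y=\{x,y_1,y_2\}$ with $x$ below $h(C)$, $y_1$ hanging off side $2$ and $y_2$ hanging off side $1$ of $C$; then $v_Y=r(C)$, but your recipe (pick $x$, note $x\in\mathcal F(v'_1)$, take $y_1\in Y\setminus\mathcal F(v'_1)$) produces the pair $\{x,y_1\}$, which lies entirely below $v'_2$, so $v_{\{x,y_1\}}\neq v_Y$. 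Your other direction ($v_{\{x,y_i\}}$ lies below or equals $v_Y$, via descending from $v_Y$ and invoking uniqueness) is fine; what is missing is a correct choice of pair. The repair is to choose \emph{two} witnesses: pick $y_1\in Y\setminus\mathcal F(v'_1)$ and $y_2\in Y\setminus\mathcal F(v'_2)$ for the (at most two) children whose offspring sets can overlap, check $y_1\neq y_2$ (else a single element of $Y$ lies only below some third child $c$, and pairing it with anything in $Y$ forces $Y\subseteq\mathcal F(c)$, a contradiction), and verify that no child's offspring set contains $\{y_1,y_2\}$; uniqueness then gives $v_{\{y_1,y_2\}}=v_Y$. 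When $v_Y$ is not the root of a cycle the children have pairwise disjoint offspring sets and any two elements of $Y$ below distinct children work.
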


Continuing with the terminology of Lemma~\ref{lcrefaunique},
we shall refer to $v_Y$ as the
{\em lowest common ancestor of $Y$ in $N$}, denoted by $lca_N(Y)$.
As in the case of a phylogenetic tree, we shall write
$lca(Y)$ rather than $lca_N(Y)$ if the network $N$ we
are referring to is clear from the context.

\subsection{Symbolic dissimilarities and labelled level-1 networks}
Suppose $k\in\{2,3\}$. We denote by ${X \choose k}$ 
the set of subsets of $X$ of size $k$, and by
${X \choose \leq k}$ the set of nonempty subsets of $X$ of
size at most $k$. We
call a map $\delta : {X \choose \leq k} \to M^{\odot}:=M \cup \{\odot\}$
a {\em symbolic $k$-dissimilarity on $X$ with values
in $M^{\odot}$} if, for all
$A \in {X \choose \leq k}$,
we have that $\delta(A)=\odot$ if and only if $|A|=1$.
To improve clarity of exposition, we shall refer to $\delta$ as a
{\em symbolic $3$-dissimilarity on $X$} if the set $M$ is of no relevance
to the discussion. Moreover, for $Y=\{x_1, \ldots, x_l\}$, $l\geq 2$,
we shall write $\delta(x_1, \ldots, x_l)$ rather than
$\delta(Y)$ where the order of the
elements $x_i, 1 \leq i \leq l $, is of no
relevance to the discussion.

A {\em labelled (phylogenetic) network $\mathcal N=(N,t) $  (on $X$)}
 is a pair consisting of a phylogenetic network
 $N$ on $X$ and a labelling map
 $t:V(N)_{int}^-\to  M$. If $N$ is a level-1 network
 then $\mathcal N$ is called a {\em labelled level-1 network}. 
 To improve clarity of exposition
we shall always use calligraphic font to denote a
labelled phylogenetic network.

Suppose $\mathcal N=(N,t) $ is a labelled
level-1 network on $X$ such that its vertices in $V(N)_{int}^-$
are labelled in terms of $M$. Then we denote by
$\delta_{\mathcal N}: {X \choose \leq 3} \to M^{\odot}$
the symbolic 3-dissimilarity
on $X$ induced by $\mathcal N$ given by $\delta_{\mathcal N}(Y)=t(lca(Y))$
if $|Y|\not=1$, and $\delta_{\mathcal N}(Y)= \odot$ otherwise.
For $\mathcal N'=(N',t')$ a further
labelled level-1 network on $X$, we say that
$\mathcal N$ and $\mathcal N'$ 
are {\em isomorphic} if $N$ and $N'$ are isomorphic and
$\delta_{\mathcal N}=\delta_{\mathcal N'}$.

Conversely, suppose $\delta$ is a symbolic 3-dissimilarity 
on $X$. In view of Lemma~\ref{lcrefaunique}, we call 
a labelled level-1 network $\mathcal N=(N, t)$ on $X$ a
\emph{level-1 representation} of $\delta$
if $\delta=\delta_{\mathcal N}$. For ease of
terminology, we shall sometimes say that
$\delta$ is {\em level-1 representable} if the
the labelled network we are referring too is of no relevance
to the discussion.
We call a level-1 representation of $\delta$
{\em semi-discriminating}
if $N$ does not contain a directed edge $(u,v)$
such that $t(u)=t(v)$ except for when there exists
a cycle $C$ of $N$ with $|V(C)\cap \{u,v\}|=1$. 
For example, all three labelled level-1 networks depicted in 
Fig.~\ref{exnet} are level-1 representations of $\delta_{\mathcal N_2}$
where $\mathcal N_2$ is the labelled level-1 network depicted in
Fig.~\ref{exnet}(ii).
Furthermore, the representations of 
$\delta_{\mathcal N_2}$ presented in Fig.~\ref{exnet}(i) and (iii),
respectively,
are semi-discriminating  whereas the one depicted in
Fig.~\ref{exnet}(ii) is not.

\begin{figure}[h]
\begin{center}
\includegraphics[scale=0.5]{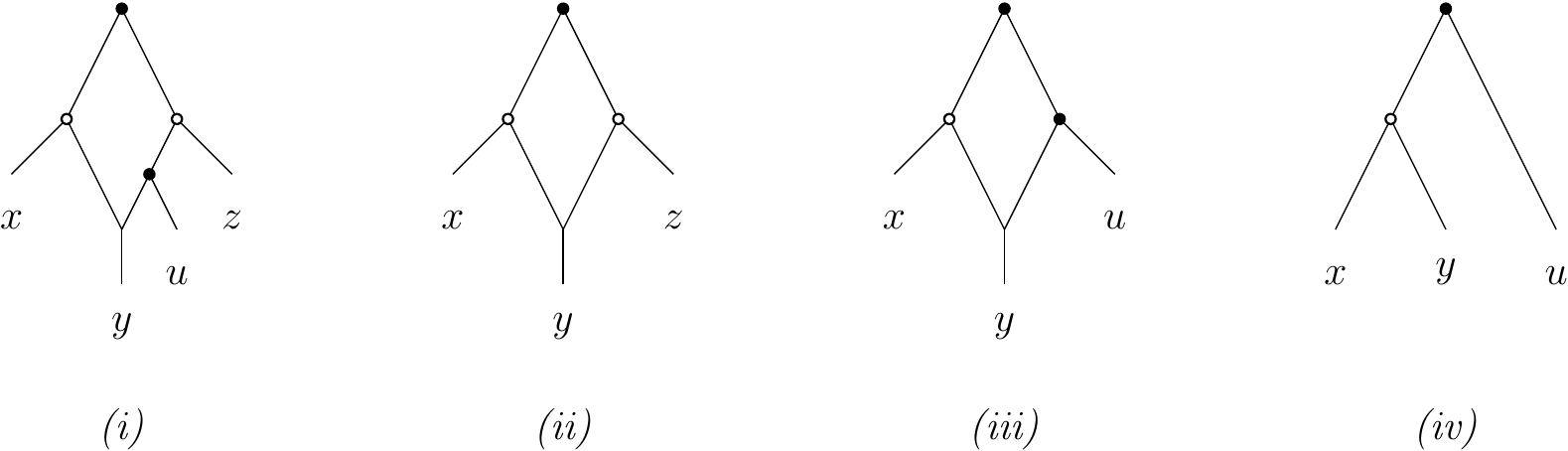}
\caption{\label{dispnu}(i) A labelled level-1 network $\mathcal N$ on 
  $X=\{x,y,z,u\}$. (ii) and (iv) Semi-discriminating level-1 representations
  of $\delta_{\mathcal N}$ restricted to  $\{x,y,z\}$ and
  $Y=\{u,x,y\}$, respectively. (iii) A level-1 representation
  of $\delta_{\mathcal N}|_Y$ in the form of a labelled trinet 
  that is
  is not a $\delta_{\mathcal N}$-trinet.
}
\end{center}
\end{figure}

Note that in case $N$ is a phylogenetic tree on $X$ the 
definition of a semi-discriminating level-1 representation for $\delta$ 
reduces to that of a discriminating symbolic representation 
for the restriction $\delta_2=\delta|_{{X\choose \leq 2}}$ of $\delta$
to ${X\choose 2}$ (see \cite{BD98} and also \cite{HHHMSW13,SS03} for more
on such representations). 
Using the concept of a {\em symbolic ultrametric}, that is, a
symbolic 2-dissimilarity
$\delta :  {X \choose \leq 2} \to M^{\odot}$
for which, in addition, the following two properties are satisfied
\begin{enumerate}
\item[(U1)] $|\{\delta(x,y),\delta(x,z),\delta(y,z)\}| \leq 2$ for all
$x, y, z \in X$;
\item[(U2)] there exists no four elements $x, y, z, u \in X$ such that
  \[\delta(x,y)=\delta(y,z)=\delta(z,u)\neq\delta(z,x)=\delta(x,u)=
  \delta(u,y);\]
\end{enumerate}
such representations were characterized by the authors  
of \cite{BD98} as follows.

\begin{theorem}\cite[Theorem 7.6.1]{BD98} \label{bdtr}
  Suppose $\delta : {X \choose \leq 2} \to M^ {\odot}$ is a
  2-dissimilarity on $X$. Then there exists a discriminating
  symbolic representation of
  $\delta$ if and only if $\delta$ is a symbolic ultrametric.
\end{theorem}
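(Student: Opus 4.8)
My plan is to work throughout with the graphs $G_m$ on vertex set $X$, defined for each symbol $m$ occurring as a value of $\delta$ on pairs by declaring $\{x,y\}$ an edge of $G_m$ exactly when $\delta(x,y)=m$; these graphs partition the edges of the complete graph on $X$. For necessity it suffices to assume that $\delta$ admits some symbolic representation $(T,t)$ (a discriminating one being in particular such). Property~(U1) is then immediate, since among any three leaves $x,y,z$ of a rooted tree at least two of $lca_T(x,y),lca_T(x,z),lca_T(y,z)$ coincide with $lca_T(x,y,z)$, so at least two of $\delta(x,y),\delta(x,z),\delta(y,z)$ agree. For~(U2) I would argue by contradiction from the forbidden pattern on $\{x,y,z,u\}$ with symbols $\alpha \neq \beta$: analysing the triples $\{x,y,z\}$, $\{x,y,u\}$ and $\{x,z,u\}$ and using the elementary fact that in a tree the symbol occurring twice in such a triple is the label of the lca of the triple while the remaining pair has its lca strictly below it, one is forced to conclude that $lca_T(x,z)$ is simultaneously a proper ancestor and a proper descendant of $lca_T(x,y,z)$, which is impossible.

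For sufficiency I would construct a discriminating representation top-down, by induction on $|X|$, with the trivial base case $|X|=2$. The crux is the claim that every symbolic ultrametric on a set of size at least two possesses a ``root symbol'' $m_0$ for which the complement graph $\overline{G_{m_0}}$ is disconnected. To establish this I would first show that~(U1) and~(U2) together force every $G_m$ to be a cograph (a graph with no induced path on four vertices): from an induced path on vertices $a,b,c,d$ (in that order) in $G_m$ one pins, via~(U1) applied to $\{a,c,d\}$ and to $\{a,b,d\}$, the three values $\delta(a,c),\delta(a,d),\delta(b,d)$ to a common symbol $m' \neq m$, which reproduces precisely the configuration outlawed by~(U2). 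Since a cograph on at least two vertices is disconnected or has disconnected complement, it then remains only to rule out that $G_m$ is disconnected for every $m$: I would take $(m_0,C)$ with $C$ a connected component of $G_{m_0}$ of largest possible cardinality, fix $d_0 \notin C$, and observe that~(U1), applied along a $G_{m_0}$-path joining two vertices of $C$, forces $\delta(c,d_0)$ to be the same symbol $\mu \neq m_0$ for all $c \in C$; but then $C \cup \{d_0\}$ is contained in a single component of $G_\mu$, contradicting the maximality of $|C|$.

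Granted such an $m_0$, I would let $X_1,\ldots,X_k$ ($k \geq 2$) be the connected components of $\overline{G_{m_0}}$, so that every pair meeting two distinct parts has $\delta$-value $m_0$. Because a restriction of a symbolic ultrametric is again a symbolic ultrametric, the inductive hypothesis yields discriminating representations of the restriction of $\delta$ to each $X_i$ of size at least two (each singleton part contributing its element as a leaf), and attaching all of these below a new root labelled $m_0$ produces a symbolic representation of $\delta$: cross-part pairs have their lca at the root, while within-part pairs are treated inductively. The only edge that can then fail the discriminating condition joins the root to the root of some subtree carrying the label $m_0$ as well, and contracting every such monochromatic edge alters neither $\delta$ nor the underlying phylogenetic tree, giving the desired discriminating representation.

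I expect the main obstacle to be the root-symbol claim, and within it the cograph property of the graphs $G_m$: this is exactly where property~(U2), rather than~(U1) alone, is indispensable, as is witnessed by small four-element maps satisfying~(U1) for which no tree representation exists.
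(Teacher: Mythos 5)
This statement is not proved in the paper at all: it is imported verbatim as Theorem~7.6.1 of B\"ocker and Dress \cite{BD98}, so there is no in-paper argument to compare yours against. Your proof is, as far as I can check, correct and complete. The necessity direction is the standard lca bookkeeping and is fine (including the strictness claim you use for (U2): when a triple carries exactly two symbols, the two pairs whose lca equals the triple's lca must receive equal labels, so the minority pair's lca is strictly lower). For sufficiency, the chain ``(U1)+(U2) $\Rightarrow$ each $G_m$ is $P_4$-free $\Rightarrow$ some $G_{m_0}$ has disconnected complement $\Rightarrow$ recurse on the co-components under a root labelled $m_0$, then contract monochromatic root edges'' is sound; your maximal-component argument correctly forces some $G_{m_0}$ to be connected (a maximum component of size $1$ would mean no $G_m$ has an edge, impossible for $|X|\geq 2$), and the final contraction step cannot cascade because the subtrees are already discriminating. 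It is worth noting that this is not the route of the cited source: B\"ocker and Dress prove a more general statement about symbolically dated maps, whereas your argument via cographs and the cotree-style decomposition is essentially the one later popularized in \cite{HHHMSW13} (where the graphs $G_m$ and their $P_4$-freeness are made explicit). That route is more elementary and self-contained for exactly this statement, at the cost of not yielding the extra generality of the original; it also dovetails naturally with the \textsc{Bottom-Up}/\textsc{Vertex-Growing} machinery this paper builds on top of the theorem.
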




Clearly, it is too much to hope for that 
any symbolic 3-dissimilarity
$\delta$ has a level-1 representation.
The question therefore becomes: Which symbolic 3-dissimilarities
have such a representation? A first partial answer is
provided by Theorem~\ref{bdtr} and Lemma~\ref{lcrefaunique}
for not $\delta$ but its
restriction $\delta_2$. More precisely, $\delta$
has a discriminating symbolic representation if and only if 
$\delta_2$ is a symbolic ultrametric and,
for all $x,y,z \in X $ distinct, $\delta(x, y, z)$
is the (unique) element appearing at least twice
in the multiset $\{\delta_2(x,y), \delta_2(x,z),\delta_2(y,z)\}$.

\section{$\delta$-triplets, $\delta$-tricycles, and $\delta$-forks}
\label{sec:delta-trinets}
To make a first inroad into the aforementioned
question, we next investigate structurally very simple level-1
representations of symbolic 3-dissimilarities. As we shall see, these
will turn out to
be of fundamental importance for our algorithm {\sc Network-Popping}
(see Section~\ref{sec:network-popping}) as well as for our analysis of its
properties. In the context of this, it is important
to note that although
{\em triplets} (i.\,e.\,binary phylogenetic trees on 3 leaves)
are well-known to uniquely determine (up to isomorphism)
phylogenetic trees this does not hold for
level-1 networks in general \cite{GH12}. To overcome this
problem, {\em trinets},
that is, phylogenetic networks on three leaves
were introduced in \cite{HM13}. For the convenience of the
reader, we depict in Fig.~\ref{12t} all 12 trinets $\tau_1,\ldots,\tau_{12}$ 
on $X=\{x,y,z\}$ from  \cite{HM13}
that are also level-1 networks in our sense.
In the same paper
it was observed that even the slightly more general
1-nested networks are uniquely determined by their induced trinet sets
(see also \cite{HIMS15} for more on constructing level-1 networks from
trinets, and \cite{vIM14} for an extension of this result to
other classes of phylogenetic networks).
\begin{figure}[h]
\begin{center}
\includegraphics[scale=0.5]{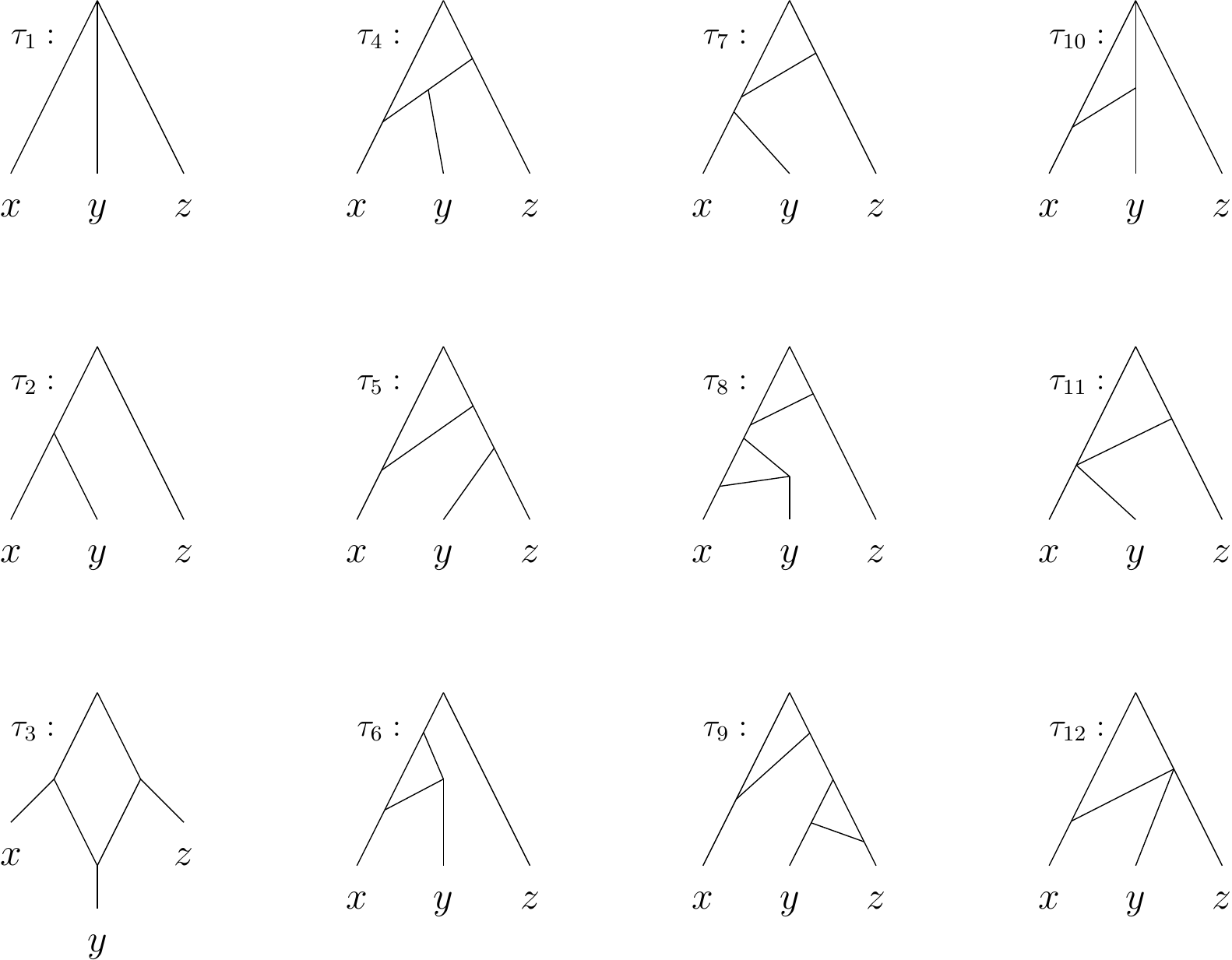}
\caption{The twelve trinets in the from of
  level-1 networks. The two omitted trinets from \cite{HM13} 
  are not level-1 networks in our sense.
\label{12t}}
\end{center}
\end{figure}

Perhaps not surprisingly, trinets on their own
are not strong enough to uniquely determine
labelled level-1 networks in the sense that
any two level-1 representations of a 
symbolic 3-dissimilarity must be isomorphic.
 To see this, suppose $|X|=3$ and
consider the symbolic 3-dissimilarity
$\delta:{X\choose \leq 3}\to \{A,B,\odot\}$ that  maps $X$ and every
2-subset of $X$ to $A$. Then the labelled network
$(\tau_1,t)$ where $t$ maps every vertex in $V(\tau_1)_{int}^-$  
to $A$ is a semi-discriminating level-1 representation
of $\delta$ and so is the labelled network
$(\tau_4,t')$, where 
every vertex in $V(\tau_4)_{int}^-$ is mapped  to $A$
by $t'$.  Note that similar arguments
may also be applied to the level-1 representations involving
the trinet $\tau_4$ to $\tau_{12}$ depicted in Fig.~\ref{12t}.
We therefore evoke parsimony and focus for the remainder of this paper
on the trinets $\tau_1$, $\tau_2$ and $\tau_3$. We shall refer to 
them as {\em fork} on $X=\{x,y,z\}$, {\em triplet} $z|xy$,
and {\em tricycle} $y||xz$, respectively.

The next result (Lemma~\ref{3tri}) relates forks, triplets and tricycles 
with symbolic 3-dissimilarities. To state it, we say that
a symbolic 3-dissimilarity $\delta$ satisfies the {\em Helly-type Property} 
if, for any three elements $x,y,z\in X$, we have
$\delta(x,y,z) \in \{\delta(x,y),\delta(x,z),\delta(y,z)\}$. Note that
we will sometimes also refer to the Helly-type property as Property
(P1).

\begin{lemma}\label{3tri}
Suppose $\delta$ is a symbolic 3-dissimilarity on a set $X=\{x,y,z\}$ 
taking values in $M$. Then there exists a 
level-1 representation 
$\mathcal N$ of $\delta$ if and only if 
 $\delta$ satisfies the Helly-type Property.
In that case $\mathcal N$ can be (uniquely) chosen
to be semi-discriminating and, (up to permutation  of the leaves 
of the underlying level-1 network $N$) 
$N$ is isomorphic to one 
of the trinets $\tau_1$, $\tau_2$ and $\tau_3$
depicted in Fig.~\ref{12t}.
\end{lemma}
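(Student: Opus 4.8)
The plan is to prove the two directions separately and then establish the additional structural claims. For the forward direction, suppose $\delta$ has a level-1 representation $\mathcal N=(N,t)$ on $X=\{x,y,z\}$. By Lemma~\ref{lcrefaunique}, each of $lca(x,y)$, $lca(x,z)$, $lca(y,z)$ and $lca(x,y,z)$ is a well-defined vertex of $N$, and moreover $lca(X)=v_X$ equals $v_{\{a,b\}}$ for some $\{a,b\}\in{X\choose 2}$. Since $\delta(X)=t(lca(X))=t(v_{\{a,b\}})=\delta(a,b)$, we immediately obtain $\delta(x,y,z)\in\{\delta(x,y),\delta(x,z),\delta(y,z)\}$, i.e.\ the Helly-type Property holds. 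This is the easy direction and needs only Lemma~\ref{lcrefaunique}.

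For the converse, suppose $\delta$ satisfies the Helly-type Property. I would argue by cases on the multiset $\{\delta(x,y),\delta(x,z),\delta(y,z)\}$, using the Helly condition to pin down $\delta(x,y,z)$ in each case, and in each case exhibit an explicit labelled trinet among $\tau_1,\tau_2,\tau_3$ (up to permuting leaves) realizing $\delta$. The cases are: (a) all three pair-values equal, say to $A$; then $\delta(x,y,z)=A$ as well, and the fork $\tau_1$ with every vertex of $V(\tau_1)_{int}^-$ labelled $A$ is a representation. (b) exactly two pair-values are equal, say $\delta(x,y)=\delta(x,z)=A\ne\delta(y,z)=B$; by the Helly property $\delta(x,y,z)\in\{A,B\}$, and one checks that each sub-case is realized by an appropriate labelled triplet $x|yz$ or tricycle $y||xz$ — concretely, place the ``odd'' pair $\{y,z\}$ appropriately and choose the labels of the two interior non-hybrid vertices so that the top one gets $\delta(x,y,z)$ and the lower one $B$. (c) all three pair-values distinct: this case cannot occur, since the Helly property forces $\delta(x,y,z)$ to equal one of them, but then that value would be repeated, contradicting distinctness — actually I should be careful here: the Helly property alone does not forbid three distinct values, so instead I must observe that with three distinct pair-values there is no labelled trinet on three leaves realizing them (only two ``levels'' of non-hybrid interior vertices are available in $\tau_1,\tau_2,\tau_3$, and one must carry $\delta(x,y,z)$), hence by the forward direction just proved no level-1 representation exists — but we assumed one does, so this case is vacuous. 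The cleanest route is: prove directly that the Helly property implies at most two distinct values appear among $\delta(x,y),\delta(x,z),\delta(y,z),\delta(x,y,z)$, which reduces (c) to a contradiction with the hypothesis, or alternatively simply enumerate and note (c) is incompatible with having a representation.

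The remaining claims — that $\mathcal N$ can be chosen semi-discriminating, and that such a choice makes $N$ unique up to leaf-permutation among $\tau_1,\tau_2,\tau_3$ — I would handle by refining the case analysis. For the existence of a semi-discriminating representation, take the explicit representations built above and, whenever a directed edge $(u,v)$ has $t(u)=t(v)$ and both endpoints lie in a common cycle-free part (equivalently, the exception clause in the definition of semi-discriminating does not apply), contract it; one verifies this does not change $lca$-values and so preserves $\delta_{\mathcal N}$, and terminates at a semi-discriminating network, which by inspection is still one of $\tau_1,\tau_2,\tau_3$. For uniqueness, I would show that in each of the (at most) three value-patterns the semi-discriminating representation is forced: the underlying topology is determined by which pair (if any) is the ``odd one out'' together with whether $\delta(x,y,z)$ agrees with the majority or the minority value (the latter distinguishing triplet from tricycle), and the labelling of $V(N)_{int}^-$ is then forced by the $lca$ identities plus the no-redundant-edge condition. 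The main obstacle is the bookkeeping in this last part: one must carefully match each admissible value-pattern of $\delta$ to exactly one labelled trinet and check no two distinct semi-discriminating trinets induce the same $\delta$, which is a finite but slightly delicate enumeration — the subtlety being the exception in the semi-discriminating definition for edges meeting a cycle in a single vertex, which is exactly what allows the tricycle $\tau_3$ to have repeated labels along such an edge without violating the condition.
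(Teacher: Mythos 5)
Your forward direction and your cases (a) and (b) of the converse coincide with the paper's argument, which handles the converse by the case analysis on $|E|$ recorded in Table~\ref{tabtri}. The genuine gap is your case (c), where the three pair-values $\delta(x,y),\delta(x,z),\delta(y,z)$ are pairwise distinct. First, this case is not vacuous under the hypothesis: the Helly-type Property is perfectly consistent with three distinct pair-values (take $\delta(x,y)=A$, $\delta(x,z)=B$, $\delta(y,z)=C$ pairwise distinct and $\delta(x,y,z)=C$). Second, your logic is inverted: in the converse direction you assume only the Helly-type Property and must \emph{construct} a representation, so you cannot discharge the case by appealing to ``but we assumed one does [exist]''. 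Third, and most importantly, your structural claim that no labelled trinet on three leaves realizes three distinct pair-values is false. The tricycle $x||yz$ has \emph{three} non-hybrid interior vertices, namely the root $r(C)$ of its cycle and the two side vertices $v_C(y)$ and $v_C(z)$; labelling them $C$, $A$, $B$ respectively gives $\delta(x,y)=t(v_C(y))=A$, $\delta(x,z)=t(v_C(z))=B$ and $\delta(y,z)=\delta(x,y,z)=t(r(C))=C$. This is exactly the second row of Table~\ref{tabtri}: when $|E|=3$, the Helly-type Property forces $\delta(x,y,z)$ to equal one of the three values, and after permuting leaf names so that it equals $\delta(y,z)$, the tricycle $x||yz$ realizes $\delta$. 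In other words, the all-distinct case is not an obstruction but the very signature of a tricycle.

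For the same reason, your proposed ``cleanest route'' --- showing that the Helly-type Property forces at most two distinct values among $\delta(x,y),\delta(x,z),\delta(y,z),\delta(x,y,z)$ --- cannot work, since the example above is a counterexample. To repair the proof you should replace case (c) by the tricycle construction just described; the remaining bookkeeping for semi-discrimination and uniqueness then goes through essentially as you sketch it (and as the paper leaves largely to inspection): the value-pattern of $\delta$ determines whether the underlying trinet is a fork ($|E|=1$), a triplet ($|E|=2$ with $\delta(x,y,z)$ equal to the repeated value), or a tricycle ($|E|=3$, or $|E|=2$ with $\delta(x,y,z)$ equal to the non-repeated value), and the labels of $V(N)_{int}^-$ are forced by the $lca$ identities.
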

\begin{proof}
Suppose first that $\mathcal N=(N,t)$ is a level-1 representation of
$\delta$. Then, in view of Lemma~\ref{lcrefaunique},
 $\delta(x,y,z) \in \{\delta(x,y),\delta(x,z),\delta(y,z)\}$ must hold.

Conversely, suppose that  
$\delta(x,y,z) \in E:= \{\delta(x,y),\delta(x,z),\delta(y,z)\}$ 
holds for all elements 
$x,y,z\in X$ distinct. By analyzing the size of $E$ it is
straight-forward to show that one of the situations indicated
in the rightmost column
of Table~\ref{tabtri} 
\begin{table}[h]
\begin{center}
\begin{tabular}{|c|c||c|}
\hline
$|\{\delta(x,y),\delta(x,z),\delta(y,z)\}|$ & $\delta(x,y,z)=...$ & 
$N$ \\
\hline
\hline
1 & $\delta(x,y)=\delta(x,z)=\delta(y,z)$ & fork \\
\hline
3 & $\delta(y,z)$ & $x || yz$ \\
\hline
2 & $\delta(y,z) \neq \delta(x,y)=\delta(x,z)$ &  $x || yz$ \\
\hline
2 & $\delta(x,y)=\delta(x,z)$ &$x|yz$ \\
\hline
\end{tabular}
\caption{\label{tabtri}For  
$\delta:{X\choose \leq 3}\to M^{ \odot}$ 
  a symbolic 3-dissimilarity we list all labelled trinets on
  $X=\{x,y,z\}$ in terms of the size of $E$.}
\end{center} 
\end{table}
must apply. With defining a labelling map $t:V(N)_{int}^-\to M^{\odot}$
in the obvious way using the second column of that table, it follows
that $\mathcal N$ is a level-1 representation for $\delta$. 
\end{proof}

Armed with Lemma~\ref{3tri}, we make the following central definition.
Suppose that $|Y|=3$, that $\delta$
is a symbolic 3-dissimilarity on $Y$, and that $\mathcal N=(N,t)$
is a semi-discriminating level-1 representation of $\delta$. Then we call
$\mathcal N$ a
{\em $\delta$-fork} if $N$ is a fork on $Y$,
a {\em $\delta$-triplet} if $N$ is a triplet on $Y$, and
a {\em $\delta$-tricycle} if $N$ is a tricycle on $Y$,
 For ease of
 terminology, we will collectively refer to
 all three of them as a {\em $\delta$-trinet}.
Note that as the example of the labelled trinet depicted in
Fig.~\ref{dispnu}(iii) shows, there exist trinets that are
not $\delta$-trinets. By abuse of terminology, we shall
refer for a symbolic 3-dissimilarity $\delta$ on $X$
and any 3-subset $Y\subseteq X$ to a
$\delta|_Y$-trinet as a $\delta$-trinet.

\section{Recognizing cycles: The algorithm \textsc{Find-Cycles}}
\label{sec:find-cycles}
In this section, we introduce and analyze algorithm {\sc Find-Cycles}
(see Algorithm~\ref{afc} for a pseudo-code version).
Its purpose
is to recognize cycles in a level-1 representation of a symbolic
3-dissimilarity $\delta$ if such a representation exists. As we shall
see, this algorithm relies on Property (P1) and a certain
graph $\mathcal C(\delta)$ that can be canonically associated to
$\delta$.
Along the way, we also establish two further crucial properties enjoyed 
by a level-1 representable symbolic 3-dissimilarity.

We start with
introducing further terminology. Suppose $N$ is a level-1
network and $C$ is a cycle of $N$. Then we denote  
by $r(C)$ the unique vertex in $C$ for
which both children are also contained in $C$ and 
call it the {\em root} of  $C$. In addition, we call
the hybrid vertex of $N$
contained in $C$ the {\em hybrid} of $C$
and denote it by  $h(C)$. Furthermore,
we denote set of all elements of $X$ below  by $r(C)$
by  $R(C)$ 
and the set of all elements of $X$ below $h(C)$
by  $H(C)$.
Clearly, $H(C) \subsetneq R(C)$.
 Moreover, for any leaf $x\in R(C) - H(C)$, 
 we denote by $v_{C}(x)$ the last ancestor of $x$ in $C$.
 Note that $v_{C}(x)$ is the parent of $x$ if and only
 if $x$ is incident with a vertex in $C$.
 Last-but-not-least, we 
 call the vertex sets of the two edge-disjoint directed paths
 from $r(C)$ to $h(C)$ the 
\emph{sides} of $C$. Denoting these two
paths by $P_1$ and $P_2$, respectively, we say that two leaves $x$ and $y$ in 
$R(C)-H(C)$ $\emph{lie on the same side}$ of
 $C$ if the vertices $v_{C}(x)$ and $v_{\mathcal C}(y)$
are both interior vertices of $P_1$ or $P_2$ , 
and that they $\emph{lie on different sides}$ if they are not.
For example, for $C$ the underlying cycle
of the cycle $\mathcal C_2$ indicated in the
labelled network $\mathcal N_1$ pictured in Fig.~\ref{exnet}(i),
we have $R(C)=\{b,\ldots,g\}$  and $H(C)=\{c\}$.
Furthermore, the sides of $C$ are $\{ r(C),v_{C}(b),h(C) \}$
and $\{r(C), v_{C}(d), v_C(e),  h(C)\}$ and $d,\ldots,g$ lie
on one side of $C$ whereas $b$ and $d$ lie on different sides of $C$.
%
%
%

Suggested by Property (U2), the following property is of interest
to us where $\delta$ denotes again a 
symbolic 3-dissimilarity on $X$: 
\begin{enumerate}
\item[{\em (P2)}] For all
$x,y,z,u \in X$ distinct for which
$\delta(x,y)=\delta(y,z)=\delta(z,u) \neq \delta(z,x)=\delta(x,u)=
\delta(u,y)$ holds there exists exactly one subset 
$Y \subseteq \{x,y,z,u\}$ of size $3$ such that a tricycle on $Y$
underlies a level-1 representation of
$\delta|_Y$.
\end{enumerate}

As a first result, we obtain
%


\begin{lemma}\label{gencond}
  Suppose $\delta$ is a level-1 representable
  symbolic 3-dissimilarity on $X$.
  Then $\delta$ satisfies the Helly-type Property as well as Property (P2).
\end{lemma}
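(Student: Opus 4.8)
The Helly-type Property (P1) is immediate from Lemma~\ref{lcrefaunique}: if $\mathcal N=(N,t)$ is a level-1 representation of $\delta$ and $x,y,z\in X$ are distinct, then $lca_N(\{x,y,z\})$ is, by the lemma, equal to $lca_N(\{a,b\})$ for some two-element subset $\{a,b\}\subseteq\{x,y,z\}$, and hence $\delta(x,y,z)=t(lca_N(\{x,y,z\}))=t(lca_N(\{a,b\}))=\delta(a,b)\in\{\delta(x,y),\delta(x,z),\delta(y,z)\}$. So the work is all in establishing (P2).

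\textbf{Setup for (P2).} Fix four distinct elements $x,y,z,u\in X$ satisfying the ``bad quartet'' condition $\delta(x,y)=\delta(y,z)=\delta(z,u)\neq\delta(z,x)=\delta(x,u)=\delta(u,y)$; write $\alpha$ for the first symbol and $\beta$ for the second. Let $\mathcal N=(N,t)$ be a level-1 representation of $\delta$, and consider the induced network $N|_{\{x,y,z,u\}}$, which is itself a level-1 network on four leaves and still represents $\delta|_{\{x,y,z,u\}}$ (with the induced labelling). The plan is to enumerate, up to symmetry, the possible level-1 networks on four leaves, compute the symbolic $2$-dissimilarity each one induces, and check which of them can produce the prescribed pattern on $\{x,y,z\choose 2\}$-type pairs; then read off how many of the four $3$-subsets carry a tricycle. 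Equivalently, and more cleanly, I would argue directly: for each $3$-subset $Y\subseteq\{x,y,z,u\}$, Lemma~\ref{3tri} tells us that the (unique semi-discriminating) level-1 representation of $\delta|_Y$ has underlying network a fork, a triplet, or a tricycle, and Table~\ref{tabtri} tells us which, purely in terms of the sizes of the relevant symbol sets and which value is repeated. So I would simply apply Table~\ref{tabtri} to each of the four triples $\{x,y,z\}$, $\{x,y,u\}$, $\{x,z,u\}$, $\{y,z,u\}$, using only the six pairwise values, which are all determined by the bad-quartet condition.

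\textbf{The counting.} Running through the four triples: on $\{x,y,z\}$ the pairwise values are $\delta(x,y)=\alpha$, $\delta(y,z)=\alpha$, $\delta(x,z)=\beta$, so $|E|=2$ with the repeated value $\alpha=\delta(x,y)=\delta(y,z)$ ``opposite'' the vertex $z$; on $\{x,y,u\}$ we get $\delta(x,y)=\alpha$, $\delta(x,u)=\beta$, $\delta(y,u)=\beta$, so $|E|=2$ with repeated value $\beta$ opposite $x$ — wait, opposite the vertex not incident to either $\beta$-edge, namely $\dots$ I would be careful here to track, for each triple, whether the configuration falls in the third row of Table~\ref{tabtri} (giving a tricycle $a\|bc$) or the fourth row (giving a triplet $a|bc$), which depends on whether $\delta(a,b,c)$ equals the repeated pairwise value or the non-repeated one. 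This is where Property (P1), already established, is used: it guarantees $\delta(a,b,c)\in E$ so that exactly one of the two $|E|=2$ rows applies. Carrying this out for all four triples should show that precisely one of them — the one whose three pairwise values are $\{\alpha,\alpha,\beta\}$ or $\{\beta,\beta,\alpha\}$ in the ``tricycle'' arrangement dictated by the value of $\delta$ on the triple — yields a tricycle, while the other three yield triplets or forks. The main obstacle, and the only genuinely delicate point, is the bookkeeping: one must verify that the value $\delta(a,b,c)$ on each triple is forced (by the representation $\mathcal N$, via $lca_N$) to be the \emph{particular} one of the two admissible values that makes the tricycle count come out to exactly one, rather than zero or two. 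I expect this to follow by observing that the four $3$-subsets partition into the ``$\{\alpha\alpha\beta\}$-shaped'' ones and the ``$\{\beta\beta\alpha\}$-shaped'' ones, and that the ultrametric-violating cyclic pattern forces exactly one apex among $x,y,z,u$ to sit at the hybrid of a cycle in $N|_{\{x,y,z,u\}}$; everything else is then determined. Once the count is pinned down, uniqueness of the tricycle-bearing triple is exactly the statement ``there exists exactly one subset $Y$'' in (P2), completing the proof.
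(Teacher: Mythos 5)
Your treatment of the Helly-type Property is correct and matches the paper's (both are one-line consequences of Lemma~\ref{lcrefaunique}). The problem is Property (P2), where your proof stops exactly at the point where the real work begins. You correctly observe that Table~\ref{tabtri} decides fork/triplet/tricycle for each of the four triples in $\{x,y,z,u\}$ only once the value $\delta(a,b,c)$ on each triple is known, and that the six pairwise values leave two admissible choices per triple (e.g.\ for $\{x,y,z\}$ one gets the triplet $y|xz$ if $\delta(x,y,z)=\alpha$ and the tricycle $y||xz$ if $\delta(x,y,z)=\beta$). That is $2^4$ a priori configurations, and nothing in your argument rules out the ones with zero tricycles or with two or more. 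The sentence ``I expect this to follow by observing that \dots the ultrametric-violating cyclic pattern forces exactly one apex among $x,y,z,u$ to sit at the hybrid of a cycle'' is precisely the statement to be proved, asserted rather than derived; it is not a consequence of the shape of the pairwise values, and it is not obviously the right formulation either, since whether a triple restricts to a tricycle depends on the positions of all three of its leaves relative to a cycle of $N$, not just on which leaf lies below a hybrid.

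For comparison, the paper supplies the two missing halves as follows. For existence it assumes no triple carries a tricycle, invokes Theorem~\ref{bdtr} to conclude from the (U2)-violation that $N$ has a cycle $C$ with (after relabelling) $x\in H(C)$ and $y$ on a side of $C$, and then uses $\delta(y,z)\neq\delta(x,z)$ to place $z$ either on the opposite side of $C$ or below $v_C(y)$ on the same side; in each case the remaining constraint $\delta(z,u)\neq\delta(x,u)=\delta(y,u)$ is shown to be unsatisfiable. For uniqueness it takes two tricycles $\tau,\tau'$ on triples inside the quartet, notes they share exactly two leaves, and derives a contradiction from the fact that their cycles are induced by one and the same cycle of $N$ while one shared leaf would have to be below the hybrid in one of them but on a side in the other. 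Your alternative route — enumerating all labelled level-1 networks on four leaves realizing the bad-quartet pattern — could in principle work, but as written neither the enumeration nor the forcing argument is carried out, so the proof of (P2) is incomplete.
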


\begin{proof}
Note first that Property (P1) is a straight-forward consequence of
Lemma~\ref{lcrefaunique}.
 
To see that Property (P2) holds, note first that since
$\delta$ is level-1 representable there
exists a labelled level-1 network
$(N,t)$ such that $\delta(Y)=t(lca(Y))$,
for all subsets $Y\subseteq X$ of size $2$ or $3$.
Suppose $x,y,z,u\in X$ distinct are such that 
$\delta(x,y)=\delta(y,z)=\delta(z,u) \neq \delta(z,x)=\delta(x,u)=
\delta(u,y)$. To see that there exists some $Y\subseteq Z:=\{x,y,z,u\}$ 
for which $(N|_Y,t|_Y)$ is a $\delta$-tricycle,
 assume for contradiction that there exists no 
such set $Y$.  
By Theorem~\ref{bdtr}, $N$ cannot be a phylogenetic tree on $X$ and, so,
$N$ must contain at least one cycle $C$. Without loss of generality,
we may assume that $x\in H(C)$, and $y$ lies on one of the two sides 
of $C$. By assumption $\delta(y,z) \neq \delta(x,z)$ 
and so either $z$ and $y$ lie on opposite sides of $C$,
or $z$ and $y$ lie on the same side of $C$ and $v_C(y)$ lies on the
directed path from $r(C)$ to $v_C(z)$. As can be easily checked,
either one of these two cases
yields a contradiction since then $\delta(z,u) \neq \delta(x,u)=\delta(y,u)$
cannot hold for $u$, as required.

To see that there can exist at most one such tricycle 
on $Z$, assume for contradiction that there exist tow tricycles  
$\tau$ and $\tau'$ with $L(\tau)\cup L(\tau')\subseteq Z$.
Then $ |L(\tau)\cap L(\tau')|=2$. Choose $x,y\in L(\tau)\cap L(\tau')$.
Note that the assumption on the elements of $Z$ implies that
$x$ or $y$ must be below the hybrid vertex of one of $\tau$
and $\tau'$ but not the other. Without loss of generality we may
assume that $y$ is
below the hybrid vertex of $\tau$ but not below the hybrid vertex of $\tau'$.
Then $y$ must lie on a side of the unique cycle $C'$ of $\tau'$.
But this is impossible since the unique cycle of $\tau$  and
$C'$ are induced by the same cycle of $N$.
\end{proof}


We remark in passing that the proof of uniqueness in the proof of 
Lemma~\ref{gencond} combined with 
 the structure of a level-1 network, readily implies
 the following result.

\begin{lemma}\label{lmtri}
Suppose that $\delta$ is 
a symbolic 3-dissimilarity on $X$ that is level-1 representable by a labelled
network $(N,t)$
and that $x,y,z\in X$ are three distinct elements such that $x || yz$ is a 
 $\delta$-tricycle. Let $C$ denote 
the unique cycle in $N$ such that 
$x \in H(C)$ and $y,z \in R(C)-H(C)$, and let 
$x'\in X$.  If  $x' || yz$ is a 
 $\delta$-tricycle then $x' \in H(C)$ and if $x || x'z$
is a $\delta$-tricycle then 
$x' \in R(C)$ and $x'$ and $y$ lie on the same side of $C$.
\end{lemma}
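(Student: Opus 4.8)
\textbf{Proof plan for Lemma~\ref{lmtri}.}

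The plan is to exploit the same structural analysis of cycles in a level-1 network that drives the proof of Lemma~\ref{gencond}, together with the fact that tricycles among subsets of $X$ are forced by the values of $\delta$ on $2$-subsets. First I would recall the key observation underlying Lemma~\ref{gencond}: whenever $u||vw$ is a $\delta$-tricycle, the pattern of $\delta$-values on $\{u,v,w\}$ is $\delta(v,w)\neq\delta(u,v)=\delta(u,w)$ (third row of Table~\ref{tabtri}), and this in turn forces, in any level-1 representation $(N,t)$, that $u$ sits strictly below the hybrid of some cycle $C$ of $N$ while $v$ and $w$ both lie on $R(C)-H(C)$ and on opposite sides of $C$. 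Concretely, $lca_N(\{v,w\})=r(C)$, $lca_N(\{u,v\})=lca_N(\{u,w\})=r(C)$ as well, and any leaf $a$ with $lca_N(\{v,a\})=lca_N(\{w,a\})$ strictly above $r(C)$ — i.e.\ $a\notin R(C)$ — would make $a||vw$ not a tricycle; so the cycle is essentially pinned down by the pair $\{v,w\}$.

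Now fix the cycle $C$ supplied by the statement: the unique cycle of $N$ with $x\in H(C)$ and $y,z\in R(C)-H(C)$; by the discussion above $y$ and $z$ lie on different sides of $C$. For the first assertion, suppose $x'||yz$ is a $\delta$-tricycle. Then in $(N,t)$ the leaf $x'$ must lie strictly below the hybrid of some cycle $C''$ with $y,z\in R(C'')-H(C'')$ on opposite sides. But a level-1 network has at most one cycle whose root is $lca_N(\{y,z\})$ with $y,z$ on opposite sides (this is where Lemma~\ref{lcrefaunique} and the level-1 condition — each interior vertex in at most one cycle — are used), so $C''=C$, whence $x'\in H(C)$. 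For the second assertion, suppose $x||x'z$ is a $\delta$-tricycle, so $\delta(x',z)\neq\delta(x,x')=\delta(x,z)$. Since $x\in H(C)$ and $z\in R(C)-H(C)$, the value $\delta(x,z)$ equals $t(r(C))$; the inequality $\delta(x',z)\neq t(r(C))=\delta(x,x')$ and $x\in H(C)$ force $x'$ to lie in $R(C)$ (if $x'$ were outside $R(C)$, then $lca_N(\{x,x'\})=lca_N(\{x',z\})$ would be an ancestor of $r(C)$, contradicting $\delta(x,x')\neq\delta(x',z)$). Moreover $x||x'z$ being a tricycle with $x$ below the hybrid means $x'$ and $z$ lie on opposite sides of the cycle realizing it — which by uniqueness is again $C$ — and since $z$ is on the opposite side from $y$, the leaf $x'$ lies on the same side as $y$.

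The main obstacle is the bookkeeping in the case analysis showing that the value $\delta(x',z)$ (respectively $\delta(y,x')$) really does pin $x'$ down to $R(C)$ and to the correct side: one must carefully locate $v_C(x')$ relative to $v_C(y)$ and $v_C(z)$ on the two directed paths from $r(C)$ to $h(C)$, arguing that any other placement would change one of the three pairwise $\delta$-values and so destroy the tricycle pattern on $\{x,x',z\}$. This is routine given the structure of level-1 networks — it is precisely the "as can be easily checked" step in the proof of Lemma~\ref{gencond} — but it needs to be run through for each of the two directions. Once the location of $x'$ is established, both conclusions follow immediately, and the uniqueness of the ambient cycle (again from the level-1 hypothesis and Lemma~\ref{lcrefaunique}) is what glues everything together, exactly as remarked just before the statement.
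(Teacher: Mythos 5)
Your plan matches the route the paper itself takes: the paper omits a proof and simply notes that Lemma~\ref{lmtri} follows from the uniqueness argument in the proof of Lemma~\ref{gencond} together with the structure of level-1 networks, and your argument (a tricycle $u||vw$ forces $u$ below the hybrid and $v,w$ on opposite sides of the unique cycle rooted at $lca_N(\{v,w\})$; the level-1 condition then identifies the cycles realizing $x'||yz$ and $x||x'z$ with $C$) is exactly that. One slip worth correcting: $lca_N(\{u,v\})$ and $lca_N(\{u,w\})$ equal $v_C(v)$ and $v_C(w)$, not $r(C)$ — if they equalled $r(C)$ then $\delta(u,v)=\delta(v,w)$ would follow, contradicting the tricycle pattern — but this claim plays no role in the rest of your argument, which stands.
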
 

To better understand the structure of a symbolic 3-dissimilarity
$\delta$, we 
next associate to $\delta$ a graph $\mathcal C(\delta)$ defined as
follows. The vertices
of $\mathcal C(\delta)$ are the
$\delta$-tricycles and any two $\delta$-tricycles $\tau$ and $\tau'$
are joined by an edge if $|L(\tau)\cap L(\tau')|=2$. For
example, consider 
the symbolic 3-dissimilarity $\delta_{\mathcal N_1}$ 
induced by the labelled level-1 network $\mathcal N_1$
pictured in Fig.~\ref{exnet}(i). Then  the graph presented
in Fig.~\ref{cdelta} is $\mathcal C(\delta_{\mathcal N_1})$. 
\begin{figure}[h]
\begin{center}
\includegraphics[scale=0.6]{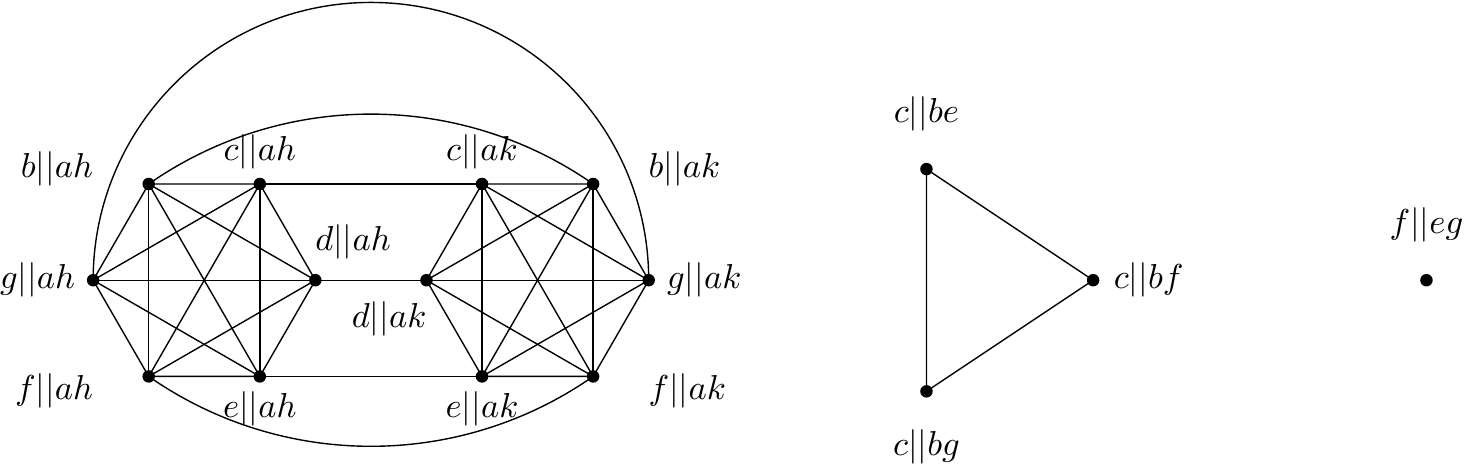}
\caption{\label{cdelta}The graph $\mathcal C(\delta_{\mathcal N_1})$, 
  where $\mathcal N_1$ is the labelled level-1 network depicted in
  Fig.~\ref{exnet}(i).}
\end{center}
\end{figure}

The example in Fig.~\ref{cdelta} suggests the following property
for a symbolic 3-dissimilarity $\delta$
to be level-1 representable:
\begin{enumerate}
\item[{\em (P3)}] If $\tau$ and $\tau'$ are $\delta$-tricycles 
contained in the same 
connected component of $\mathcal C(\delta)$, then 
$$\delta(L(\tau))=\delta(L(\tau')).$$
\end{enumerate}

We collect first results concerning Property (P3) in the next proposition.

\begin{proposition}\label{cocy}
Suppose  $\delta:{X\choose \leq 3} \to M^{\odot}$
is a symbolic 3-dissimilarity. If $\delta$ is 
level-1 representable or  $|M|=2$ holds
then Property (P3) must hold.
In particular, if $\mathcal N$ is a level-1 
representation for $\delta$ then there exists a canonical
injective map from the set of connected 
components of $\mathcal C(\delta)$ to the set of cycles of
the level-1 network underlying $\mathcal N$.
\end{proposition}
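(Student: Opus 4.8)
The plan is to argue the two sufficient conditions for Property~(P3) separately, and then deduce the ``in particular'' claim from the level-1 representable case. Throughout, let $\tau$ and $\tau'$ be $\delta$-tricycles lying in the same connected component of $\mathcal C(\delta)$; by passing along a path in that component it suffices to treat the case where $\tau$ and $\tau'$ are adjacent, i.e. $|L(\tau)\cap L(\tau')|=2$, and then chain the resulting equalities $\delta(L(\tau))=\delta(L(\tau'))$ along the path. Say $\{y,z\}\subseteq L(\tau)\cap L(\tau')$, with $\tau$ a tricycle $x||yz$ (so $x\in H$ in the relevant cycle) and $\tau'$ a tricycle $x'||\{?,?\}$ on a leaf set sharing $\{y,z\}$ with $L(\tau)$.

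First I would handle the case $|M|=2$. Here the point is purely combinatorial: a tricycle $x||yz$ has, by Lemma~\ref{3tri} and the third row of Table~\ref{tabtri}, the property that among $\delta(x,y),\delta(x,z),\delta(y,z)$ either all three values are distinct (impossible when $|M|=2$) or $\delta(y,z)\neq\delta(x,y)=\delta(x,z)$, and in both sub-cases $\delta(L(\tau))=\delta(x,y,z)=\delta(y,z)$. So when $|M|=2$ every $\delta$-tricycle $x||yz$ satisfies $\delta(L(\tau))=\delta(y,z)$, the value ``opposite'' to the hybrid leaf. Now for adjacent tricycles $\tau$ and $\tau'$ sharing two leaves, I would use Lemma~\ref{lmtri}-type bookkeeping: if they share the pair $\{y,z\}$ as the two non-hybrid leaves of both, then both induced-dissimilarity values equal $\delta(y,z)$ and we are done; if instead the shared pair involves the hybrid leaf of one of them, a short case analysis on which two leaves are shared, together with the defining equalities forced by the tricycle structure and the constraint $|M|=2$, forces the two values $\delta(L(\tau))$ and $\delta(L(\tau'))$ to coincide. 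This is elementary but I would write the cases out carefully.

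Next I would handle the level-1 representable case, which is the more conceptual one and the main obstacle. Fix a level-1 representation $\mathcal N=(N,t)$ of $\delta$. By Lemma~\ref{lmtri}, each $\delta$-tricycle $x||yz$ ``lives on'' a unique cycle $C$ of $N$, namely the one with $x\in H(C)$ and $y,z\in R(C)-H(C)$; write $C(\tau)$ for this cycle. The key claim is that adjacent $\delta$-tricycles $\tau,\tau'$ in $\mathcal C(\delta)$ have $C(\tau)=C(\tau')$: since they share two leaves, one checks via Lemma~\ref{lmtri} (applied in both directions, depending on whether the shared pair lies in $H$, straddles $H$ and a side, or lies on the sides) that the two cycles must have a hybrid leaf or a side leaf in common in a way that forces them to be induced by the same cycle of $N$ --- exactly the kind of argument already used at the end of the proof of Lemma~\ref{gencond}. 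Granting this, every tricycle in a connected component of $\mathcal C(\delta)$ lives on one and the same cycle $C$ of $N$. It then remains to see that $\delta(L(\tau))$ depends only on $C$: for a tricycle $x||yz$ on $C$ we have $\delta(L(\tau))=t(lca_N(x,y,z))$, and since $x\in H(C)$ and $y,z\in R(C)-H(C)$, the vertex $lca_N(x,y,z)$ is $r(C)$ (indeed $x,y,z$ are not all below any single child of $r(C)$: $x$ is below the hybrid and $y,z$ are below the side-children, so no child of $r(C)$ has all three in its offspring set, while $r(C)$ does). Hence $\delta(L(\tau))=t(r(C))$ for every $\delta$-tricycle on $C$, giving Property~(P3).

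Finally, the ``in particular'' statement falls out of the previous paragraph: the assignment sending a connected component $K$ of $\mathcal C(\delta)$ to the common cycle $C(\tau)$ of $N$ for any $\tau\in K$ (well-defined by the ``$C(\tau)=C(\tau')$ for adjacent tricycles'' claim, extended along paths in $K$) is a map from components to cycles of the underlying level-1 network of $\mathcal N$. Its injectivity is what I would check last: if two components $K\neq K'$ mapped to the same cycle $C$, pick tricycles $\tau\in K$, $\tau'\in K'$ both living on $C$; I would argue that two $\delta$-tricycles on the same cycle $C$ are necessarily connected in $\mathcal C(\delta)$ by exhibiting, for any two such tricycles, a sequence of $\delta$-tricycles on $C$ each adjacent to the next --- using that any tricycle on $C$ uses one leaf from $H(C)$ and two from $R(C)-H(C)$, and that one can ``slide'' the triple within the leaves of $C$ one element at a time (each slide changing only one of the three leaves, hence keeping $|L\cap L'|=2$), while Lemma~\ref{lmtri} guarantees that each intermediate triple is again a $\delta$-tricycle on $C$. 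That $\tau$ and $\tau'$ would then lie in the same component contradicts $K\neq K'$, so the map is injective. The main obstacle, as flagged, is the connectivity/sliding argument together with the careful case analysis in Lemma~\ref{lmtri}'s bookkeeping; everything else is direct.
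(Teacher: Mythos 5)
Your overall strategy matches the paper's: reduce Property (P3) to adjacent tricycles, handle the level-1-representable branch by showing every $\delta$-tricycle in a fixed component of $\mathcal C(\delta)$ lives on one cycle $C$ and evaluates to $t(r(C))$, handle $|M|=2$ by a case analysis on the shared leaves using Table~\ref{tabtri}, and extract the injection from Lemma~\ref{lmtri}. The level-1 branch and the injectivity argument are sound, and your ``sliding'' chain is the right way to make the paper's one-line injectivity claim precise --- although attributing the fact that each intermediate triple is again a $\delta$-tricycle to Lemma~\ref{lmtri} is a misattribution: that lemma only constrains where further tricycles \emph{may} live. What you actually need is the direct computation that for $x\in H(C)$ and $u,v$ on different sides of $C$ with $t(v_C(u))\neq t(r(C))\neq t(v_C(v))$ the triple $x||uv$ is a $\delta$-tricycle (essentially Lemma~\ref{ptric}); this is easy and repairs the step.

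The genuine gap sits in the $|M|=2$ branch, in exactly the case analysis you defer with ``I would write the cases out carefully.'' When the shared pair contains a leaf that is the hybrid of one tricycle but not of the other --- say $\tau=x||yz$ and $\tau'=y||zu$, sharing $\{y,z\}$ --- one has $\delta(L(\tau))=\delta(y,z)$ and $\delta(L(\tau'))=\delta(z,u)$, while the tricycle condition for $\tau'$ forces $\delta(z,u)\neq\delta(y,z)$; so with $|M|=2$ the two labels are necessarily \emph{distinct}, and the desired equality cannot be derived. Concretely, $M=\{A,B\}$ with $\delta(x,y)=\delta(x,z)=\delta(z,u)=\delta(x,u)=B$, $\delta(y,z)=\delta(y,u)=A$, $\delta(x,y,z)=A$ and $\delta(y,z,u)=\delta(x,y,u)=\delta(x,z,u)=B$ satisfies (P1) and $|M|=2$, yet $x||yz$ and $y||zu$ are adjacent $\delta$-tricycles with different values of $\delta(L(\cdot))$. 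You should be aware that the paper's own proof has the identical omission: it asserts that an adjacent $\tau'$ must have the form $u||yz$ or $x||ru$ with $r\in\{y,z\}$, which is not an exhaustive list of the ways two tricycles can share two leaves. So your sketch reproduces the paper's argument together with its flaw; the deferred case analysis cannot in fact be closed as stated.
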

\begin{proof}
  Suppose first  that $\delta$ is level-1 representable. Let
  $\mathcal N=(N,t)$
denote a level-1 representation of $\delta$.
Then $\delta=\delta_{\mathcal N}$. Since
$\delta_{\mathcal N}(x,y,z)=t(r(C))$ holds
for all cycles $C$ of $N$, 
and any $x \in H(C)$ and any $y,z\in R(C)$ that lie
on different sides of $C$, 
Property (P3) follows.

Suppose next that $|M|=2$. 
It suffices to show that Property (P3) holds for any two
adjacent vertices of $\mathcal C(\delta)$.
Suppose $\tau$ and $\tau'$ are two such vertices
and that $x,y,z\in X$ are such that
$\tau= x||yz$. Then there exists some $u\in X$ such that
either $\tau'=u||yz$ or $\tau'=x||ru$ where $r\in\{y,z\}$.
Without loss of generality we may assume that $r=y$.
In view of the Table~\ref{tabtri}, we clearly have
$\delta(x,y)\not=\delta(x,y,z)=\delta(y,z)$. Since, in addition, 
$\delta(u,y,z)=\delta(y,z)$ holds in the former case it follows that  
$\delta(L(\tau))=\delta(L(\tau'))$. In the latter case,
we obtain $\delta(x,y,u) \neq \delta(x,y)$ and thus,
$\delta(L(\tau))=\delta(L(\tau'))$ follows in this case too as
$|M|=2$.

The claimed injective map
is a straight-forward consequence of Lemma~\ref{lmtri}.
\end{proof}

Algorithm \textsc{Find-Cycles} exploits the
injection mentioned in Proposition~\ref{cocy} by
interpreting for a symbolic 3-dissimilarity $\delta$ 
a connected component $C$ of $\mathcal C(\delta)$ in terms of 
two sets $H_C$ and $R_C'$.  Note that if
$C'$ is a cycle in the level-1 network underlying a level-1 representation
of $\delta$ (if such a representation exists!),
the sets $H(C')$ and $H_C$ coincide and $R_C' \subseteq R(C')$
holds. 

\begin{algorithm}
\label{afc}
\caption{\textsc{Find-Cycles} -- Property (P1) is checked in Line 1.}
\SetKw{KwSet}{set}
\KwIn{A symbolic 3-dissimilarity $\delta$ on $X$.}
\KwOut{A number $m \geq 1$ and pairs of subsets $(H_i, R'_i)$ of $X$,
  $1\leq i\leq m$,
  or the statement ``$\delta$ is not level-1 representable''.}
\BlankLine
\If{$\delta$ satisfies Property \emph{(P1)}}{
Build the graph $\mathcal C(\delta)$\;
Denote by $m$ the number of connected components of $\mathcal C(\delta)$\;
\For{$i \in \{1, \ldots, m\}$}{
Let $K_i$ denote a connected component of $\mathcal C(\delta)$\;
\KwSet {$H_i=\{x \in X : \text{ there exist } y, z \in X  
\text{ such that }x|| yz \text{ is a vertex of } K_i\}$}\;
\KwSet{$R'_i=H_i \cup \{y \in X : \text{ there exist } x, z \in X
  \text{ such that } x|| yz \text{ is a vertex of  } K_i\}$}\;
}
\Return{$m, (H_1,R'_1), \ldots, (H_m,R'_m)$}\;
}
\Else{
\Return{$\delta$ is not level-1 representable}\;
}
\end{algorithm}

For example, for the symbolic 3-dissimilarity $\delta_{\mathcal N_1}$ 
induced by the labelled network $\mathcal N_1$ depicted in
Fig.~\ref{exnet}(i), algorithm \textsc{Find-Cycles} returns
the three pairs $(b\ldots g,a\ldots, k)$, $(c,bcefg)$ 
and $(f,efg)$ where we write $x_1\ldots x_{|A|}$
for a set $A=\{x_1,\ldots, x_{|A|}\}$.

\section{Constructing cycles: The algorithm \textsc{Build-Cycles}
\label{sec:build-cycles}}

We next turn our attention toward reconstructing a structurally
very simple level-1 representation of a symbolic 3-dissimilarity
(should such a representation exist).
For this, we use algorithm \textsc{Build-Cycles} which
takes as input a symbolic 3-dissimilarity $\delta$
and a pair returned by \textsc{Find-Cycles} when given $\delta$.

To state \textsc{Build-Cycles}, we require further terminology.
Suppose $N$ is a level-1 network. Then 
we say that $N$ is {\em partially resolved}
 if all vertices in a cycle of
 $N$ have degree three. Note that partially-resolved level-1 networks may have
 interior vertices not contained in a cycle that have degree three or more.
 Thus such networks need not be binary.
If, in addition to being partially resolved, $N$ is such that it
 contains a unique cycle $C$
such that every non-leaf vertex of $N$ is a vertex of 
$C$ then we call $N$ {\em simple}.

Algorithm \textsc{Build-Cycle} (see Algorithm~\ref{abc}
for a pseudo-code version) relies on a further graph
called the TopDown graph
associated to a symbolic 3-dissimilarity $\delta$.
For $(H,R')$ a pair returned by algorithm \textsc{Find-Cycle}
when given $\delta$
and $x\in H$ and $S\subseteq R'$,  that graph
essentially orders the vertices of $S$. Thus, for each connected
component $K$ of $\mathcal C(\delta)$,
\textsc{Build-Cycle} computes a level-1 representation of $\delta$
corresponding to
$K$ (should  such a
representation exist).

We start with presenting a central observation concerning labelled
level-1 networks.

\begin{lemma}\label{tripcyc}
Suppose $\mathcal N=(N,t)$ is  a labelled level-1 network,
 and $C$ is a cycle of $N$. Suppose also that 
$x,y,z \in X$ are three elements such that
 $x\in H(C)$, $y,z \in R(C)-H(C)$ 
and $t(v_{C}(z))=t(r(C)) \neq t(v_{C}(y))$.
 Then, $v_{C}(z)$ lies on the directed path from $v_{C}(y)$ 
to $h(C)$ if and only if $y|xz$ is a $\delta_{\mathcal N}$-triplet. 
\end{lemma}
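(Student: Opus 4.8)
\textbf{Proof plan for Lemma~\ref{tripcyc}.}

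The plan is to work directly with the level-1 network $N$ and the three leaves $x,y,z$, and to compute the relevant lowest common ancestors using the cycle structure. Since $x\in H(C)$ and $y,z\in R(C)-H(C)$, each of $v_C(y)$ and $v_C(z)$ is a tree vertex on one of the two sides of $C$, and the hybrid $h(C)$ is an ancestor of $x$. First I would show that $lca_{\mathcal N}(x,y,z)=r(C)$: indeed $x$ sits below $h(C)$, which is below $r(C)$, and both $y$ and $z$ are below $r(C)$; moreover no child of $r(C)$ has all of $x,y,z$ below it, because the two children of $r(C)$ lie on the two sides of $C$ and at least one of $v_C(y)$, $v_C(z)$ together with $h(C)$ is not below a single child of $r(C)$ — here I would need the hypothesis $t(v_C(z))=t(r(C))\neq t(v_C(y))$ only to keep track of symbols, but the lca computation itself follows from the side structure of $C$. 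Consequently $\delta_{\mathcal N}(x,y,z)=t(r(C))$.

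Next I would analyse $\delta_{\mathcal N}(x,z)$, $\delta_{\mathcal N}(y,z)$ and $\delta_{\mathcal N}(x,y)$. The key case distinction is the relative position of $v_C(y)$ and $v_C(z)$ on $C$. Suppose first that $v_C(z)$ lies on the directed path from $v_C(y)$ to $h(C)$. If $v_C(y)$ and $v_C(z)$ are on the same side of $C$, then $lca(x,z)$ is the vertex of $C$ reached by descending from $v_C(z)$ toward $h(C)$ — concretely, $lca(x,z)=v_C(z)$ if $z$ is the only thing forcing that vertex, and in any case $lca(x,z)$ lies strictly below $r(C)$ on that side (or equals $h(C)$ or below), whereas $lca(y,z)$ may be $r(C)$ or not depending on sides; I would use $t(v_C(z))=t(r(C))$ to get $\delta(x,z)=t(r(C))$. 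Meanwhile $lca(x,y)=r(C)$ when $v_C(y)$ is strictly above $v_C(z)$ and they are on the same side — no, in that subcase $lca(x,y)$ is also on that side below $r(C)$; the clean statement is that $lca(x,y)$ equals whichever of the two ``lower'' branch points dominates. The point I want to extract is: under the hypothesis, $\delta(x,z)=\delta(x,y,z)=t(r(C))$ while $\delta(x,y)=t(v_C(y))\neq t(r(C))$, so by Table~\ref{tabtri} (the last row) the $\delta|_{\{x,y,z\}}$-trinet is the triplet $y|xz$. For the converse direction, if $v_C(z)$ does \emph{not} lie on the path from $v_C(y)$ to $h(C)$ — i.e.\ either they are on different sides, or they are on the same side with $v_C(y)$ below $v_C(z)$ — I would show the resulting $\delta$-trinet is a tricycle or a different triplet, hence not $y|xz$; in the different-sides case $lca(x,y)=lca(x,z)=r(C)$ and $lca(y,z)=r(C)$ too, giving a fork or, if symbols differ, a configuration matching a tricycle row of Table~\ref{tabtri}, and in the same-side case with $v_C(y)$ below $v_C(z)$ we get $\delta(x,y)=t(v_C(y))$ possibly equal to... here I would again invoke $t(v_C(z))=t(r(C))\neq t(v_C(y))$ to pin down which row applies and conclude the trinet is $z|xy$ or a tricycle, not $y|xz$.

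Throughout I would lean on Lemma~\ref{lcrefaunique} (well-definedness of lca in level-1 networks) and on Lemma~\ref{3tri} together with Table~\ref{tabtri} to translate the computed triple of symbols $(\delta(x,y),\delta(x,z),\delta(y,z),\delta(x,y,z))$ into the shape of the $\delta$-trinet. The main obstacle I anticipate is the bookkeeping in the case where $v_C(y)$ and $v_C(z)$ lie on the \emph{same} side of $C$: there one must carefully distinguish ``$v_C(z)$ between $v_C(y)$ and $h(C)$'' from ``$v_C(y)$ between $v_C(z)$ and $h(C)$'' and verify that the lowest common ancestors $lca(x,y)$, $lca(x,z)$ land on the correct vertices of $C$, so that the hypothesis $t(v_C(z))=t(r(C))\neq t(v_C(y))$ forces exactly the triplet $y|xz$ in one subcase and rules it out in the other. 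The different-sides case is comparatively easy because then $v_C(y)$ and $v_C(z)$ straddle $r(C)$, all three pairwise lca's collapse to $r(C)$ or to vertices governed by the side structure, and the trinet is immediately seen not to be $y|xz$.
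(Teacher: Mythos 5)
Your overall strategy (compute the four values $\delta(x,y)$, $\delta(x,z)$, $\delta(y,z)$, $\delta(x,y,z)$ from the position of $v_C(y)$ and $v_C(z)$ on $C$, then read off the trinet type from Table~\ref{tabtri}) is the same as the paper's, but the central lca computation on which your forward direction rests is wrong, and the error is fatal rather than cosmetic. You claim $lca(x,y,z)=r(C)$ because ``no child of $r(C)$ has all of $x,y,z$ below it.'' In the forward direction, however, $v_C(z)$ lies on the directed path from $v_C(y)$ to $h(C)$, so $y$ and $z$ are attached on the \emph{same} side of $C$; the child of $r(C)$ on that side then has all of $x$, $y$, $z$ in its offspring set (it reaches $x$ via $h(C)$), so $lca(x,y,z)$ is strictly below $r(C)$. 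The correct values are $lca(x,y)=lca(y,z)=lca(x,y,z)=v_C(y)$ and $lca(x,z)=v_C(z)$, giving $\delta(x,y)=\delta(y,z)=\delta(x,y,z)=t(v_C(y))\neq t(v_C(z))=\delta(x,z)$, which is precisely the pattern of the triplet $y|xz$ in Table~\ref{tabtri}. Your values $\delta(x,y,z)=\delta(x,z)=t(r(C))\neq\delta(x,y)$ would instead mean that $\delta(x,y,z)$ equals the \emph{singleton} of the multiset rather than the repeated element, which by Table~\ref{tabtri} is the signature of the tricycle $y||xz$, not the triplet $y|xz$ — so your own computation, taken at face value, contradicts the conclusion you assert.

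The converse half has the same problem in milder form: in the different-sides case you write $lca(x,y)=lca(x,z)=r(C)$ and predict a fork or tricycle, but $lca(x,y)=v_C(y)$ always holds for $x\in H(C)$ and $y\in R(C)-H(C)$, and with the hypothesis $t(v_C(y))\neq t(r(C))$ the different-sides configuration actually yields the triplet $z|xy$. The conclusion ``not $y|xz$'' survives, but only by accident. The paper's converse is cleaner and worth internalising: from $y|xz$ being a $\delta_{\mathcal N}$-triplet one gets $\delta(y,z)=\delta(x,y)=t(v_C(y))\neq t(r(C))$, which immediately forces $y$ and $z$ onto the same side (different sides would give $\delta(y,z)=t(r(C))$), and then the subcase ``$v_C(y)$ below $v_C(z)$'' is excluded because it would give $lca(y,z)=v_C(z)$ and hence $\delta(y,z)=\delta(x,z)$. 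To repair your proof you must redo the lca bookkeeping in the same-side case — which is exactly the part you flagged as the anticipated obstacle and then did not carry out.
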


\begin{proof}
  Put $\delta=\delta_{\mathcal N}$.
  Suppose first that $v_{C}(z)$ lies on the directed path from
$v_{C}(y)$ to $h(C)$. Then
 $lca(x,y,z)=lca(x,y)=lca(y,z)=v_{C}(y)$ and
 $lca(x,z)=v_{C}(z)$.
Hence, $\delta(x,y,z)=\delta(x,y)=
\delta(y,z)=t(v_{C}(y))\not=t(v_{C}(z))=
\delta(x,z)$. By  Table~\ref{tabtri}, 
$y|xz$ is a $\delta$-triplet.

Conversely,
suppose that $y|xz$ is a $\delta$-triplet.
Then, by Table~\ref{tabtri}, we have
$\delta(x,y,z)=\delta(x,y)=
\delta(y,z) \neq \delta(x,z)$. 
Since $\delta(x,y)=t(v_{C}(y))$ 
and $\delta(x,z)=t(v_C(z))$, it follows that
$\delta(x,y,z)=t(v_{C}(y)) 
\neq t(v_{C}(z))$. But then $y$ and $z$ must lie on the same
side of  $C$ as otherwise 
$\delta(y,z)=t(r(C))$ follows  
which is impossible by assumption on $x$, $y$ and $z$.
Thus, either $v_{C}(y)$ must lie on a directed path $P$ from
$v_{C}(z)$ to $h(C)$ or $v_{C}(z)$ must lie on a directed path $P'$ from
$v_{C}(y)$ to $h(C)$.
However $v_{C}(y)$ cannot be a vertex on $P$
as otherwise $lca(y,z)=v_{C}(z)$ holds and, so,
$\delta(y,z)=\delta(x,z)$ follows, which is 
impossible. Thus  $v_{C}(z)$ must be a vertex on $P'$.
\end{proof}

With $\mathcal N$ and $C$ as in from Lemma~\ref{tripcyc},
it follows from  Lemma~\ref{lmtri}, that whenever
algorithm \textsc{Find-Cycles}
is given $\delta_{\mathcal N}$ as input, it
returns a pair $(H,R')$ such that $H=H(C)$ and
$R'=H(C) \cup \{y \in R(C) : t(v_C(y)) \neq t(r(C))\}$. Moreover
giving $(H,R')$ and $\delta_{\mathcal N}$ as input to algorithm 
\textsc{Build-Cycle},
Lemma~\ref{tripcyc} implies that \textsc{Build-Cycle} finds all elements 
$z\in R(C)-R'$ for which there exists some 
$y \in R'$ such that $v_{C}(z)$ lies on the path 
from $v_{C}(y)$ to $h(C)$. However it should be noted that
if $z\in R(C)-H(C)$ is such that $t(v)=t(r(C))=t(v_{C}(z))$
holds for all vertices $v$ on the path from $r(C)$ to $v_{C}(z)$
then the information captured by $\delta_{\mathcal N}$ for
$x$, $y$, and $z$ is in general not sufficient 
to decide if $z$ and $y$ 
lie on the same side of $C$ or not. In fact, it is easy to see that,
in general, $z\in R(C)$ need not even hold.

We now turn out attention to the aforementioned TopDown graph associated to
a symbolic 3-dissimilarity  $\delta$ on $X$ which is defined as follows.
Suppose that 
$S \subsetneq X$, and that $x \in X-S$. Then the vertex set of
the {\em TopDown graph $TD(S,x)$} is $S$ and two 
elements $u,v\in S$ distinct are joined by a 
direct edge $(u,v)$ if $u|vx$ is a $\delta$-triplet.
%
  \begin{figure}[h]
\begin{center}
  \includegraphics[scale=0.6]{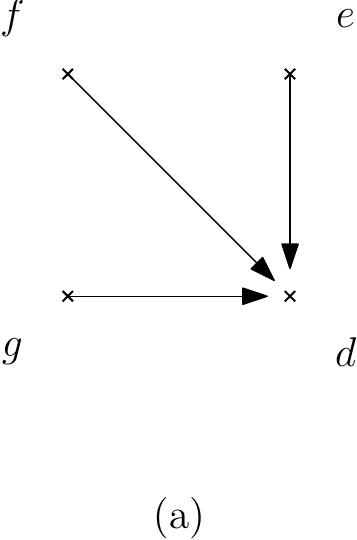},\,\,\,\,\,\,\,\,\,\,\,\,\,\,\,\, 
   \includegraphics[scale=0.6]{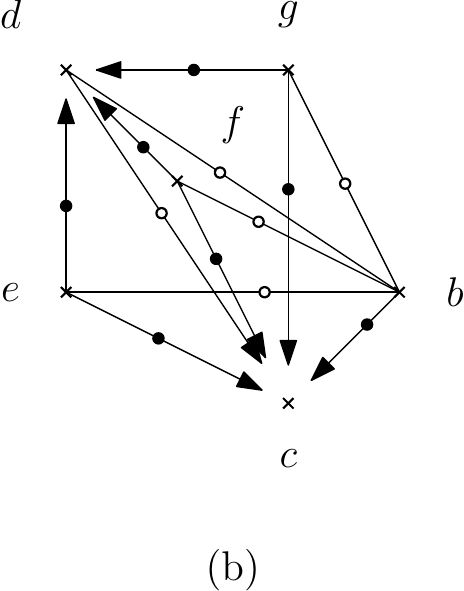}
   \caption{\label{fig:td-graph}
   For $\delta_{\mathcal N_1}$ the symbolic $3$-dissimilarity induced by the
   labelled network $\mathcal N_1$ pictured in Fig.~\ref{exnet}(i),
   we depict in (a) the graph $TD(\{d,e,f,g\}, c)$ and in (b)
   the graph $CL(\{c\}, \{b\}, \{d,e,f,g\})$. In both graphs, the vertices
   are indicated by ``$\times$''. -- See text for details.}
     \end{center}
\end{figure}


\begin{algorithm}
\label{abc}
\caption{\textsc{Build-Cycle} -- The set $R'$ is the set $H\cup S_y\cup S_z$,
  Property (P4) is checked in Lines 6, 10, and 20, 
and  Properties (P3), (P6), (P7) and (P8) are checked in 
  Lines 6, 13, 10 and 20, respectively.-- See text for details.}
\SetKw{KwSet}{set}
\KwIn{A symbolic 3-dissimilarity $\delta$ on $X$ that satisfies
  Property (P1) and a pair $(H,R')$ returned by algorithm
  \textsc{Find-Cycle} when given $\delta$.}
\KwOut{Either a labelled simple level-1 network $(C, t)$ on a
  partition of a subset $X'$ of $X$ such that $R'\subseteq X'$ and
  $H(K)=H$ holds for the unique cycle $K$ of $C$, or
  the statement ``$\delta$ is not level-1 representable''.}
\BlankLine
\KwSet {rep=0}\;
Choose a $\delta$-tricycle $x || yz$, where $x \in H$ and $y, z \in R'-H$\;
\KwSet {$S'_y=\{u \in R' : x || uz \text{ is a }
  \delta \text{-tricycle}\}$}\;
\KwSet {$S'_z=\{u \in R' : x || yu \text{ is }
  \delta \text{-tricycle}\}$}\;
Initialize $C$ as a graph with three vertices respectively labelled
by $r(C)$, $h(C)$ and $H$, and the edge $(h(C),H)$\;
\If {for all $x' \in H$, $y' \in S'_y$ and $z' \in S'_z$, $x'||y'z'$ is
  a $\delta$-tricycle and $\delta(x,y,z)=\delta(x',y',z')$}{
\KwSet {$t(r(C))=\delta(x,y,z)$}\;
\KwSet {$S_y=S'_y \cup \{u \in X-R' : \text{ there exists }
  u' \in S'_y \text{ such that } u'|ux \text{ is a } \delta
  \text{-triplet}\}$}\;
\KwSet {$S_z=S'_z \cup \{u \in X-R' : \text{ there exists }
  u' \in S'_z \text{ such that } u'|ux \text{ is a } \delta
  \text{-triplet}\}$}\;
\If {for all $u_1 \in S_y, u_2 \in S_z, \delta(u_1,u_2)=t(r(C))$}{
\For{$i \in \{y,z\}$}{
\KwSet {$v_l=r(C)$}\;
\If {$TD(S_i,x')=TD(S_i,x'')$ for all $x', x'' \in H$ and $TD(S_i,x)$
  does not contain a directed cycle}{
\KwSet {$G=TD(S_i,x)$}\;
\KwSet {rep=rep+1}\;
\While {$V(G) \neq \emptyset$}{
Add a new child $v$ to $v_l$\;
\KwSet {$\mathcal F(v)=\{u \in S_i : u \text{ has indegree }
  0 \text{ in } G\}$}\;
Delete from $G$ all vertices in $\mathcal F(v)$\;
\If {for all $u, u' \in \mathcal F(v)$, $x', x'' \in H \cup V(G)$,
  $\delta(u,x')=\delta(u',x'')$}{
Choose some $u \in \mathcal F(v)$\;
\KwSet $t(v)=\delta(x,u)$\;
Add the leaf $\mathcal F(v)$ as a child of $v$\;
\KwSet $v_l=v$\;
}
\Else{
Remove all vertices from $G$\;
\KwSet {rep=rep-1}\;
}
}
Add the edge $(v_l,h(C))$\;
}
}
}
}
\If{rep=2}{
\Return {$C$}\;
}
\Else{
\Return {$\delta$ is not level-1 representable}\;
}

\end{algorithm}

Rather than continuing with our analysis of
algorithm {\sc Build-Cycle} we break for the moment and 
illustrate it by means of an example. For this
we return again to the symbolic 3-dissimilarity $\delta_{\mathcal N_1}$
on $X=\{a, \ldots,k\}$
induced by the labelled level-1 network $\mathcal N_1$
depicted in Fig.~\ref{exnet}(i). Suppose 
$(c,bcefg)$ is a pair returned by algorithm \textsc{Find-Cycle}
and $c||be$ is the $\delta$-tricycle chosen in line 2 of \textsc{Build-Cycle}. 
Then  $H=\{c\}$, $S'_b=\{b\}$ and $S'_e=\{e,f,g\}$ (lines 3 and 4),
and $S_b=\{b\}$ and $S_e=\{d,e,f,g\}$ (lines 8 and 9). The
graph $TD(S_e,c)$ is depicted in Fig.~\ref{fig:td-graph}(a). It implies
that 
for the cycle $C$ associated to the pair $(c,bcefg)$
in a level-1 representation of $\delta_{\mathcal N_1}$,
we must have $v_C(e)=v_C(f)=v_C(g)$ and that one of
the two sides of $C$ is $\{e,d,f,g\}$. Since 
$|S_b|=1$, the  other side of $C$ is $\{b\}$
(lines 11 to 33).

Continuing with our analysis of algorithm {\sc Build-Cycle},
we remark that the fact that the TopDown graph $TD(S_e,c)$
in the previous example is non-empty is not a coincidence.
In fact, it is easy to see that the graph $G$ defined
in line 14 of {\sc Build-Cycle}
is non-empty whenever $\delta$ is level-1 representable.
Thus, the DAG $C$ returned by
algorithm {\sc Build-Cycle}
cannot contain multi-arcs. Note however 
that there might be tricycles induced by $C$ of the form 
$x||uz$ with $u\in R'- S_y'$ as, for example,
$\delta(x,z)=\delta(x,y)=\delta(z,y)=\delta(x,u)$ might hold and thus 
$x||uz$ is not a $\delta$-tricycle. Note that similar reasoning also applies
to $S_z'$ and the extensions of $S_y'$ and $S_z'$ to $S_y$ and $S_z$
defined in lines 8 and 9, respectively. Also note that
the sets $S_y$ and $S_z$ 
are dependent on the choice of the
$\delta$-tricycle in line 2. However,
line 6 ensures that the labelled simple level-1 network returned by
algorithm \textsc{Build-Cycle} is independent of the choice of that
$\delta$-tricycle.

To establish Proposition~\ref{prop:terminate} which ensures
that algorithm {\sc Build-Cycle} terminates,
we next associate to a directed graph $G$ a new graph $P(G)$ 
by successively removing vertices of indegree zero and their
incident edges until no such vertices remain. As a first
almost trivial observation concerning that graph we have the
following straight-forward result whose proof we again  omit.

\begin{lemma}\label{proppg}
Let $G$ be a directed graph. Then $P(G)$ is nonempty 
if and only if $G$ contains a directed cycle.
\end{lemma}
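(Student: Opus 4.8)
The plan is to prove both implications directly, using the elementary fact that a finite nonempty directed graph in which every vertex has indegree at least one must contain a directed cycle. First I would record the structural observation that throughout the iterative process defining $P(G)$ an edge is deleted only at the moment one of its two endpoints is removed; consequently $P(G)$ is precisely the subgraph of $G$ induced on the set of vertices that are never removed, and in particular every directed cycle of $P(G)$ is a directed cycle of $G$.

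For the ``only if'' direction, suppose $P(G)$ is nonempty. By construction the process halts only when the current graph contains no vertex of indegree zero, so every vertex of $P(G)$ has indegree at least one within $P(G)$. Starting from an arbitrary vertex and repeatedly following an in-edge backwards then produces an arbitrarily long backward walk; since $P(G)$ has only finitely many vertices, some vertex must be revisited, and the portion of the walk strictly between two successive occurrences of that vertex yields a directed cycle of $P(G)$, hence of $G$ by the observation above.

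For the ``if'' direction, suppose $G$ contains a directed cycle $D$. I claim that no vertex lying on $D$ is ever removed. Assume otherwise and consider the first vertex $v$ of $D$ that gets removed, with $u$ its predecessor along $D$. At the moment $v$ is removed it has indegree zero in the current graph; but $u$ has not yet been removed (as $v$ is the first such vertex), so by the bookkeeping observation the edge $(u,v)$ is still present, contradicting that $v$ has indegree zero. Hence every vertex of $D$ survives the whole process, so $P(G)$ is nonempty.

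The argument is entirely routine; the only point requiring any care is the bookkeeping that an edge disappears only together with one of its endpoints, which simultaneously identifies $P(G)$ with an induced subgraph of $G$ and drives the contradiction in the ``if'' direction. I do not expect a genuine obstacle.
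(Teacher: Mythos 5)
Your proof is correct; the paper itself omits the proof of this lemma, declaring it straightforward, and your argument (every vertex of $P(G)$ has indegree at least one, hence a backward walk closes into a cycle; conversely the first-removed vertex of a cycle would still have its cycle-predecessor present, a contradiction) is exactly the standard argument the authors evidently had in mind. No gaps.
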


Given as input to algorithm \textsc{Build-Cycle}
a symbolic 3-dissimilarity $\delta$ that satisfies Property (P1) and
a pair $(H,R')$ returned by algorithm \textsc{Build-Cycle} for $\delta$
we have

\begin{proposition}\label{prop:terminate}
  Algorithm \textsc{Build-Cycle} terminates.
\end{proposition}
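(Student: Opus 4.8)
\noindent\emph{Proof proposal.}
The plan is to observe first that algorithm \textsc{Build-Cycle} performs no recursive calls, so that its termination reduces to bounding the number of passes through its loops. The outer \texttt{For} loop (Line~11) ranges over $i\in\{y,z\}$ and is executed exactly twice, and every line lying outside the nested \texttt{While} loop (Line~16) is executed at most once per such pass. Hence it suffices to show that the \texttt{While} loop, which is guarded by the condition $V(G)\neq\emptyset$, halts.

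To do this I would track $|V(G)|$, a non-negative integer, as a termination measure. The point to exploit is that the \texttt{While} loop is reached only when the test in Line~13 succeeds, and that test in particular demands that $TD(S_i,x)$ contain no directed cycle; since Line~14 sets $G:=TD(S_i,x)$ and the body of the loop only ever deletes vertices of $G$ together with their incident edges, $G$ stays a directed acyclic graph throughout the loop. Consequently, whenever the loop body is entered, that is, whenever $V(G)\neq\emptyset$, the graph $G$ has at least one vertex of indegree zero, so the set $\mathcal F(v)$ computed in Line~18 is non-empty. Deleting $\mathcal F(v)$ from $G$ in Line~19 therefore strictly decreases $|V(G)|$ (by at least $|\mathcal F(v)|\ge 1$), and the assignment in the \texttt{Else}-branch resets it to $0$; in either case $|V(G)|$ strictly decreases. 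As it cannot drop below $0$, the \texttt{While} loop halts after at most $|S_i|\le|X|$ iterations (compare Lemma~\ref{proppg}, applied to the acyclic graph $G$, for which $P(G)=\emptyset$).

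Putting these bounds together, each of the two passes of the \texttt{For} loop runs to completion, and hence algorithm \textsc{Build-Cycle} terminates. I do not expect a genuine obstacle here: the single step requiring care is the verification that $G$ remains acyclic for the whole execution of the \texttt{While} loop — this is what guarantees $\mathcal F(v)\neq\emptyset$ and thus a real decrease of the measure — and it is immediate once one observes that Line~13 rules out directed cycles in $TD(S_i,x)$ and that the loop body removes, but never adds, vertices of $G$.
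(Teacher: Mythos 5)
Your proposal is correct and follows essentially the same route as the paper: both identify the while loop of Line~16 as the only possible source of non-termination and use the acyclicity of $TD(S_i,x)$ guaranteed by Line~13 to conclude that the repeated removal of indegree-zero vertices empties $G$. The paper invokes Lemma~\ref{proppg} directly (the loop computes $P(TD(S_i,x))$, which is empty for an acyclic graph), whereas you make the same point via the explicit termination measure $|V(G)|$; this is only a difference in presentation.
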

\begin{proof}
  As is easy to check the only reason for algorithm \textsc{Build-Cycle}
  not to terminate is the while loop initiated in its line 16. For $i=1,2$,
  this while loop works by
  successively removing vertices of indegree 0 (and their incident edges)
  from the graph $TD(S_i,x)$,
  and terminates if the resulting graph, i.\,e.\,$P(TD(S_i,x))$, is empty.
  Since line 13 ensures that this
  loop is entered if and only if $TD(S_i,x)$ does not contain a directed cycle,
  Lemma~\ref{proppg} implies that \textsc{Build-Cycle} terminates.
  \end{proof}

It is straight-forward to see that when given a 
level-1 representable symbolic 3-dissimilarity $\delta$ 
such that the underlying level-1 network is in fact a simple level-1
network the labelled network returned by algorithm
{\sc Build-Cycle} satisfies the following three additional properties
(where we use the notations introduced in algorithm {\sc Build-Cycle}).
\begin{enumerate}
\item[\emph{(P4)}] For $i=y,z$, we have $S_i'=\{u\in S_i: \delta(u,x)\not=
  \delta(y,z)\}$ and  
  $S_y \cap S_z = S_y \cap H = S_z \cap H = \emptyset$.
\item[\emph{(P5)}] For all $u,v \in R:=H\cup S_y\cup S_z$
  and all $w \in X-R$, we have 
$\delta(u,w)=\delta(v,w)$.
\item[\emph{(P6)}] For all $u,u' \in H$ and $i\in\{y,z\}$, the graphs 
$TD(S_i,u)$ and $TD(S_i,u')$ are 
isomorphic and do not contain a directed cycle.
\end{enumerate}

Since the quantities on which these properties are
based also exist for general symbolic 3-dissimilarities 
we next study Properties (P4) - (P6) for such dissimilarities. As a
first consequence of Property (P4) combined with Properties (P1) and (P2),
we obtain a sufficient
condition
under which the TopDown graph $TD(S_i,x)$ considered in
algorithm {\sc Build-Cycle} does not contain 
a directed cycle (lines 13). For convenience, we employ
again the notation
used in Algorithm~\ref{abc}.

\begin{proposition}\label{cytdx}
Suppose that $\delta:{X\choose \leq 3} \to M^{\odot}$
is a symbolic 3-dissimilarity that satisfies Properties (P1), 
(P2) and (P4), that
$(H,R')$ is a pair returned by algorithm {\sc Find-Cycles} when
given $\delta$, and that
$x$, $y$ and $z$ are as specified as in line 2 of algorithm 
{\sc Build-Cycle}. Then the following hold for $i=y,z$.\\
\noindent (i) If  $TD(S_i,x)$ contains a directed cycle then 
it contains a directed cycle of size 3.\\
(ii) $TD(S_i,x)$ does not contain a directed cycle
of length 3 whenever $|M|=2$ holds.
\end{proposition}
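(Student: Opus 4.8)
\textbf{Proof plan for Proposition~\ref{cytdx}.}

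The plan is to analyse directed cycles in $TD(S_i,x)$ directly from the definition of the TopDown graph, namely that $(u,v)$ is an edge precisely when $u|vx$ is a $\delta$-triplet. For part~(i), I would first argue that a directed edge $(u,v)$ in $TD(S_i,x)$ carries quite a lot of metric information: by Table~\ref{tabtri}, $u|vx$ being a $\delta$-triplet means $\delta(u,v)=\delta(u,x)\neq\delta(v,x)$, and by Property~(P4), since $u,v\in S_i$, we also have $\delta(u,x)\neq\delta(y,z)$ and $\delta(v,x)\neq\delta(y,z)$. Now suppose $TD(S_i,x)$ contains a directed cycle of minimal length $\ell\geq 4$, say on vertices $u_1\to u_2\to\cdots\to u_\ell\to u_1$. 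I would look at three consecutive vertices, say $u_1,u_2,u_3$, together with $x$, and show that either $u_1|u_3x$ is a $\delta$-triplet (giving a chord that shortcuts the cycle to a shorter directed cycle, contradicting minimality) or $u_3|u_1x$ is a $\delta$-triplet (giving a directed cycle of length $3$ on $u_1,u_2,u_3$, already contradicting $\ell\geq 4$). The key computation here is that $\delta(u_1,u_2)=\delta(u_1,x)$, $\delta(u_2,u_3)=\delta(u_2,x)$, and $\delta(u_1,u_3)$ is forced by the Helly-type property (P1) applied to $\{u_1,u_2,u_3\}$ together with the inequalities among $\delta(u_i,x)$ coming from the triplet structure; one has to show the value $\delta(u_1,u_3,x)$ cannot equal $\delta(u_1,u_3)$ if that would leave neither orientation a $\delta$-triplet, i.e. one must rule out $\delta(u_1,x)=\delta(u_3,x)$ with $\delta(u_1,u_3)$ distinct. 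This is where Property~(P2) enters: if $\delta(u_1,x)=\delta(u_3,x)\neq\delta(u_1,u_3)$ and also $\delta(u_1,u_2)=\delta(u_1,x)$, $\delta(u_2,u_3)=\delta(u_2,x)=\delta(u_3,x)$ one gets a configuration on $\{u_1,u_2,u_3,x\}$ matching the hypothesis of (P2), which forces a tricycle among these four and hence (via Lemma~\ref{3tri} and Table~\ref{tabtri}) pins down $\delta(u_1,u_3,x)$ in a way that reintroduces an oriented triplet edge between $u_1$ and $u_3$. So in every case a shorter directed cycle or a directed $3$-cycle appears, and descending on $\ell$ yields a directed $3$-cycle.

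For part~(ii), I would assume $|M|=2$ and a directed $3$-cycle $u\to v\to w\to u$ in $TD(S_i,x)$ and derive a contradiction. From the three triplets $u|vx$, $v|wx$, $w|ux$ and Table~\ref{tabtri} we get $\delta(u,v)=\delta(u,x)\neq\delta(v,x)$, $\delta(v,w)=\delta(v,x)\neq\delta(w,x)$, $\delta(w,u)=\delta(w,x)\neq\delta(u,x)$. With only two symbols available this forces, reading around the cycle, $\delta(u,x)=\delta(w,x)$ from the first and third relations being forced opposite to $\delta(v,x)$ — wait, more carefully: $\delta(u,x)\neq\delta(v,x)$ and $\delta(v,x)\neq\delta(w,x)$ give $\delta(u,x)=\delta(w,x)$ when $|M|=2$, contradicting $\delta(w,x)\neq\delta(u,x)$ from the third triplet. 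Hence no directed $3$-cycle can exist when $|M|=2$, and by part~(i) no directed cycle at all.

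\textbf{Main obstacle.} The genuinely delicate part is (i): ruling out long directed cycles requires showing that for any three vertices $u_1,u_2,u_3$ lying consecutively on such a cycle, the pair $u_1,u_3$ is again joined by an oriented edge in $TD(S_i,x)$ (in one direction or the other). This is not automatic from (P1) alone — the Helly-type property only says $\delta(u_1,u_3,x)$ lies in the multiset $\{\delta(u_1,u_3),\delta(u_1,x),\delta(u_3,x)\}$, and the "bad" case is exactly when $\delta(u_1,x)=\delta(u_3,x)$, for then $\{u_1,u_3,x\}$ could be a fork or a $\delta$-tricycle rather than a triplet. The argument must invoke Property~(P2) to exclude the tricycle possibility (or to relocate it), and Property~(P4) to keep all the relevant $\delta(\cdot,x)$ values away from $\delta(y,z)$ so that the combinatorics of which symbol is "the repeated one" in each triple is controlled. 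Getting the bookkeeping of these symbol-equalities right across the whole cycle, and packaging it as a clean induction on cycle length, is the crux; everything else is a short appeal to Table~\ref{tabtri} and, for (ii), a two-symbol parity count.
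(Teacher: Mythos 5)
Part (ii) of your plan is correct and is essentially the paper's argument: the three triplet conditions force $\delta(u,x)\neq\delta(v,x)\neq\delta(w,x)\neq\delta(u,x)$, which is impossible with two symbols. The problem is in part (i), where your plan rests on a claim that is false.

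You want to show that for three consecutive vertices $u_1,u_2,u_3$ on a minimal directed cycle of length $\geq 4$, the pair $u_1,u_3$ must carry an edge of $TD(S_i,x)$ in one direction or the other. But by Table~\ref{tabtri}, once the tricycle on $\{u_1,u_3,x\}$ is excluded via Property (P4), the absence of such an edge is equivalent to $\delta(x,u_1)=\delta(x,u_3)$ (the trinet on $\{u_1,u_3,x\}$ is then a fork or the triplet $x|u_1u_3$), and nothing in (P1), (P2), (P4) forbids this locally: e.g.\ $\delta(u_1,x)=\delta(u_3,x)=\delta(u_1,u_2)=\delta(u_1,u_3)=\alpha$ and $\delta(u_2,x)=\delta(u_2,u_3)=\beta\neq\alpha$ is consistent with both edges $(u_1,u_2),(u_2,u_3)$ and with all three properties (the (U2)-pattern needed to trigger (P2) does not arise, since four of the six pairs get the value $\alpha$). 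Your sketch only discusses excluding the tricycle possibility and never disposes of the fork / $x|u_1u_3$ alternative; moreover the configuration you write down to invoke (P2) contains the equality $\delta(u_2,x)=\delta(u_3,x)$, which directly contradicts the edge $(u_2,u_3)$ you are assuming. The paper's proof goes the opposite way: it uses minimality to conclude that the distance-two pairs carry \emph{no} edge, hence $\delta(x,a)=\delta(x,c)$ and $\delta(x,b)=\delta(x,d)$; for $|V(C)|\geq 5$ it additionally gets $\delta(x,a)=\delta(x,d)$ from the non-adjacent pair $a,d$ and so derives $\delta(x,a)=\delta(x,b)$, contradicting the edge $(a,b)$; for $|V(C)|=4$ this propagation is unavailable and a separate, more delicate case analysis of $\delta(a,c),\delta(b,d)\in\{A,B\}$ against Property (P2) is required. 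Your single local lemma cannot replace these two cases, so the induction on cycle length does not get off the ground as written.
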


\begin{proof}
  (i) 
By symmetry, it suffices to show the proposition for $i=y$.
Suppose $TD(S_y,x)$ contains a directed cycle. Over
all such cycles in $TD(S_y,x)$, choose a directed cycle
$C$ of minimal length. If $|V(C)|=3$, then 
the statement clearly holds.

Suppose for contradiction for the remainder that $|V(C)|\geq 4$.  Suppose 
$a, b, c, d\in V(C)$ are such that $(a,b)$, $(b,c)$, $(c,d)$ are three directed
edges in $C$. We next distinguish 
between the cases that $|V(C)|\geq 5$ and that $|V(C)|= 4$.

Suppose $|V(C)|\geq 5$. Then since $a,c\in S_y$, Lemma~\ref{lmtri}
combined with the minimality of $C$ implies that
we either have a $\delta$-fork on $\{a,c,x\}$  or the $\delta$-triplet
$ac|x$. Hence, $\delta(x,a)=\delta(x,c)$ holds in either case.
Note that similar arguments also imply that
$\delta(x,b)=\delta(x,d)$. Since $|V(C)|\geq 5$, 
the directed edges $(a,d)$ and $(d,a)$ cannot be contained in
$TD(S_y,x)$ and, using again
similar arguments as before, $\delta(x,a)=\delta(x,d)$ must hold.
In combination, we obtain $\delta(x,a)=\delta(x,b)$
which is impossible in view of $(a,b)$
being an edge in  $TD(S_y,x)$ and thus $\delta(x,a)\not=\delta(x,b)$.

Suppose $|V(C)|= 4$. By the minimality of $C$, neither $(b,d)$ 
$(d,b)$, $(a,c)$ nor $(c,a)$ can be a directed edge in $TD(S_y,x)$.
Using similar arguments as in the previous case, it follows that
$\delta(x,b)=\delta(x,d)$ and $\delta(x,a)=\delta(x,c)$.
Combined with the facts that $(a,b)$, $(b,c)$, $(c,d)$ are directed
edges in $C$ and that $(d,a)$ must also be an edge in $C$ as
$|V(C)|=4$, it follows that with $A:=\delta(c,d)$ and $B:=\delta(b,c)$
we have 
\begin{eqnarray}\label{eq:P2}
A=\delta(x,c)=\delta(x,a)=\delta(a,b)\not=\delta(x,b)=\delta(x,d)=\delta(d,a)=
\delta(b,c)=B.
\end{eqnarray}
Note that, $\delta(a,c)\in\{A,B\}$ must also hold as 
otherwise $|\{\delta(a,c),\delta(a,b), \delta(b,c)\}|=3$ and so, in view of 
Table~\ref{tabtri},
$\delta|_{\{a,b,c\}}$ would be level-1 representable by a
$\delta$-tricycle on $\{a,b,c\}$.
But then $H\cap \{a,b,c\}\not=\emptyset$ which is impossible 
in view of Property (P4). Similarly, one can show that 
$\delta(b,d)\in\{A,B\}$. By combining a
case analysis as indicated in Table~\ref{tabtri} with Equation~\ref{eq:P2},
  it is straight-forward to 
see that each of the four detailed combinations of $\delta(a,c)$ and
$\delta(b,d)$ in that table
yields a contradiction in view of Property (P2).

(ii) By symmetry, it suffices to assume $i=y$. Let $|M|=2$ and 
assume for contradiction that $TD(S_y,x)$ contains a directed cycle $C$ of
size 3. Let $s$, $u$, $v$ denote the 3 vertices of $C$ such that
$(s,u)$, $(u,v)$ and $(v,s)$ are the three directed edges of $C$.
Then $\delta(u,x)\not= \delta(s,x)\not=\delta(v,s)= \delta(v,x)
\not=\delta(u,v)=\delta(u,x)$ must hold. Since $|M|=2$, this is impossible.
\end{proof}

\section{Constructing level-1 representations
  from symbolic 3-dissimilarities: The algorithm \textsc{Network-Popping}
\label{sec:network-popping}}

In this section, we present algorithm
\textsc{Network-Popping} which allows us to decide if
a symbolic 3-dissimilarity is level-1 representable 
or not. If it is, then \textsc{Network-Popping} is guaranteed to 
find a level-1 representation in polynomial time.

\textsc{Network-Popping} takes as input a symbolic
3-dissimilarity $\delta$ on $X$ 
and employs a top-down approach to recursively
construct a semi-discriminating level-1 representation for $\delta$
(if such a representation exists).
 For $l$ a leaf whose label set is of size at least
two and constructed in one of the previous steps it essentially 
works by either replacing $l$ with a labelled simple
level-1 network or a labelled phylogenetic tree. To compute those
networks algorithms \textsc{Find-Cycle} and \textsc{Build-Cycle} 
are used and to construct such trees algorithm \textsc{Vertex-Growing}
is employed.  At the heart of the latter
lie Proposition~\ref{f3tsu} and
algorithm \textsc{Bottom-Up}
introduced in \cite{HHHMSW13}. The latter takes 
as input a symbolic 2-dissimilarity $\delta$
satisfying Properties (U1) and (U2), 
and builds the unique discriminating symbolic representation 
$\mathcal T$ for $\delta$ (if it exists). 

To be able to state algorithm \textsc{Vertex-Growing}, we 
require again further terminology.
Following e.\,g.\,\cite{SS03}, we call a collection $\mathcal H$
of non-empty subsets of $X$ a \emph{hierarchy on $X$} if
$A\cap B\in \{A,B,\emptyset\}$ holds for any two
sets $A,B\in\mathcal H$. 
The proof of the following result is straight-forward 
and thus omitted.

\begin{lemma}\label{intcy}
Let $N$ be a level-1 network with cycles 
$C_1, C_2, \ldots, C_k$, $k\geq 1$. Then, 
$\mathcal H_N=\{R(C_1), R(C_2), \ldots, R(C_k)\}$
is a hierarchy on $X$.
\end{lemma}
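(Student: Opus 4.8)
The plan is to reduce the hierarchy property to a statement about the offspring sets of the cycle roots and then exploit that, above each such root, $N$ has a cut edge. Recall that $R(C_m)=\mathcal F(r(C_m))$ for every cycle $C_m$ of $N$, so it suffices to show that for any two distinct cycles $C_i,C_j$ of $N$ one has $R(C_i)\cap R(C_j)\in\{R(C_i),R(C_j),\emptyset\}$. I would first record two easy facts: the roots $r_i:=r(C_i)$ and $r_j:=r(C_j)$ are distinct (an equality would place a single interior vertex on two cycles, contradicting the level-1 condition), and each $R(C_m)$ is non-empty since $H(C_m)\subseteq R(C_m)$ and $h(C_m)$, having positive outdegree, has a leaf below it; so $\mathcal H_N$ is indeed a collection of non-empty subsets of $X$.

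The key structural ingredient is the following. If $C$ is a cycle of $N$ with $r(C)\neq\rho_N$, then $r(C)$ is a tree vertex, so it has a unique in-edge $e=(p,r(C))$, and $e$ is a cut edge of $U(N)$: otherwise $e$ would lie on a cycle, which by the level-1 condition would have to be $C$ itself, impossible since both neighbours of $r(C)$ in $U(C)$ are children of $r(C)$, not its parent $p$. Removing $e$ splits $U(N)$ into two components $N^{\mathrm{up}}$ (the one containing $\rho_N$) and $N^{\mathrm{down}}$ (the one containing $r(C)$); using that $e$ is the only in-edge of $r(C)$ together with the bridge property of $e$, one checks that $N^{\mathrm{down}}$ consists precisely of $r(C)$ and the vertices below it, so the leaf set of $N^{\mathrm{down}}$ equals $\mathcal F(r(C))=R(C)$ and that of $N^{\mathrm{up}}$ equals $X\setminus R(C)$. (If $r(C)=\rho_N$ we simply have $R(C)=X$.)

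Granting this, the remaining case analysis is short. If $r_i=\rho_N$ or $r_j=\rho_N$ the corresponding $R$-set is all of $X$, so the two sets are nested. Otherwise, let $e_i,e_j$ be the cut edges above $r_i,r_j$; since $e_i\neq e_j$, the edge $e_j$ lies entirely inside one of the two components of $U(N)-e_i$, and symmetrically $e_i$ lies inside one of the two components of $U(N)-e_j$. If $e_j$ lies in $N_i^{\mathrm{down}}$ then $r_j$ is below $r_i$, hence $R(C_j)\subseteq R(C_i)$; symmetrically, if $e_i$ lies in $N_j^{\mathrm{down}}$ then $R(C_i)\subseteq R(C_j)$. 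In the one remaining case, where $e_j\subseteq N_i^{\mathrm{up}}$ and $e_i\subseteq N_j^{\mathrm{up}}$, I claim $R(C_i)\cap R(C_j)=\emptyset$: a common leaf $x$ would be below $r_j\in N_i^{\mathrm{up}}$ while lying in $N_i^{\mathrm{down}}$, so the directed path from $r_j$ to $x$ must cross the bridge $e_i$, which (there being no directed $2$-cycles) it can only do in the direction $p_i\to r_i$; hence $r_i$ lies on that path, so $r_i$ is below $r_j$ and thus $r_i\in N_j^{\mathrm{down}}$, contradicting $e_i\subseteq N_j^{\mathrm{up}}$. Combining the cases yields $R(C_i)\cap R(C_j)\in\{R(C_i),R(C_j),\emptyset\}$, so $\mathcal H_N$ is a hierarchy.

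I expect the main obstacle to be the structural ingredient of the second paragraph, namely verifying that the in-edge of a cycle root is a cut edge and that deleting it peels off exactly the offspring set $R(C)$; once this is in place the rest is routine bookkeeping with the bridge. Minor care is needed with the boundary case $r(C)=\rho_N$ and with the direction in which a directed path may traverse an undirected bridge, but these do not cause real trouble.
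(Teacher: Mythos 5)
Your argument is correct: the paper omits the proof of Lemma~\ref{intcy} as straightforward, and your cut-edge argument (the in-edge of a cycle root is a bridge whose removal peels off exactly $\mathcal F(r(C))=R(C)$, followed by the three-way case analysis on where the two bridges sit relative to one another) is the standard way to fill it in. The only step you gloss over slightly is that a non-bridge edge lies on an \emph{induced} cycle of $U(N)$ (so that the level-1 condition, which is stated for induced cycles, actually applies), but this follows by taking a shortest path between the endpoints of $e$ in $U(N)-e$, and is exactly the kind of detail the authors are waving away.
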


Suppose $\mathcal A$ is a set of non-empty subsets of $X$. Then we
define a relation $\sim_{(X,\mathcal A)}$ on $X$ by putting
$x \sim_{(X,\mathcal A)} y$ if there exists some  $A\in \mathcal A$ 
such that $x,y\in A$, for all $x,y\in X$. Note first that $ \sim_{(X,\mathcal A)}$
is clearly an equivalence relation whenever $\mathcal A$ is a
hierarchy. In addition, suppose that $\mathcal A$ is such that 
the partition $X'$ of $X$ induced by $ \sim_{(X,\mathcal A)}$
has size two or more. If $\delta:{X\choose \leq 3} \to M^{\odot}$
is a symbolic 3-dissimilarity
such that for any two sets $Y,Y'\in X'$ we have
$\delta(x,y)=\delta(x',y')$ for all $x,x'\in Y$ and $y,y'\in Y'$,
then we associate to $\delta$ the map $\hat{\delta}$ given by
\[\begin{array}{r c l}
\hat{\delta}: {X' \choose \leq 2}  &\to& M^{\odot}\\
\{Y_1, Y_2\} &\mapsto&
\left\{
\begin{array}{l l}
\odot &\text{ if } Y_1=Y_2,\\
\delta(y_1, y_2), \text{ where } y_1 \in Y_1, y_2 \in Y_2 &
\text{ otherwise}.
\end{array}
\right.
\end{array}\]
Note that $\hat{\delta}$
is clearly well-defined and a symbolic 2-dissimilarity on $X'$.
Associating to a level-1 representation $\mathcal N=(N,t)$ of $\delta$
the set $\mathcal R:=\{R(C)\,:\, C\mbox{ is a cycle of } N\}$,
we have the following result as an immediate consequence.

\begin{proposition}\label{f3tsu}
Suppose $\mathcal N$ is a labelled level-1 network on $X$ and 
$X'$  is the partition of $X$ induced by the
relation $\sim_{(X,\mathcal R)}$
on $X$. If $|X'|\geq 2$
then $\hat{\delta_{\mathcal N}}$
is well defined and satisfies Properties 
(U1) and (U2). In particular, $\hat{\delta_{\mathcal N}}$
is a symbolic ultrametric on $X'$.
\end{proposition}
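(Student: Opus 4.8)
The plan is to deduce Proposition~\ref{f3tsu} as a direct consequence of the discussion immediately preceding it, namely by checking that the hypotheses needed to form $\hat{\delta_{\mathcal N}}$ are met (so that $\hat{\delta_{\mathcal N}}$ is a well-defined symbolic 2-dissimilarity on $X'$), and then verifying Properties (U1) and (U2) for it; the final sentence about being a symbolic ultrametric is then immediate from the definition of a symbolic ultrametric together with (U1) and (U2). First I would record that $\mathcal R=\{R(C):C\text{ a cycle of }N\}$ is a hierarchy on $X$ by Lemma~\ref{intcy}, so $\sim_{(X,\mathcal R)}$ is an equivalence relation and $X'$ is a genuine partition. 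Next I would show the compatibility condition required to define $\hat{\delta_{\mathcal N}}$: for any two distinct blocks $Y_1,Y_2\in X'$ and any $x,x'\in Y_1$, $y,y'\in Y_2$, one has $\delta_{\mathcal N}(x,y)=\delta_{\mathcal N}(x',y')$. The key point here is that $lca_N(x,y)$ does not depend on the particular representatives: if $x,x'$ lie in a common block, they lie below the root $r(C)$ of a common cycle (or a chain of such cycles), and since $y$ lies in a different block, $lca_N(x,y)=lca_N(x',y)$ is forced to be a vertex that is an ancestor of (the relevant) $r(C)$, hence unchanged when $x$ is replaced by $x'$; symmetrically for $y,y'$. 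Once this is in hand, $\hat{\delta_{\mathcal N}}$ is well-defined, and it is a symbolic 2-dissimilarity since $\hat{\delta_{\mathcal N}}(\{Y_1,Y_2\})=\odot$ iff $Y_1=Y_2$ by construction.

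For Property (U1), I would take three blocks $Y_1,Y_2,Y_3\in X'$, pick representatives $y_i\in Y_i$, and observe that $\hat{\delta_{\mathcal N}}$ on these three blocks equals $\delta_{\mathcal N}$ on $\{y_1,y_2,y_3\}$, which satisfies the Helly-type Property (P1) by Lemma~\ref{gencond} (as $\delta_{\mathcal N}$ is level-1 representable); hence $|\{\hat{\delta_{\mathcal N}}(Y_1,Y_2),\hat{\delta_{\mathcal N}}(Y_1,Y_3),\hat{\delta_{\mathcal N}}(Y_2,Y_3)\}|\le 2$ follows from the fact that in a level-1 representation the lca's of the three pairs among $\{y_1,y_2,y_3\}$ take at most two distinct values --- more precisely, by Lemma~\ref{lcrefaunique} the three lca's include a repeated vertex. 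For Property (U2) I would argue by contradiction: suppose there are four blocks $Y_1,\dots,Y_4$ exhibiting the forbidden pattern
\[\hat{\delta_{\mathcal N}}(Y_1,Y_2)=\hat{\delta_{\mathcal N}}(Y_2,Y_3)=\hat{\delta_{\mathcal N}}(Y_3,Y_4)\neq\hat{\delta_{\mathcal N}}(Y_3,Y_1)=\hat{\delta_{\mathcal N}}(Y_1,Y_4)=\hat{\delta_{\mathcal N}}(Y_4,Y_2).\]
Choosing representatives $y_i\in Y_i$, the same pattern holds for $\delta_{\mathcal N}$ on $\{y_1,y_2,y_3,y_4\}$; by Property (P2) (again via Lemma~\ref{gencond}) there is then exactly one $3$-subset $Y\subseteq\{y_1,\dots,y_4\}$ supporting a $\delta_{\mathcal N}$-tricycle, so $N$ has a cycle $C$ with (say) $y_i\in H(C)$ for one index and two further $y_j$'s in $R(C)-H(C)$; but then those two $y_j$'s lie in the same block $R(C)$ of $\sim_{(X,\mathcal R)}$ as $y_i$, contradicting that $Y_1,\dots,Y_4$ are pairwise distinct blocks.

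The one step that needs genuine care --- and which I expect to be the main obstacle --- is the well-definedness argument, i.e.\ showing $lca_N(x,y)$ is independent of the choice of representatives within blocks. The subtlety is that two leaves can be $\sim_{(X,\mathcal R)}$-related through a chain $R(C_1),R(C_2),\dots$ of overlapping-or-nested cycle-sets rather than through a single cycle, so one has to argue that for any $x,x'$ in the same block and any $y$ outside it, the vertex $lca_N(\{x,x'\})$ is a strict descendant of $lca_N(\{x,y\})$, whence $lca_N(\{x,y\})=lca_N(\{x',y\})$ because both equal $lca_N(\{x,x',y\})$. This in turn rests on the hierarchy structure from Lemma~\ref{intcy}: every block of $X'$ is a union of $R(C)$'s along a connected piece of the hierarchy, so it has a well-defined ``top'' vertex below which it entirely sits, and $y$ lying outside the block means $lca_N(\{x,y\})$ must be at or above that top vertex. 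Everything else is a routine transfer of Properties (P1) and (P2) for $\delta_{\mathcal N}$ (available from Lemma~\ref{gencond}) from triples/quadruples of leaves to triples/quadruples of blocks.
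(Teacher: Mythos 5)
Your overall strategy (verify well-definedness, (U1) and (U2) directly for $\hat{\delta_{\mathcal N}}$) differs from the paper's, which collapses each set-inclusion-maximal $R(C)$ to a single vertex to obtain a labelled \emph{tree} $(T_N,t_N)$ on $X'$ that is a discriminating symbolic representation of $\hat{\delta_{\mathcal N}}$, and then simply invokes Theorem~\ref{bdtr}. Your well-definedness argument and your (U2) argument are essentially sound; in particular, (U2) correctly exploits that three representatives lying in a common $R(C)$ would have to share a block of $X'$. The genuine gap is in your argument for (U1).

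There you claim that ``in a level-1 representation the lca's of the three pairs among $\{y_1,y_2,y_3\}$ take at most two distinct values,'' justified by Lemma~\ref{lcrefaunique}. This is false for level-1 networks in general, and Lemma~\ref{lcrefaunique} does not assert it: it only says that $lca(Y)$ coincides with $lca(\{x,y\})$ for \emph{some} pair in $Y$. A labelled tricycle $x||yz$ is a counterexample: $lca(x,y)=v_C(y)$, $lca(x,z)=v_C(z)$ and $lca(y,z)=r(C)$ are three pairwise distinct vertices, and if $|M|\geq 3$ they may carry three distinct labels, so $\delta_{\mathcal N}$ itself need not satisfy (U1). (The appeal to the Helly-type Property (P1) does not help either: (P1) constrains $\delta(x,y,z)$, not the number of distinct pairwise values.) What saves the statement is precisely the hypothesis you do not use at this step: $y_1,y_2,y_3$ lie in three \emph{distinct} blocks of $X'$, so no two of them belong to a common $R(C)$; consequently no pairwise lca can be a vertex of, or lie below the root of, any cycle of $N$, so all three pairwise lca's survive in the quotient tree and the usual tree argument gives (U1). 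Making this precise amounts to reconstructing the paper's quotient-tree construction, at which point one may as well finish by citing Theorem~\ref{bdtr} as the paper does.
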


\begin{proof}
  Put $\mathcal N=(N,t)$ and 
  $\delta'=\hat{\delta_{\mathcal N}}$. Note first
  that for all $x,y\in X$,   
Lemma~\ref{intcy} implies that there exists 
 some $R \in\mathcal R$ such that $x,y\in R$ if 
and only if there exists 
$R' \in \mathcal R':=\{R \in \mathcal R : 
R\text{ is set-inclusion maximal in } \mathcal R\}$
such that $x,y\in R'$.
Let $T_N$ denote the tree obtained from $N$ by first collapsing for every
cycle $C$ of $N$ with
$R(C)\in \mathcal R'$
all vertices below or equal to $r(C)$ into a vertex and then
labelling that vertex by $R(C)$. Put $t_N:=t|_{V(T_N)}$. 
Then $(T_N,t_N)$ is clearly a 
labelled phylogenetic
tree on $X'$. Since $\mathcal N$ is a labelled level-1 network,
it follows that
$(T_N, t_N)$ is a symbolic discriminating representation of
$\hat{\delta_{\mathcal N}}$. In view of Theorem~\ref{bdtr}, the
proposition follows.
\end{proof}

\begin{algorithm}
\label{avp}
\caption{\textsc{Vertex-Growing} -- Property (P2) is checked in Line 3.}
\SetKw{KwSet}{set}
\KwIn{A symbolic 3-dissimilarity $\delta$ on a set $X$, a 
subset $Y \subseteq X$, and a hierarchy $\mathcal S$ of proper subsets of $Y$.}
\KwOut{A discriminating symbolic representation
  on the partition of $Y$ induced by 
  $\sim_{(Y,\mathcal S)}$ or the statement
  ``There exists no discriminating symbolic
  representation''.}
\BlankLine
Let $Y'$ denote the partition of $Y$ induced by $\sim_{(Y,\mathcal S)}$\;
Apply the \textsc{Bottom-Up} algorithm to the symbolic ultrametric 
$\hat{\delta}$ induced by $\delta$ on $Y'$, as considered
in Proposition~\ref{f3tsu}\;
\If{\textsc{Bottom-Up} returns a labelled tree $\mathcal T$}{
\Return {$\mathcal T$}\;
}
\Else{
\Return{There exists no discriminating symbolic
  representation.} \;
}
\end{algorithm}

To illustrate algorithm \textsc{Vertex-Growing} consider again
the symbolic 3-dissimilarity $\delta_{\mathcal N_1}$ induced by the
labelled level-1 network on $X=\{a\ldots, k\}$ depicted in Fig.~\ref{exnet}(i).
Let $\mathcal M_1$, $\mathcal M_2$, and $\mathcal M_3$ denote the
three labelled simple level-1 networks returned by algorithm
\textsc{Build-Cycle} when given $\delta_{\mathcal N_1}$
such that $L(\mathcal M_1)=X$, $L(\mathcal M_2)=\{b,\ldots,g\}$
and $L(\mathcal M_3)=\{e,f,g\}$. Then the partition of $X$ found in line 1 of
algorithm \textsc{Vertex-Growing} when given $\delta_{\mathcal N_1}$
and $\mathcal R= \bigcup_{i=1}^3\{L(\mathcal M_1)\}$ is 
 $X$ itself, since any two leaves of $X$ are in relation
with respect to $\sim_{(X,\mathcal R)}$. Thus, the discriminating symbolic
representation returned by \textsc{Bottom-Up} is a single leaf.

Armed with the algorithms {\sc Find-Cycles}, {\sc Build-Cycles},
and \textsc{Vertex-Growing}, we next present a pseudo-code version of algorithm 
\textsc{Network-Popping} (Algorithm~\ref{alg:main}).

\begin{algorithm}
\label{abn}
\caption{\textsc{Network-Popping} -- Property (P5) is checked in Line 6.
  \label{alg:main}}
\SetKw{KwSet}{set}
\KwIn{A symbolic 3-dissimilarity $\delta$ on $X$.}
\KwOut{A semi-discriminating  level-1 representation
  $\mathcal N=(N,t')$ of $\delta$, if
  such a representation exists, or the statement 
  ``$\delta$ is not level-1 representable''.}
\BlankLine
Initialize $N$ as an unique vertex $v$, labelled by $X$\;
\KwSet $r=1$\;
Use $\textsc{Find-Cycles}(\delta)$ to obtain $m \geq 0$
pairs $(H_i,R'_i)$ of subsets $H_i$ and $R_i'$ of $X$, $1\leq i\leq m$\;
\If{ for all $i \in \{1, \ldots, m\}$,
  $\textsc{Build-Cycle}(\delta; H_i,R_i)$ returns a labelled
  simple level-1 network $(C_i, t_i)$ as described in that algorithm}{
put $R_i=R(C_i)$, and $\mathcal R=\{R_1, \ldots, R_m\}$\;
\If {for all $i \in \{1, \ldots, m\}$, and all $y, z \in R_i$, and
  $ x \notin R_i$, we have $\delta(x,y)=\delta(x,z)$}{
\While {there exists a leaf $l$ of $N$ whose label set
$V_l\subseteq X$ has two or more elements AND $r \neq 0$}{
\If {there exists $i \in \{1, \ldots, m\}$ such that $V_l=R_i$}{
  identify $l$ with the root of the labelled simple level-1 network
  corresponding to $R_i$ and replace $N$ with the resulting labelled level-1
  network\;
   }
   \Else{
   put $\mathcal S_l=\{R \in \mathcal R : R \subseteq V_l\}$\;
   \If{\textsc{Vertex-Popping}$(\delta, V_l,\mathcal S_l)$ returns 
   a discriminating symbolic representation $\mathcal T=(T,t)$}{
     identify $l$ with the root of $T$ and replace $N$ with the
     resulting labelled level-1
  network\;
   }
   \Else{
   \KwSet $r=0$\;
   }
   }
}
}
}
\If{$r=1$ AND $N$ is not $v$}{
\Return {$\mathcal N:=(N,t')$ where $t'$ is canonically obtained
by combining the maps $t$ and $t_i$, $1\leq i\leq m$}\;
}
\Else{
\Return $\delta$ is not level-1 representable\;
}
\end{algorithm}

To be able to establish in Proposition~\ref{prop:delta-and-delta-N}
that algorithm \textsc{Network-Popping}
returns a semi-discriminating level-1 representation for
a symbolic 3-dissimilarity (if such a representation exists),
we require the following technical result.

\begin{proposition}\label{2imp3}
  Let $\delta$ be a symbolic 3-dissimilarity on $X$ satisfying
  Property~(P1), and 
assume that \textsc{Network-Popping} returns a labelled level-1 network
$\mathcal N$ on $X$ when given $\delta$ as input.
Then the restrictions 
$\delta|_{X\choose \leq 2}$ and $\delta_{\mathcal N}|_{X\choose \leq 2}$
of $\delta$ and $\delta_{\mathcal N}$ to ${X \choose \leq 2}$, respectively, 
coincide if and only if $\delta$ and $\delta_{\mathcal N}$ coincide. 
\end{proposition}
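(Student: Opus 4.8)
The forward implication is immediate, since if $\delta=\delta_{\mathcal N}$ then a fortiori their restrictions to ${X\choose \leq 2}$ agree. So assume from now on that $\delta$ and $\delta':=\delta_{\mathcal N}$ agree on ${X\choose \leq 2}$, and fix $Y=\{x,y,z\}\in{X\choose 3}$; the goal is to show $\delta(Y)=\delta'(Y)$. Since $\delta'$ is induced by a level-1 network, Lemma~\ref{lcrefaunique} tells us it satisfies the Helly-type Property, and by hypothesis so does $\delta$. As $\delta$ and $\delta'$ assign the same values to the three $2$-subsets of $Y$, both $\delta(Y)$ and $\delta'(Y)$ lie in $E:=\{\delta(x,y),\delta(x,z),\delta(y,z)\}$, and when $|E|=1$ they are forced to coincide by the first row of Table~\ref{tabtri}. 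Thus it remains to handle $|E|\in\{2,3\}$.

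The plan is to reduce everything to a statement about tricycles. By Lemma~\ref{3tri}, $\delta|_Y$ has a unique semi-discriminating level-1 representation, which is a $\delta$-fork, $\delta$-triplet or $\delta$-tricycle, and likewise for $\delta'|_Y$; moreover, in a semi-discriminating trinet all interior labels, and hence the value $t(lca(Y))$, are determined by the three $2$-values together with the combinatorial type (fork; triplet with a prescribed outgroup; tricycle with a prescribed hybrid). Consulting Table~\ref{tabtri}, when $|E|\in\{2,3\}$ the only residual ambiguities compatible with fixed $2$-values are: triplet versus tricycle when $|E|=2$ (with outgroup, resp.\ hybrid, the unique element of $Y$ forming the two equal pairs), and tricycle versus tricycle with a different hybrid when $|E|=3$. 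Each of these is resolved by knowing which $\delta$- (resp.\ $\delta'$-) tricycles are supported on $Y$. Hence it suffices to prove that $\delta$ and $\delta'=\delta_{\mathcal N}$ have exactly the same tricycles, i.e.\ that $\mathcal C(\delta)$ and $\mathcal C(\delta_{\mathcal N})$ have the same vertex set.

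To establish this I would exploit the structure of the network $\mathcal N=(N,t')$ returned by \textsc{Network-Popping}. The cycles of $N$ are precisely the unique cycles of the simple level-1 networks $C_i$ that \textsc{Build-Cycle} builds from the connected components $K_i$ of $\mathcal C(\delta)$ (the trees inserted by \textsc{Vertex-Growing} contribute none), and by the output specification of \textsc{Build-Cycle} one has $H(C_i)=H_i$ and $R'_i\subseteq R(C_i)=:R_i$. If $x||yz$ is a $\delta_{\mathcal N}$-tricycle, then for some $i$ we have $x\in H(C_i)$ and $y,z\in R(C_i)-H(C_i)$ on different sides of $C_i$; running the correspondence of Lemma~\ref{tripcyc} together with Lemma~\ref{lmtri} backwards -- the side on which a leaf sits, and its height along that side, being dictated by the $\delta$-triplets and $\delta$-tricycles on $R_i$, which is exactly the information the TopDown graphs $TD(S_i,x)$ encode -- and using the consistency conditions verified in lines 6, 10, 13 and 20 of \textsc{Build-Cycle} (i.e.\ Properties (P3), (P4), (P6), (P8)) one reads off that $x||yz$ is a $\delta$-tricycle. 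Conversely, a $\delta$-tricycle $x||yz$ lies in some $K_i$, so $x\in H_i=H(C_i)$ and $y,z\in R'_i$; since $y,z$ are then placed on different sides of $C_i$ by \textsc{Build-Cycle}, it is a $\delta_{\mathcal N}$-tricycle. This yields the required equality of vertex sets, and with it $\delta(Y)=\delta'(Y)$ for every $Y\in{X\choose 3}$.

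I expect the main obstacle to be the ``reads off'' step in the previous paragraph. A leaf attached to a side of $C_i$ via a chain of $\delta$-triplets -- one of the elements of $S_i-S_i'$ introduced in lines 8--9 of \textsc{Build-Cycle} -- sits strictly below the corresponding primary leaf on that side, and for a triple made of two such leaves on opposite sides the $2$-values alone do not determine the $3$-value; here one must invoke Property (P2), Property (P4), and the fact (Proposition~\ref{cytdx}, enforced via line 13) that the relevant TopDown graphs contain no directed cycle, in order to conclude that the triple really is a tricycle with the predicted hybrid. For triples whose lowest common ancestor in $N$ lies in a \textsc{Vertex-Growing} tree (equivalently, in no cycle) the situation is easier: such a tree is a discriminating symbolic representation of a $2$-dissimilarity, so by the characterisation recalled after Theorem~\ref{bdtr} (and Proposition~\ref{f3tsu}) its induced $3$-value is the element occurring twice among the three $2$-values, and no tricycle is supported there, since by Proposition~\ref{cocy} a $\delta$-tricycle would force $x,y,z$ into a common $R_i$, contradicting that they are separated in that tree. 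Once this case analysis is carried out, every sub-case of $|E|\in\{2,3\}$ collapses, via Table~\ref{tabtri}, to an equality between $\delta(Y)$ and a $2$-value forced by the restriction, which is exactly what is needed.
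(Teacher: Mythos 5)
Your reduction of the problem to showing that $\delta$ and $\delta_{\mathcal N}$ support exactly the same tricycles (with the same hybrid leaf) is a sound idea: since both maps satisfy Property (P1) --- $\delta$ by hypothesis, $\delta_{\mathcal N}$ by Lemma~\ref{gencond} --- and agree on ${X\choose \leq 2}$, Table~\ref{tabtri} does show that the value on a $3$-set is determined by the three pair-values together with the knowledge of whether the set carries a tricycle and, if so, which element is its hybrid. This is essentially the same skeleton as the paper's argument, which fixes $Z\in{X\choose 3}$ and splits into the fork/triplet/tricycle cases according to the $\delta$-trinet on $Z$.

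However, the core of your argument --- the equality of the two tricycle sets --- is not actually established: you flag the ``reads off'' step yourself as the main obstacle, and the tools you propose to close it with are not available under the stated hypotheses. The proposition assumes only that $\delta$ satisfies (P1) and that \textsc{Network-Popping} terminates with output $\mathcal N$; it does \emph{not} assume that $\delta$ is level-1 representable. Lemma~\ref{lmtri}, Proposition~\ref{cocy} and Properties (P2) and (P4) are stated for (or derived from) level-1 representable dissimilarities, so they may be applied to $\delta_{\mathcal N}$ but not to $\delta$. In particular, your converse direction silently assumes that every $\delta$-tricycle of a component $K_i$ has its two non-hybrid leaves placed on different sides of $C_i$, i.e.\ that they land in $S'_y\cup S'_z$ for the tricycle chosen in line 2 of \textsc{Build-Cycle}; for a general $\delta$ this is exactly what Lemma~\ref{lmtri} would provide and what you cannot invoke here. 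The paper circumvents all of this by arguing per $3$-set and by contradiction: for instance, in the triplet case, if $\delta_{\mathcal N}(Z)\neq\delta(Z)$ then $Z$ must carry a $\delta_{\mathcal N}$-tricycle realized by a cycle $C$ of $N$, and since that cycle was actually built by \textsc{Build-Cycle}, the label-consistency tests in its lines 6 and 7 --- which are conditions on $\delta$ and must have passed for $\mathcal N$ to be returned at all --- force $t(r(C))=\delta(Z)$, giving the contradiction $m\neq\delta_{\mathcal N}(Z)=t(r(C))=\delta(Z)=m$. If you wish to keep the tricycle-correspondence route, you must derive the required combinatorial facts about $\delta$ from the checks the algorithms perform (lines 6, 10, 13 and 20 of \textsc{Build-Cycle} and line 6 of \textsc{Network-Popping}) rather than from lemmas whose hypotheses you do not have.
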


\begin{proof}
Put $\mathcal N=(N,t)$. Also, put $\delta'=\delta|_{X\choose \leq 2}$
and $\delta_{\mathcal N}'=\delta_{\mathcal N}|_{X\choose \leq 2}$.
Clearly, if $\delta$ and $\delta_{\mathcal N}$ coincide then
$\delta'=\delta_{\mathcal N}'$
must hold.

Conversely, assume that 
$\delta'=\delta_{\mathcal N}'$.
Let $Z=\{a,b,c\} \in {X \choose 3}$ and put $m=\delta(Z)$.
Note that since $\mathcal N$ is clearly
a level-1 representation of $\delta_{\mathcal N}$,
Lemma~\ref{gencond} implies  that $\delta_{\mathcal N}$
also satisfies Property (P1). Further note that,
up to permuting the elements in $Z$, we either have 
(i) a $\delta$-fork on $Z$, (ii) $a|bc$ is a $\delta$-triplet,
or (iii) $a||bc$ is a $\delta$-tricycle.

If Case (i) holds then $\delta(a,b)=\delta(a,c)=\delta(b,c)=m$.
Since, by assumption,
$\delta(Y)=\delta_{\mathcal N}(Y)$ for all $Y \in {X \choose 2}$, 
we also have $\delta_{\mathcal N}(a,b)=\delta_{\mathcal N}(a,c)
=\delta_{\mathcal N}(b,c)=m$. Hence, 
$\delta_{\mathcal N}(Z)=m=\delta(Z)$ as $\delta$ satisfies
Property~(P1).

If Case (ii) holds then
$m=\delta(a,b)=\delta(a,c) \neq \delta(b,c)$. Assume for
contradiction that $\delta_{\mathcal N}(Z) \neq m$. Then, since 
$\delta_{\mathcal N}$ satisfies Property (P1) it follows that 
$\delta_{\mathcal N}(Z)=\delta_{\mathcal N}(b,c)$. By Table~\ref{tabtri},
$a||bc$ must be a $\delta_{\mathcal N}$-tricycle. Hence,
there must exist a cycle $C$ in $N$ such that $a \in H(C)$,
$b$ and $c$ are contained in $R(C)$ but
lie on different sides of $C$, and
$t(r(C))=\delta_{\mathcal N}(Z)$. Since
algorithm \textsc{Network-Popping} completes by returning $\mathcal N$
it follows that $C$ is constructed in the while-loop starting in line 16
of algorithm {\sc Build-Cycle}. But then the condition in line 6 of
\textsc{Build-Cycle} has to be satisfied which implies that
$t(r(C))=\delta(Z)$ in view of line 7 of that algorithm. Hence, 
$m\not=\delta_{\mathcal N}(Z)=t(r(C))=\delta(Z)=m$ which is impossible.

If Case (iii) holds then the while-loop initiated in line 16 of
algorithm \textsc{Build-Cycle}
implies that there must exist a cycle $C$ in $N$ such that
$t(r(C))=\delta(Z)=m$.
Since
$\mathcal N$ is returned by algorithm \textsc{Network-Popping}
when given $\delta$ and $\mathcal N$ is clearly a
level-1 representation for $\delta_{\mathcal N}$ it follows that
$\delta_{\mathcal N}(Z)=t(r(C))=m=\delta(Z)$.
\end{proof}

As a first result concerning algorithm \textsc{Network-Popping}, we have

\begin{proposition}\label{prop:delta-and-delta-N}
  Suppose $\delta$ is a symbolic 3-dissimilarity on $X$, and
  \textsc{Network-Popping} applied to $\delta$ returns a labelled
  level-1 network $\mathcal N$. Then $\delta=\delta_{\mathcal N}$. In particular,
  $\mathcal N$ is a level-1 representation for $\delta$.
  \end{proposition}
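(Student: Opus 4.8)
The plan is to use Proposition~\ref{2imp3} to reduce the claim to the statement that $\delta$ and $\delta_{\mathcal N}$ agree on $2$-subsets, and then to verify that agreement by walking through the construction of $\mathcal N$ by \textsc{Network-Popping}. First I would note that, since \textsc{Network-Popping} returns a network rather than ``$\delta$ is not level-1 representable'', the call to \textsc{Find-Cycles} did not abort, so $\delta$ satisfies Property~(P1); moreover the variable $r$ is never reset to $0$, so every conditional test was passed (in particular the Property~(P5) test in Line~6 of \textsc{Network-Popping} and the tests inside \textsc{Build-Cycle}), and the while-loop of \textsc{Network-Popping} halted because $N$ has no leaf with a label set of size at least two, i.e.\ $L(N)=X$. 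By Proposition~\ref{2imp3} it then suffices to prove $\delta(x,y)=\delta_{\mathcal N}(x,y)$ for every $\{x,y\}\in{X\choose 2}$.

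Now fix such a pair $\{x,y\}$. Throughout the run of \textsc{Network-Popping} the elements $x$ and $y$ lie in the label set of a common leaf, until exactly one iteration of the while-loop replaces a leaf $l$ with $\{x,y\}\subseteq V_l$ by a structure $M$: either (A) $M=(C_i,t_i)$, the labelled simple level-1 network returned by \textsc{Build-Cycle} for a pair $(H_i,R'_i)$ with $V_l=R_i=R(C_i)$, or (B) $M=(T,t)$, the labelled tree returned by \textsc{Vertex-Growing}. In both cases the leaves of $M$ are labelled by the parts of a partition of $V_l$, and $x$ and $y$ land in distinct parts $A_x$ and $A_y$. Set $w:=lca_M(A_x,A_y)$. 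Since $w$ is an interior vertex of $M$ it is not replaced in any subsequent iteration, and because each later replacement only substitutes a rooted subDAG for a leaf, a routine compositionality argument for offspring sets shows that $x,y\in\mathcal F(w)$ in $N$ while no child of $w$ has both $x$ and $y$ below it; hence $lca_N(x,y)=w$ and $\delta_{\mathcal N}(x,y)=t'(w)$, where $t'$ restricts to $t_i$ on the vertices coming from $C_i$ and to the corresponding tree-labelling on the vertices coming from a \textsc{Vertex-Growing} tree. So $\delta_{\mathcal N}(x,y)$ equals $t_i(w)$ in case (A) and $t(w)$ in case (B).

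In case (B), $\mathcal T$ is a discriminating symbolic representation of the symbolic ultrametric $\hat\delta$ induced by $\delta$ on the partition of $V_l$ (cf.\ Proposition~\ref{f3tsu} and \textsc{Vertex-Growing}), so $t(w)=\hat\delta(A_x,A_y)$, and $\hat\delta(A_x,A_y)=\delta(x,y)$ directly from the definition of $\hat\delta$; here $\hat\delta$ is well defined because, by Property~(P5), any two members of a single part, being linked by a chain of sets $R\in\mathcal R$ each disjoint from the other part, have equal $\delta$-distance to any fixed element of the other part. In case (A), $w$ is either the root $r(C_i)$ of the unique cycle of $C_i$ or an interior vertex on one of its two sides; locating $w$ with the help of Lemma~\ref{tripcyc} and reading off which elements lie below which child of $w$, one checks that the label $t_i(w)$ --- assigned in Line~7 if $w=r(C_i)$ and in Line~22 of \textsc{Build-Cycle} otherwise --- equals $\delta(x',y')$ for all $x'$ below one child of $w$ and all $y'$ below another; this is exactly what the tests in Lines~10 and~20 of \textsc{Build-Cycle}, together with the definitions of $S_y$ and $S_z$ in Lines~8--9, guarantee. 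In particular $t_i(w)=\delta(x,y)$. Combining the two cases gives $\delta_{\mathcal N}(x,y)=\delta(x,y)$ for every pair, whence $\delta=\delta_{\mathcal N}$ by Proposition~\ref{2imp3}; the final assertion then follows at once from the definition of a level-1 representation.

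I expect the main obstacle to be case (A): one has to verify carefully that the vertex labels produced inside \textsc{Build-Cycle} recover the correct $2$-dissimilarity on all pairs of leaves of $C_i$, which means bookkeeping precisely which of the algorithm's conditional checks supplies each required equality and performing a short case split on the position of $w$ on the cycle. By comparison, case (B) and the passage through Proposition~\ref{2imp3} are routine.
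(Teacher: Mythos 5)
Your proposal is correct and follows essentially the same route as the paper's own proof: reduce to agreement on pairs via Proposition~\ref{2imp3}, then split according to whether the pair is first separated by a cycle built by \textsc{Build-Cycle} (checking Lines~7, 10, 20 and 22) or by a tree vertex from \textsc{Vertex-Growing} (checking Line~6/12 of \textsc{Network-Popping} and the definition of $\hat\delta$). The paper phrases the case split as the existence of a cycle $C$ with $v_C(a)\neq v_C(b)$ rather than by tracking the separating iteration, but the two formulations coincide and the supporting details are the same.
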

\begin{proof}
  Put $\mathcal N=(N,t)$.
In view of Proposition~\ref{2imp3}, it suffices to 
show that $\delta(a,b)=\delta_{\mathcal N}(a,b)$ holds
for all $a,b \in X$ distinct.
Let $a$ and $b$ denote two such elements.
We distinguish between the cases that either
 (i) there exists a cycle $C$ of $N$ 
such that $v_{C}(a) \neq v_{C}(b)$, or 
(ii) that no such cycle exists.

Assume first that Case (i) holds. Then $a$ and $b$
lie either on the same side of $C$, or 
one of $a$ and $b$ is below the hybrid $h(C)$ of $C$ and
the other lies on the side of $C$, or $a$ and $b$ lie on different sides of $C$.
If $a$ and $b$ lie on the same side of $C$ or one of them
is below $h(C)$ then we may assume
without loss of generality that there exists a directed 
path in $C$ from $v_{C}(a)$ to $v_{C}(b)$. Then line 22 of
algorithm \textsc{Build-Cycle} implies
$t(v_{C}(a))=\delta(a,b)$. Since 
$lca(a,b)=v_{C}(a)$, it follows that 
$\delta_{\mathcal N}(a,b)=t(v_{C}(a))=\delta(a,b)$, as
required.

If $a$ and $b$ lie on different sides of $C$ then
$x||ab$ is a $\delta$-tricycle, for $x$ as in line 2 of
algorithm \textsc{Build-Cycle}. Since
that algorithm completes, it's line 7 
implies $\delta(a,b)=t(r(C))$.
But then
$\delta_{\mathcal N}(a,b)=t(r(C))=\delta(a,b)$, as $\mathcal N$ is
returned by \textsc{Network-Popping}.

For the remainder, assume that Case (ii) holds, that is,
there exists no cycle $C$ of $N$ 
such that $v_{C}(a) \neq v_{C}(b)$.
Consider the vertex $v_0\in V(N)$ defined as follows: 
if the path from the root $\rho_N$ of
$\mathcal N$ to $lca(a,b)$ does not contain
a vertex that is also contained in a cycle of $N$,
then put $v_0=\rho_N$.
Otherwise let $v_0$ denote the last vertex on a directed path from
$\rho_N$ to $lca(a,b)$ such that $v_0$
belongs to a cycle $Z$ of $N$.
Note that $v_0=lca(a,b)$ holds if $lca(a,b)$
is also contained in $Z$.
Put $V=\mathcal F(v_1)$ where $v_1$ is the unique child of $v_0$
not contained in $Z$, and
let $V'$ denote the partition of $V$
induced by $\sim_{(V,\mathcal S_{v_0})}$
where for any vertex $w\in V(N)$ the set
$\mathcal S_w$ is as defined as in line 12 of algorithm
\textsc{Network-Popping}. 
Let $R_a, R_b\in V'$ such that $a\in R_a$ 
and $b\in R_b$. Then line 5 of 
\textsc{Network-Popping}
implies
$\hat{\delta_{\mathcal N}}(R_a,R_b)=\delta_{\mathcal N}(a,b)$
and $\hat{\delta}(R_a,R_b)=\delta(a,b)$.
Since $\mathcal N$ is returned by 
\textsc{Network-Popping} when given $\delta$,  line 12 of that algorithm
implies 
$\hat{\delta}(R_a,R_b)=\hat{\delta_{\mathcal N}}(R_a,R_b)$.
 Consequently,
$\delta_{\mathcal N}(a,b)=\delta(a,b)$ holds in this case too.
\end{proof}


We conclude this section with remarking that 
the runtime of algorithm \textsc{Network-Popping} is polynomial.
The reasons for this are that  \textsc{Network-Popping}
basically works by comparing $\delta$-trinets 
and $\delta$-values on
subsets on $X$ of size two, and that the number of such
trinets and subsets is polynomial in the size of $X$.

\section{Uniqueness of level-1 representations returned by
  \textsc{Network-Popping}}
\label{sec:uniqueness}

As is easy to see, there exist symbolic 3-dissimilarities that
although they satisfy Properties (P1) - (P6) they are not level-1
representable. The reason for this is that such
3-dissimilarities need not satisfy the assumptions
of lines 10 and 20 in algorithm {\sc Build-Cycle}.
A careful analysis of that algorithm
suggests however two further properties for a symbolic 3-dissimilarity
to be level-1 representable. To state them, we next associate
to a symbolic 3-dissimilarity its CheckLabels graph.

Suppose  $Y_0$, $Y_1$, and  $Y_2$ are three 
pairwise disjoint subsets of $X$ such that for all $x,x' \in Y_0$ and all
$i=1,2$, the graphs 
$TD(Y_i,x)$ and $TD(Y_i,x')$ are isomorphic (which is motivated
by Property (P6)).
Then we denote by $CL(Y_0,Y_1,Y_2)$
the \emph{CheckLabels graph} associated to
$\delta$, $Y_0$, $Y_1$, and $Y_2$ defined as follows. The vertex set of 
$CL(Y_0,Y_1,Y_2)$ is $Y_0 \cup Y_1 \cup Y_2$. Any pair 
$(u,v) \in Y_1 \times Y_2$ is joined by an (undirected) edge $\{u,v\}$, any 
pair $(u,v) \in (Y_1 \cup Y_2) \times Y_0$ is joined by a directed edge 
$(u,v)$, and two elements $u,v \in Y_i$, $i=1,2$, are 
joined by a directed edge $(u,v)$  if there exists a direct path from 
$u$ to $v$ in $TD(Y_i,x)$. Finally, to each 
edge  of $CL(Y_0,Y_1,Y_2)$ with end vertices  $u$ and $v$ or 
directed edge of that graph with tail $u$ and head $v$, we assign the
label $\delta(u,v)$. We illustrate the CheckLabels graph  in
Fig.~\ref{fig:td-graph}(b) for the network $\mathcal N_1$
depicted in Fig.~\ref{exnet}(i).
%


Using the terminology of
algorithm \textsc{Build-Cycle} it is straight-forward to
observe that the following two properties are implied by
\textsc{Build-Cycle}'s lines 10 and 20 whenever its input
  symbolic 3-dissimilarity is level-1 representable:
\begin{enumerate}
\item[\emph{(P7)}] All undirected edges of $CL(H, S_y, S_z)$ 
have the same label;
\item[\emph{(P8)}] For all vertices $u$ of $CL(H, S_y, S_z)$,
  all directed edges in $CL(H, S_y, S_z)$ with tail $u$ have the same label.
 \end{enumerate}

As indicated in Table~\ref{tab:independence},
Properties (P1) - (P8) are independent of each other.
Moreover, they
allow us to characterize level-1 representable symbolic 3-dissimilarities.

\begin{table}[h]
\begin{tabular}{|c|c|c|c|}
\hline
\textsc{Prop.} & $X$ & $M$ & $\delta$ \\
\hline
\hline
\emph{(P1)} & $\{x,y,z\}$ & $\{D,S\}$ &
$\delta(x,y)=\delta(x,z)=\delta(y,z)=D$; \\
&&& $\delta(x,y,z)=S$. \\
\hline
\emph{(P2)} & $\{x,y,z,u\}$ & $\{D,S\}$ &
$\delta(x,y,z)=\delta(y,z,u)=\delta(x,y)=\delta(y,z)=\delta(z,u)=D$;\\
&&& $\delta(Y)=S$ otherwise. \\
\hline
\emph{(P3)} & $\{x_1,x_2,y,z\}$ & $\{D,S_1,S_2\}$ &
$\delta(x_i,y,z)=S_i$, $i \in \{1,2\}$;\\
&&& $\delta(Y)=D$ otherwise. \\
\hline
\emph{(P4)} & $\{x,y,z,u\}$ & $\{D,S\}$ & $\delta(x,y,u)=
\delta(x,u)=\delta(y,z)=\delta(x,y,z)=D$;\\
&&& $\delta(Y)=S$ otherwise. \\
\hline
\emph{(P5)} & $\{1,\ldots,5\}$ & $\{D,S\}$ & $\delta(1,4)=S$; \\
&&& $\delta(Y)=\delta_{\mathcal N_5}(Y)$ otherwise.\\
\hline
\emph{(P6)} & $\{1,\ldots,6\}$ & $\{D,S\}$ & $\delta(3,6)=\delta(2,3,6)=D$;\\
&&& $\delta(Y)=\delta_{\mathcal N_6}(Y)$ otherwise.\\
\hline
\emph{(P7)} & $\{1,\ldots,5\}$ & $\{D,S\}$ &
$\delta(2,4)=\delta(2,3,4)=\delta(1,2,4)=\delta(2,4,5)=S$;\\
&&& $\delta(Y)=\delta_{\mathcal N_7}(Y)$ otherwise.\\
\hline
\emph{(P8)} & $\{1,\ldots,5\}$ & $\{D,S\}$ &
$\delta(3,5)=\delta(3,4,5)=D$;\\
&&& $\delta(Y)=\delta_{\mathcal N_8}(Y)$ otherwise.\\
\hline
\end{tabular}
\caption{\label{tab:independence}
  For sets $X$ and $M$ and $\delta$ a symbolic 3-dissimilarity on $X$
  as indicated, the property stated in the first column of each row
  holds whereas the remaining seven properties do not. For $i=5,6,7,8$,
  the networks $\mathcal N_i$ are depicted in Fig.~\ref{fig:n5678}.
}
\end{table}
\begin{figure}[h]
\begin{center}
\includegraphics[scale=0.6]{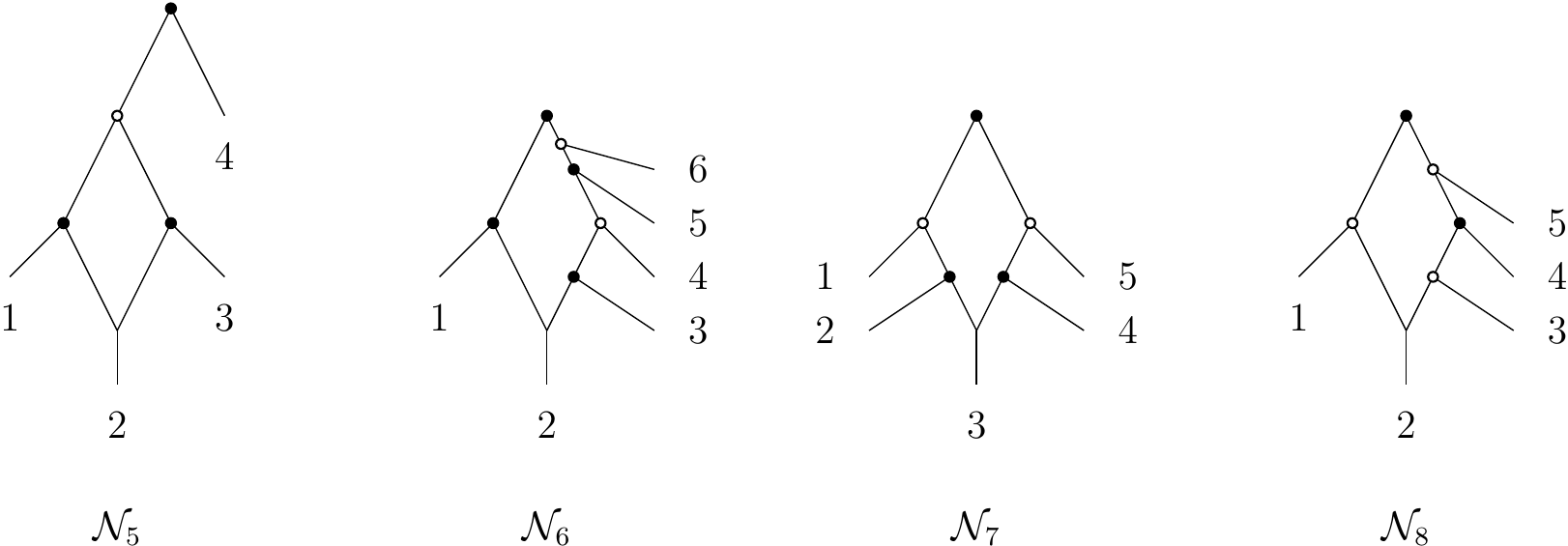}
\caption{\label{fig:n5678} The networks $\mathcal N_i$, $i=5,6,7,8$,
  considered in Table~\ref{tab:independence}}
\end{center}
\end{figure}

\begin{theorem}\label{theo:char}
Let $\delta$ be a symbolic 3-dissimilarity on $X$. Then the 
following statements are equivalent (where in (iii)-(v) the input to
algorithm \textsc{Network-Popping} is $\delta$):\\
(i) $\delta$ is level-1 representable.\\
(ii) $\delta$ satisfies conditions (P1) - (P8).\\
(iii) \textsc{Network-Popping} returns a
labelled level-1 network  which is unique up to isomorphism.\\
(iv) \textsc{Network-Popping} returns a 
level-1 representation for $\delta$.\\
(v) \textsc{Network-Popping} returns a 
semi-discriminating level-1 representation for
$\delta$.
\end{theorem}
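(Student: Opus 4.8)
The plan is to prove the cycle of implications $(i)\Rightarrow(ii)\Rightarrow(v)\Rightarrow(iv)\Rightarrow(i)$ together with $(iii)\Leftrightarrow(iv)$, leaning heavily on the analysis of \textsc{Network-Popping}, \textsc{Build-Cycle} and \textsc{Find-Cycles} already carried out. For $(i)\Rightarrow(ii)$, assume $\delta$ is level-1 representable by some $\mathcal N=(N,t)$. Properties (P1) and (P2) follow from Lemma~\ref{gencond}; (P3) follows from Proposition~\ref{cocy}. For (P4)--(P8), I would run \textsc{Find-Cycles} on $\delta$, take a pair $(H,R')$ it returns, and invoke the injection from Proposition~\ref{cocy} to identify the cycle $C$ of $N$ with $H=H(C)$; then, using Lemma~\ref{lmtri} and Lemma~\ref{tripcyc} to pin down which leaves of $N$ lie below $r(C)$ and on which side, one reads off the sets $S_y,S_z$ of \textsc{Build-Cycle} as exactly the leaves of $R(C)$ separated from $r(C)$ by a differently-labelled vertex on each side. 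Properties (P4)--(P6) are then the already-noted consequences of the structure of a level-1 network (the passage preceding their statement), and (P7), (P8) are the conditions checked in lines 10 and 20 of \textsc{Build-Cycle}, which must hold because the genuine labelling $t$ is consistent on each directed path inside $C$. I would also note that $\hat{\delta}$ is a symbolic ultrametric on the relevant partition by Proposition~\ref{f3tsu}, so the recursive calls to \textsc{Vertex-Growing} (via \textsc{Bottom-Up}) never fail.

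**The implication $(ii)\Rightarrow(v)$.** Here I would argue that if $\delta$ satisfies (P1)--(P8) then \textsc{Network-Popping} runs to completion without ever setting $r=0$, and its output is a semi-discriminating level-1 representation. By (P1), \textsc{Find-Cycles} proceeds past line 1; by (P3)--(P4), the compatibility test in line 6 of \textsc{Build-Cycle} succeeds; by (P5), line 10 of \textsc{Build-Cycle} and line 6 of \textsc{Network-Popping} succeed; by (P2) and Proposition~\ref{cytdx}, the TopDown graphs are acyclic so line 13 of \textsc{Build-Cycle} is entered; by (P6), the isomorphism condition in line 13 holds; and by (P7), (P8), the label-consistency tests in lines 20 of \textsc{Build-Cycle} (and the analogous checks feeding \textsc{Vertex-Growing}) succeed. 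Hence each \textsc{Build-Cycle} call returns a labelled simple level-1 network and each \textsc{Vertex-Growing} call returns a discriminating symbolic representation (the latter because $\hat\delta$ is a symbolic ultrametric, by Theorem~\ref{bdtr}, once (U1), (U2) are verified from the eight properties). Termination of the whole procedure follows from Proposition~\ref{prop:terminate} together with the fact that each iteration of the outer while-loop in \textsc{Network-Popping} strictly decreases the number of leaves with label sets of size $\ge 2$. That \textsc{Network-Popping} actually returns something (rather than ``not level-1 representable'') then gives $\mathcal N$; semi-discriminating-ness is checked directly from the construction, since the only repeated labels along an edge $(u,v)$ of $N$ that \textsc{Build-Cycle}/\textsc{Vertex-Growing} can produce arise where a cycle meets a non-cycle vertex.

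**The remaining implications.** $(v)\Rightarrow(iv)$ is trivial. $(iv)\Rightarrow(i)$ is immediate since a level-1 representation of $\delta$ witnesses level-1 representability; more precisely, whenever \textsc{Network-Popping} returns a labelled level-1 network $\mathcal N$, Proposition~\ref{prop:delta-and-delta-N} gives $\delta=\delta_{\mathcal N}$, so (iv) and even the bare statement ``\textsc{Network-Popping} returns a labelled level-1 network'' already imply (i). This closes the main cycle $(i)\Rightarrow(ii)\Rightarrow(v)\Rightarrow(iv)\Rightarrow(i)$. Finally, $(iii)\Rightarrow(iv)$ is again via Proposition~\ref{prop:delta-and-delta-N}, and $(iv)\Rightarrow(iii)$ requires showing the output is unique up to isomorphism: I would argue that the pair $(H_i,R_i')$ and the resulting cycle structure are forced by the connected components of $\mathcal C(\delta)$ (Proposition~\ref{cocy}), the side-partition within each cycle is forced by the TopDown graphs (Lemma~\ref{tripcyc}), the labels on cycle vertices are forced by the $\delta$-values (Table~\ref{tabtri}), and the tree parts are forced by uniqueness of discriminating symbolic representations of symbolic ultrametrics (Theorem~\ref{bdtr}); since none of these depends on the choice made in line 2 of \textsc{Build-Cycle} (as remarked after that algorithm), the output is well-defined up to isomorphism.

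**Main obstacle.** I expect the delicate part to be the direction $(i)\Rightarrow(ii)$ for the ``interface'' properties (P5)--(P8): one must translate the global geometry of a level-1 network $N$ into statements about the purely combinatorial objects $S_y$, $S_z$, $TD$ and $CL$ that \textsc{Build-Cycle} manipulates, and in particular handle the subtlety flagged in the text after Lemma~\ref{tripcyc} — that a leaf $z\in R(C)-H(C)$ all of whose ancestors on the path from $r(C)$ carry the label $t(r(C))$ is invisible to the $\delta$-trinets and need not even end up in $R(C)$ in \textsc{Network-Popping}'s reconstruction. Making the correspondence between the cycle $C$ of $N$ and the pair $(H,R')$ precise enough that (P5)--(P8) become routine checks, while correctly accounting for such ``hidden'' leaves, is where the real work lies; everything else is bookkeeping over the pseudo-code.
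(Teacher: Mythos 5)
Your proposal is correct and follows essentially the same route as the paper: it invokes the same supporting results (Lemma~\ref{gencond}, Proposition~\ref{cocy}, the observations preceding Proposition~\ref{cytdx} and Table~\ref{tab:independence} for (i)$\Rightarrow$(ii), Propositions~\ref{2imp3} and~\ref{prop:delta-and-delta-N} plus the no-tie-breaking remark for uniqueness, Theorem~\ref{bdtr} and Proposition~\ref{f3tsu} for the tree parts, and the discriminating outputs of \textsc{Build-Cycle}/\textsc{Vertex-Growing} for semi-discriminatingness). The only differences are the cosmetic permutation of the implication cycle (the paper proves (i)$\Rightarrow$(ii)$\Rightarrow$(iii)$\Rightarrow$(iv)$\Rightarrow$(v)$\Rightarrow$(i)) and a harmless mis-attribution of which property guards which line of \textsc{Build-Cycle} (line 10 checks (P4)/(P7) and line 20 checks (P4)/(P8), rather than (P5), which is checked in line 6 of \textsc{Network-Popping}); since all eight properties are assumed in (ii), this does not affect the argument.
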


\begin{proof}
  (i) $\Rightarrow$ (ii): This is an immediate consequence of
  Lemma~\ref{gencond}, Proposition~\ref{cocy}, the remark 
  preceding Proposition~\ref{cytdx} and the observation preceding
  Table~\ref{tab:independence}.

  (ii) $\Rightarrow$ (iii): Assume that $\delta$  satisfies
  Properties (P1) - (P8). Then
algorithm \textsc{Find-Cycles} first constructs the graph
$\mathcal C(\delta)$ and then finds for each connected component $K$ of
$\mathcal C(\delta)$ the pair $(H_K,R_K')$. Since algorithm
\textsc{Build-Cycles} relies on Properties (P3), (P4), (P6) - (P8)
being satisfied, it follows that \textsc{Build-Cycles} constructs for
each pair $(H_K,R_K')$, $K$ a connected component of $\mathcal C(\delta)$, a
labelled simple level-1 network as specified in the
output of \textsc{Build-Cycles}. By construction, the labelled DAG
$\mathcal N=(N,t)$
returned by algorithm \textsc{Network-Popping} is clearly a labelled
phylogenetic network. 
Since, in view of the while loop of that
algorithm starting at line 7, no two cycles in $N$ can share a vertex
it follows that $N$ is in fact a level-1 network. Proposition~\ref{2imp3}
combined with the observation that
in none of our four algorithms we have to break a tie
implies  that $\mathcal N$ is
unique up to isomorphism.

(iii) $\Rightarrow$ (iv):
This is trivial in view  of  Proposition~\ref{prop:delta-and-delta-N}.

(iv) $\Rightarrow$ (v): Suppose  algorithm \textsc{Network-Popping}
returns a 
level-1 representation $\mathcal N$ for $\delta$.
To see that $\mathcal N$ is in fact semi-discriminating,
note that algorithms \textsc{Vertex-Growing} and 
\textsc{Build-Cycles} return a
discriminating symbolic representation and a
discriminating level-1 representation for its input symbolic 3-dissimilarity,
respectively. In combination
it follows that $\mathcal N$ must be semi-discriminating.

(v) $\Rightarrow$ (i): This is trivial.
\end{proof}

As suggested by the two semi-discriminating level-1 representations
$\mathcal N_1$ and $\mathcal N_3$
for $\delta_{\mathcal N_1}$ depicted in Fig.~\ref{exnet}(i) and (ii),
the output of  algorithm \textsc{Network Popping}
when given a level-1 representable symbolic 3-dissimilarity $\delta$
need not be the labelled level-1 network that
induced $\delta$. To help clarify the relationship between both networks,
we require further terminology.

Suppose that $(N,t)$ is a labelled level-1 network.
Then we say that a cycle $C$ of $N$ is 
\emph{weakly labelled} if there exists at least one vertex $v$ 
on either side of $C$ such that $t(v) \neq t(r(\mathcal C))$.
More generally, we call a labelled level-1 network $(N,t)$
{\em weakly labelled} if every cycle of $N$ is weakly labelled. 
For example,
the labelled level-1 network $\mathcal N_2$ pictured in
Fig.~\ref{exnet}(ii) 
is weakly labelled (but not semi-discriminating) whereas the
network $\mathcal N_3$ depicted in Fig.~\ref{exnet}(iii) 
is semi-discriminating but not weakly labelled. 

Armed with this definition, we can characterize
weakly labelled cycles as follows.

\begin{lemma}\label{ptric}
Let $\mathcal N=(N,t)$ be a labelled level-1 network, 
and let $C$ be a cycle of $N$. Then $C$ is weakly 
labelled if and only if there exists some 
$x\in H(C)$ and leaves $y,z\in R(C)-H(C)$ that lie
 on different sides of $C$ such that  $x|| yz$ is a 
$\delta_{\mathcal N}$-tricycle. Moreover, $x' || yz$ is a
 $\delta_{\mathcal N}$- tricycle, for all $x' \in H(C)$.
\end{lemma}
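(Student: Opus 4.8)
\textbf{Proof strategy for Lemma~\ref{ptric}.}

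The plan is to translate the condition ``$C$ is weakly labelled'' into a statement about the labels $t(v_C(y))$, $t(v_C(z))$ for leaves $y,z$ on the two sides of $C$, and then read off the existence of a $\delta_{\mathcal N}$-tricycle directly from Table~\ref{tabtri}. First I would fix notation: write $P_1,P_2$ for the two sides of $C$, so that $r(C)$ and $h(C)$ are the common endpoints. Recall that for any $u \in R(C)-H(C)$ the value $\delta_{\mathcal N}(x,u)$ with $x \in H(C)$ equals $t(v_C(u))$, since $lca(x,u)=v_C(u)$; and that for $y,z$ on \emph{different} sides of $C$ one has $lca(x,y,z)=lca(y,z)=r(C)$, hence $\delta_{\mathcal N}(x,y,z)=\delta_{\mathcal N}(y,z)=t(r(C))$.

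For the ``if'' direction, suppose $x\in H(C)$ and $y,z\in R(C)-H(C)$ lie on different sides of $C$ with $x||yz$ a $\delta_{\mathcal N}$-tricycle. By Table~\ref{tabtri}, a tricycle $x||yz$ forces $\delta_{\mathcal N}(y,z)\neq\delta_{\mathcal N}(x,y)$ and $\delta_{\mathcal N}(y,z)\neq\delta_{\mathcal N}(x,z)$, i.e. $t(r(C))\neq t(v_C(y))$ and $t(r(C))\neq t(v_C(z))$. Since $v_C(y)$ lies on one side of $C$ and $v_C(z)$ on the other, this exhibits a vertex on each side of $C$ whose $t$-label differs from $t(r(C))$, so $C$ is weakly labelled. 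Conversely, if $C$ is weakly labelled, pick $v\in P_1$ and $w\in P_2$ interior (i.e. distinct from $r(C)$ and, if one likes, with $v,w$ as close to $h(C)$ as needed) with $t(v)\neq t(r(C))$ and $t(w)\neq t(r(C))$. I would then choose leaves $y,z$ with $v_C(y)=v$ and $v_C(z)=w$ (such leaves exist because $N$ is a phylogenetic network, so no vertex has indegree and outdegree one; a vertex on a side of $C$ with $t$-label different from $t(r(C))$ has a pendant subtree, hence an offspring leaf not in $H(C)$ whose last ancestor in $C$ is that vertex), and any $x\in H(C)$. Then $\delta_{\mathcal N}(x,y)=t(v)\neq t(r(C))=\delta_{\mathcal N}(y,z)$ and likewise $\delta_{\mathcal N}(x,z)=t(w)\neq t(r(C))=\delta_{\mathcal N}(y,z)$, so $\{\delta_{\mathcal N}(x,y),\delta_{\mathcal N}(x,z),\delta_{\mathcal N}(y,z)\}$ has $\delta_{\mathcal N}(y,z)$ appearing as the ``odd one out''; by Table~\ref{tabtri} this is exactly the tricycle case $x||yz$, and the corresponding semi-discriminating trinet is a $\delta_{\mathcal N}$-tricycle. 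The ``moreover'' clause is immediate: the argument only used that $x\in H(C)$, and for any $x'\in H(C)$ we again have $\delta_{\mathcal N}(x',y)=t(v_C(y))$, $\delta_{\mathcal N}(x',z)=t(v_C(z))$ and $\delta_{\mathcal N}(x',y,z)=\delta_{\mathcal N}(y,z)=t(r(C))$ because $y,z$ lie on different sides of $C$, so the same table entry applies and $x'||yz$ is a $\delta_{\mathcal N}$-tricycle.

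The step I expect to be the main obstacle is the careful extraction, in the converse direction, of the two leaves $y,z$ witnessing the weak labelling: I need that a side vertex $v$ with $t(v)\neq t(r(C))$ actually has an offspring leaf $y\in R(C)-H(C)$ whose last ancestor in $C$ is exactly $v$ (not some vertex further down towards $h(C)$), and that I can do this simultaneously and independently on the two sides. This is a routine consequence of the structure of a level-1 network together with the no-indegree-and-outdegree-one condition in the definition of a phylogenetic network — a vertex $v$ on a side of $C$ that is not $r(C)$ or $h(C)$ has outdegree at least two, one child continuing along $C$ and at least one child leaving $C$, so the pendant subtree rooted at the latter child contributes leaves whose last ancestor in $C$ is $v$ — but it is the only place where one has to do more than quote Table~\ref{tabtri} and the $lca$ computations.
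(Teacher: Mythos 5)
Your proof is correct and follows essentially the same route as the paper's: both directions reduce to computing $\delta_{\mathcal N}(x,y)=t(v_C(y))$, $\delta_{\mathcal N}(x,z)=t(v_C(z))$ and $\delta_{\mathcal N}(y,z)=\delta_{\mathcal N}(x,y,z)=t(r(C))$ and reading off the tricycle case from Table~\ref{tabtri}, with the ``moreover'' clause following because the computation is uniform in $x\in H(C)$. The only difference is that you explicitly justify the existence of leaves $y,z$ with $v_C(y)=v$ and $v_C(z)=w$ via the no-indegree-one-outdegree-one condition, a detail the paper's proof takes for granted.
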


\begin{proof}
Put $\delta=\delta_{\mathcal N}$.
Assume first that there exists some $x \in H(C)$
and leaves $y,z\in R(C)-H(C)$ that lie on two different sides 
of $C$ such that $x||yz$ is a $\delta$-tricycle. Then 
$\delta(x,y,z)=\delta(z,y)=t(r(\mathcal C))$. Also 
$\delta(x,y)=t(v_{C}(y))$ and $\delta(x,z)=t(v_{C}(z))$. 
In view of Table~\ref{tabtri},
$\delta(x,y,z)\not\in \{\delta(x,y),\delta(x,z)\}$
and, so, $t(v_{C}(i)) \neq t(r(C))$, for $i=y,z$. 

Conversely, suppose $C$ is weakly labelled. 
Let $v_1,v_2\in V(C)$ denote two vertices of $N$ that lie on 
different directed paths from $r(C)$ to $h(C)$ such that
$t(r(C))\not\in\{t(v_1),t(v_2)\}$. Suppose $y,z\in X $ are such that
$v_C(y)=v_1$ and $v_C(z)=v_2$. Then
$x||yz$ must be a $\delta$-tricycle, for all $x \in H(C)$.
Indeed, $\delta(x,y)=t(v_1)$ and $\delta(x,z)=t(v_2)$ holds. Since 
$\delta(y,z)=\delta(x,y,z)=t(r(C)) \notin\{\delta(x,y),\delta(x,z)\}$, 
Table~\ref{tabtri} implies that 
$x||yz$ is a $\delta$-tricycle.

The remainder of the lemma follows from the fact that,
for all $x' \in H(C)$, we have $\delta(x',y,z)=\delta(x,y,z)$, 
$\delta(x,y)=\delta(x',y)$ and $\delta(x,z)=\delta(x',z)$.
\end{proof}

As a consequence, we can strengthen Proposition~\ref{cocy} to the
following characterization.

\begin{theorem}\label{cocy2}
If $\mathcal N=(N,t)$ is a labelled level-1 network, the connected 
components of $\mathcal C(\delta_{\mathcal N})$ are in 1-1 correspondence 
with the weakly labelled cycles of $N$.
\end{theorem}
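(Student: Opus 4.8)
The plan is to establish the claimed $1$--$1$ correspondence by exhibiting an explicit map from weakly labelled cycles of $N$ to connected components of $\mathcal C(\delta_{\mathcal N})$ and showing it is a well-defined bijection. Write $\delta=\delta_{\mathcal N}$. First I would use Lemma~\ref{ptric}: it tells us precisely that a cycle $C$ of $N$ is weakly labelled if and only if there is a $\delta$-tricycle $x\|yz$ with $x\in H(C)$ and $y,z\in R(C)-H(C)$ on different sides of $C$. So each weakly labelled cycle $C$ gives rise to at least one $\delta$-tricycle, hence at least one vertex of $\mathcal C(\delta)$; conversely, by Lemma~\ref{3tri} and the structure of level-1 networks, every $\delta$-tricycle $x\|yz$ arises from a unique cycle $C$ of $N$ (the one with $x\in H(C)$, $y,z\in R(C)-H(C)$ on different sides), and that cycle is weakly labelled by Lemma~\ref{ptric}. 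This already gives a well-defined surjection $\phi$ from the set of $\delta$-tricycles to the set of weakly labelled cycles of $N$.

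Next I would show that $\phi$ is constant on each connected component of $\mathcal C(\delta)$, i.e.\ that adjacent $\delta$-tricycles are sent to the same cycle. This is exactly Lemma~\ref{lmtri}: if $\tau=x\|yz$ and $\tau'$ share two leaves, then either $\tau'=x'\|yz$ with $x'\in H(C)$, or $\tau'=x\|x'z$ (resp.\ $x\|yx'$) with $x'\in R(C)$ on the same side as $y$ (resp.\ $z$); in every case the unique cycle associated to $\tau'$ is the same cycle $C$ associated to $\tau$. Hence $\phi$ descends to a map $\overline{\phi}$ from the connected components of $\mathcal C(\delta)$ to the weakly labelled cycles of $N$. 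Surjectivity of $\overline{\phi}$ is immediate from surjectivity of $\phi$ (any weakly labelled cycle has a witnessing $\delta$-tricycle, whose component maps onto it). This is essentially the canonical injective map already asserted in Proposition~\ref{cocy}, now seen to be a bijection onto the weakly labelled cycles rather than merely into all cycles.

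For injectivity of $\overline{\phi}$, I would argue that if two $\delta$-tricycles $\tau,\tau'$ are both associated (via $\phi$) to the same cycle $C$ of $N$, then they lie in the same connected component of $\mathcal C(\delta)$. The idea is to connect any two $\delta$-tricycles of $C$ through a path of pairwise-adjacent $\delta$-tricycles all supported by $C$: fix one witnessing tricycle $x\|yz$ for $C$ (with $y$ on side $P_1$, $z$ on side $P_2$), and observe that for any other $\delta$-tricycle $x''\|y''z''$ of $C$ one can successively swap one leaf at a time --- first match the hybrid-side leaf, then the two leaves on $P_1,P_2$ --- each intermediate object being a genuine $\delta$-tricycle of $C$ (using Lemma~\ref{ptric}'s ``moreover'' clause and Lemma~\ref{lmtri}) and each consecutive pair sharing two leaves, hence adjacent in $\mathcal C(\delta)$. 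The main obstacle here is the bookkeeping needed to guarantee that every intermediate triple really is a $\delta$-tricycle and not accidentally a $\delta$-fork or $\delta$-triplet --- this is where one must invoke Table~\ref{tabtri} together with the weak-labelling hypothesis (so that at least one leaf on each side has label $\neq t(r(C))$, keeping the relevant $3$-dissimilarity values genuinely tricyclic). Once this connectivity is in hand, $\tau$ and $\tau'$ lie in the same component, so $\overline{\phi}$ is injective, completing the proof that it is a bijection between connected components of $\mathcal C(\delta_{\mathcal N})$ and weakly labelled cycles of $N$.
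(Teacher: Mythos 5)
Your argument is correct and follows essentially the same route as the paper, which obtains the theorem by combining the injective map of Proposition~\ref{cocy} (components of $\mathcal C(\delta_{\mathcal N})$ into cycles of $N$) with Lemma~\ref{ptric}'s characterization of weakly labelled cycles as exactly those supporting a $\delta_{\mathcal N}$-tricycle; your leaf-swapping connectivity argument for injectivity is just an explicit spelling-out of what the paper attributes to Lemma~\ref{lmtri} and the ``conversely'' direction of Lemma~\ref{ptric}. No gaps.
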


Implied by Theorem~\ref{cocy2}, we have

\begin{corollary}\label{corcor}
Let $\delta$ be a level-1 representable symbolic 3-dissimilarity on $X$, 
and let $\mathcal N=(N,t)$ be the level-1 representation of $\delta$
returned by algorithm \textsc{Network-Popping} when applied to $\delta$.
Then $\mathcal N$ is weakly labelled if and only if, 
for any level-1 representation $\mathcal N'=(N',t')$ of 
$\delta$, the number of cycles in $N$ equals the number of 
weakly labelled cycles in $N'$.
In particular, 
the number of cycles in $N$ is minimal.
\end{corollary}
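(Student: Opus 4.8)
The plan is to deduce Corollary~\ref{corcor} from Theorem~\ref{cocy2} together with the injection/structure already established in Proposition~\ref{cocy} and Lemma~\ref{lmtri}. First I would record the two facts that make everything work: (a) since $\delta=\delta_{\mathcal N}$ by Proposition~\ref{prop:delta-and-delta-N}, applying Theorem~\ref{cocy2} to $\mathcal N$ shows that the number of connected components of $\mathcal C(\delta)$ equals the number of \emph{weakly labelled} cycles of $N$; and (b) the cycles of $N$ are constructed by algorithm \textsc{Build-Cycle} from the connected components of $\mathcal C(\delta)$ via the canonical injective map of Proposition~\ref{cocy}, so that in fact \emph{every} cycle of $N$ is weakly labelled (by Lemma~\ref{ptric}, since each such cycle arose from a genuine $\delta$-tricycle and hence has a vertex on each side whose label differs from that of the root). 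Consequently the number of cycles in $N$ equals the number of connected components of $\mathcal C(\delta)$, and moreover $\mathcal N$ is automatically weakly labelled — so the ``if'' direction of the stated equivalence is the one with content.

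Next I would run the argument for an arbitrary level-1 representation $\mathcal N'=(N',t')$ of $\delta$. Applying Theorem~\ref{cocy2} to $\mathcal N'$ gives that the number of connected components of $\mathcal C(\delta_{\mathcal N'})=\mathcal C(\delta)$ equals the number of weakly labelled cycles of $N'$. Combining this with the count for $N$ from the previous paragraph yields immediately that, for \emph{any} level-1 representation $\mathcal N'$ of $\delta$, the number of cycles of $N$ equals the number of weakly labelled cycles of $N'$. This is precisely the right-hand condition, and since $\mathcal N$ is weakly labelled, the biconditional holds with both sides true. So strictly speaking the equivalence in the corollary is a ``TRUE $\iff$ TRUE'' statement once Theorem~\ref{cocy2} is in hand; the substantive content is the identity ``(number of cycles of the \textsc{Network-Popping} output) $=$ (number of weakly labelled cycles of any representation)'', which I would state as the core computation.

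For the ``In particular'' clause, I would argue that the number of cycles in $N$ is minimal among all level-1 representations of $\delta$: for any representation $\mathcal N'$, the number of cycles of $N'$ is at least the number of its weakly labelled cycles, which by the displayed identity equals the number of cycles of $N$. Hence no level-1 representation of $\delta$ has fewer cycles than $N$, giving minimality.

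The main obstacle, I expect, is making precise that \emph{every} cycle of the network $\mathcal N$ returned by \textsc{Network-Popping} is weakly labelled — i.e.\ that the injective map of Proposition~\ref{cocy} is in this case a bijection onto the cycle set of $N$, not merely an injection. This requires tracing through algorithm \textsc{Build-Cycle}: a cycle $C$ of $N$ is created only from a connected component $K$ of $\mathcal C(\delta)$, and the $\delta$-tricycle chosen in line~2 of \textsc{Build-Cycle} certifies, via Table~\ref{tabtri} and Lemma~\ref{ptric}, that $C$ has on each of its two sides a vertex whose label differs from $t(r(C))$. One must also check that lines~11--33 of \textsc{Build-Cycle} genuinely place leaves on \emph{both} sides (so neither side is degenerate), which is where the counter ``rep$=2$'' in line~34 is used. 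Once this ``no spurious cycles are introduced'' point is nailed down, the rest is the bookkeeping of cardinalities outlined above.
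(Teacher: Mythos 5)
Your proposal is correct and follows exactly the route the paper intends: the paper offers no explicit proof beyond the phrase ``Implied by Theorem~\ref{cocy2}'', and your argument is the natural elaboration of that implication, combining Theorem~\ref{cocy2} (applied to both $\mathcal N$ and an arbitrary $\mathcal N'$, using $\delta=\delta_{\mathcal N}=\delta_{\mathcal N'}$ via Proposition~\ref{prop:delta-and-delta-N}) with the observation that every cycle produced by \textsc{Build-Cycle} carries the certifying $\delta$-tricycle from line~2 and is therefore weakly labelled by Lemma~\ref{ptric}. Your further remark that the biconditional is consequently of the form ``true $\iff$ true'' is a sound (and slightly sharper) reading of the statement, and the minimality clause follows as you describe.
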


\begin{algorithm}
\label{asdtc}
\caption{\textsc{Transform}}
\SetKw{KwSet}{set}
\KwIn{A labelled level-1 network $\mathcal N=(N,t)$ on $X$.}
\KwOut{A semi-discriminating, weakly labelled, partially resolved
  level-1 network
  $\mathcal N'=(N',t')$ such that $\delta_{\mathcal N}=\delta_{\mathcal N'}$.}
\BlankLine
\KwSet{$\mathcal N'=\mathcal N$}\;
\While{$\mathcal N'$ is not semi-discriminating or not weakly labelled
or not partially resolved}{
Collapse all edges $(u,v)$ satisfying $t'(u)=t'(v)$ and such that either $u$ and $v$ belong to the same cycle of $N'$ or do not belong to a cycle\;
\For{All vertices $v$ of a cycle $C$ of degree 4 or more}{
Define a new child $w$ of $v$\;
\KwSet{$t'(w)=t'(v)$}\;
\If{$v=r(C)$}{
Redefine the children of $v$ in $C$ as children of $w$\;
}
\Else{
Redefine the children of $v$ outside of $C$ as children of $w$\;
}
}
\For{All cycles $C$ of $N'$ such that $(r(C),h(C))$ is an edge of $N'$}{
Remove the edge $(r(C),h(C))$\;
}
Remove all vertices of degree 2\;
}
\end{algorithm}

\begin{corollary}
  \label{cor:uniqueness}
Suppose $\mathcal N$ is a labelled level-1 network 
and $\mathcal N'$ is the level-1 representation for
$\delta_{\mathcal N}$ returned by algorithm \textsc{Network-Popping}.
Then $\mathcal N'$ is isomorphic with the labelled level-1 network
returned by algorithm
\textsc{Transform} when given $\mathcal N$ as input. In particular,
$\mathcal N$ and $\mathcal N'$ are isomorphic
if and only if $\mathcal N$ is semi-discriminating,
weakly labelled, and partially resolved.
Furthermore, if $\delta$ is a level-1 representable symbolic 3-
dissimilarity, then there exists an unique representation of $\delta$ that is 
semi-discriminating, weakly labelled, and partially resolved.
\end{corollary}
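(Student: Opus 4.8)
The plan is to reduce everything to the single claim $(\star)$: \emph{if $\mathcal M=(M,t)$ is a labelled level-1 network that is semi-discriminating, weakly labelled and partially resolved, then \textsc{Network-Popping} applied to $\delta_{\mathcal M}$ returns a network isomorphic to $\mathcal M$.} Granting $(\star)$, the corollary follows quickly. First I would check that \textsc{Transform} terminates and that every network it outputs is semi-discriminating, weakly labelled and partially resolved with induced $3$-dissimilarity $\delta_{\mathcal N}$: each elementary operation in its while-loop — collapsing an edge $(u,v)$ with $t'(u)=t'(v)$ lying inside a single cycle or outside every cycle, splitting a cycle vertex of degree $\geq 4$ while copying its label, deleting a chord $(r(C),h(C))$, and suppressing degree-two vertices — leaves every offspring set, hence every $lca$ and its label, unchanged, while driving the network toward the three defining conditions, which is exactly the loop's exit condition. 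Write $\mathcal T(\mathcal N)$ for such an output. Applying $(\star)$ with $\mathcal M:=\mathcal T(\mathcal N)$ and $\delta_{\mathcal T(\mathcal N)}=\delta_{\mathcal N}$ shows that $\mathcal N'$ is isomorphic to $\mathcal T(\mathcal N)$; in particular $\mathcal T(\mathcal N)$ is independent, up to isomorphism, of any choices made inside \textsc{Transform}, and the first assertion is proved. For the ``in particular'': if $\mathcal N$ is already semi-discriminating, weakly labelled and partially resolved then the while-condition of \textsc{Transform} is initially false, so $\mathcal T(\mathcal N)=\mathcal N$ and $\mathcal N'\cong\mathcal N$; conversely $\mathcal T(\mathcal N)$ always has these three isomorphism-invariant properties, so $\mathcal N'\cong\mathcal N$ forces them on $\mathcal N$. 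Finally, for a level-1 representable $\delta$, the network $\mathcal N_\delta$ returned by \textsc{Network-Popping} on $\delta$ is a representation of $\delta$ (Proposition~\ref{prop:delta-and-delta-N}) which is semi-discriminating (Theorem~\ref{theo:char}(v)), partially resolved (the simple networks built by \textsc{Build-Cycle} are partially resolved by definition and \textsc{Vertex-Growing} produces trees), and weakly labelled (each cycle it builds carries, on two different sides, the two distinguished leaves of the $\delta$-tricycle chosen in line~2 of \textsc{Build-Cycle}, so Lemma~\ref{ptric} applies to it); and any further representation $\mathcal M$ of $\delta$ with these three properties is, by $(\star)$, isomorphic to the network \textsc{Network-Popping} returns on $\delta_{\mathcal M}=\delta$, namely $\mathcal N_\delta$, which gives both existence and uniqueness of the semi-discriminating, weakly labelled, partially resolved representation of $\delta$.

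It remains to sketch the proof of $(\star)$, which is the substantive part, and I would argue by induction on $|X|$. Write $\delta=\delta_{\mathcal M}$. Since $\mathcal M$ is weakly labelled, Theorem~\ref{cocy2} puts the connected components $K$ of $\mathcal C(\delta)$ in bijection with the cycles of $M$, and for the cycle $C$ matching $K$ one reads off from Lemmas~\ref{lmtri} and~\ref{tripcyc} that \textsc{Find-Cycles} returns exactly the pair $(H(C),\;H(C)\cup\{y\in R(C)-H(C):t(v_C(y))\neq t(r(C))\})$. Next one verifies that \textsc{Build-Cycle}, fed $\delta$ and this pair, reconstructs the labelled simple level-1 network obtained from $C$ by contracting the pendant subnetworks hanging off it: Lemma~\ref{tripcyc} locates, for each leaf $y$ already placed on a side, precisely the further leaves whose last cycle-ancestor lies on the directed path from $v_C(y)$ to $h(C)$, and the \emph{semi-discriminating} hypothesis (which forces consecutive cycle vertices to carry distinct labels), together with Properties (P1)--(P8) — available since $\delta$ is level-1 representable — and the \emph{partially resolved} hypothesis, pin down the two sides, the linear order of the cycle vertices, the block of leaves incident with each cycle vertex, and all cycle-vertex labels (via $\delta(x,y)=t(v_C(y))$ for $x\in H(C)$). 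In the while-loop of \textsc{Network-Popping} each leaf is then replaced either by such a simple network — matching the corresponding $C$ — or, when its label $V_l$ equals no such $R(C)$, by the discriminating symbolic representation returned by \textsc{Vertex-Growing}; by Proposition~\ref{f3tsu} the relevant $\hat\delta$ is a symbolic ultrametric, so by Theorem~\ref{bdtr} that representation is unique, and since $\mathcal M$ is semi-discriminating it coincides with the matching ``contracted-tree'' part of $M$ built as in the proof of Proposition~\ref{f3tsu}. The pendant subnetworks of $M$ below its hybrids and below its tree leaves after contraction are themselves semi-discriminating, weakly labelled and partially resolved with strictly fewer leaves, so the induction hypothesis completes the reconstruction and the network \textsc{Network-Popping} returns on $\delta$ is isomorphic to $\mathcal M$.

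The main obstacle is the cycle-reconstruction step above, i.e.\ showing that \textsc{Build-Cycle} recovers $C$ \emph{exactly}. As the discussion following Lemma~\ref{tripcyc} stresses, the symbolic $3$-dissimilarity by itself cannot in general decide whether a leaf $z$ with $t(v)=t(r(C))$ for every vertex $v$ on the path from $r(C)$ to $v_C(z)$ sits on a side of $C$, below $h(C)$, or off $C$ entirely — this is the very freedom exhibited by the three networks in Fig.~\ref{exnet} — and the point is that simultaneously demanding semi-discriminating, weakly labelled and partially resolved eliminates exactly that freedom. Carefully checking that no residual ambiguity survives, in particular that the block of leaves attached to each cycle vertex and the placement of leaves on the two sides are forced, and that the recursive gluing of the pieces is consistent, is where the real work lies; the rest is bookkeeping on top of Lemmas~\ref{lmtri}, \ref{tripcyc}, \ref{ptric}, Theorems~\ref{bdtr} and~\ref{cocy2}, and Proposition~\ref{f3tsu}.
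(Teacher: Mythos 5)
Your argument is correct and is essentially the route the paper intends: the paper states Corollary~\ref{cor:uniqueness} without proof, leaving it to follow from Theorem~\ref{theo:char}, Theorem~\ref{cocy2} and the behaviour of \textsc{Transform}, and your reduction to the claim $(\star)$ together with the check that \textsc{Transform} preserves $\delta_{\mathcal N}$ and outputs a semi-discriminating, weakly labelled, partially resolved network supplies exactly the content the paper leaves implicit. Your identification of the crux --- that the semi-discriminating condition forces consecutive cycle vertices to carry distinct labels, so the one-step extensions in lines 8--9 of \textsc{Build-Cycle} and the TopDown ordering recover each weakly labelled cycle exactly --- is the right mechanism and does eliminate the ambiguity you flag.
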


\section{Characterizing level-1 representable symbolic 3-dissimilarities}
\label{sec:characterizing-level-1}
In this section, we present a characterization of
level-1 representable symbolic
3-dissimilarities on $X$ in terms of level-1 representable symbolic
3-dissimilarities on subsets of $X$ of size $|X|-1$
(Theorem~\ref{iffsubsets}). Combined with the fact that
algorithm {\sc Network-Popping} has polynomial run time, this suggests
that {\sc Network-Popping} might lend itself to studies involving
large data sets using a Divide-and-Conquer approach.

At the heart of the proof of our characterization
lies the following technical lemma which
concerns the question under what circumstances the restriction of a
level-1 representable symbolic 3-dissimilarity $\delta$ on $X$ is itself
level-1 representable. Central to its proof is the fact that
$|X|\not =4$ since, in general, a symbolic 3-dissimilarity $\delta$
on a set $X$ of size $4$ need not be level-1 representable
but the restriction of $\delta$ to any subset of size 3 is
level-1 representable. An example for this is furnished
by the symbolic 3-dissimilarity $\delta$ on $X=\{x,y,z,u\}$,
given by $\delta(x,y,z)=\delta(y,z,u)=\delta(x,y)=\delta(y,z)=\delta(z,u)
\neq \delta(x,z)=\delta(x,u)=\delta(y,u)=\delta(x,z,u)=\delta(x,y,u)$. 

Using the assumptions and definitions for the elements
  $x$, $y$, and $z$, and the sets $H$, $S_z$, and $S_y$
  made in algorithm {\sc Build-Cycle}, we have the following
 result.

\begin{lemma}\label{shortpath}
  Suppose $\delta$ is a symbolic 3-dissimilarity on $X$
  satisfying Properties (P1), (P2), (P4), and (P6), $x||yz$ is the
  $\delta$-tricycle chosen in line 2 of algorithm {\sc Build-Cycle},
  and $i\in \{y,z\}$. If $u,w \in S_i$
  are joined by a direct path from $u$ to $w$
  in $TD(S_i,x)$, then either $(u,w)$ is a directed
  edge of $TD(S_i,x)$ or there exists $v \in S_i$ such that both
  directed edges $(u,v)$ and $(v,w)$ are contained in $TD(S_i,x)$.
\end{lemma}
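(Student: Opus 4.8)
The plan is to analyze the structure of the TopDown graph $TD(S_i,x)$ by exploiting the fact that it encodes information about a (hypothetical) cycle $C$ in a level-1 representation. First I would set $S := S_i$ and recall the defining property: a directed edge $(a,b)$ of $TD(S,x)$ means precisely that $a|bx$ is a $\delta$-triplet, which by Table~\ref{tabtri} is equivalent to $\delta(a,b)=\delta(b,x)\neq\delta(a,x)$. A directed path from $u$ to $w$ thus corresponds to a chain of elements $u=a_0,a_1,\ldots,a_k=w$ with $a_j|a_{j+1}x$ a $\delta$-triplet for each $j$. The goal is to show this chain can be ``shortcut'' to length one or two. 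The natural strategy is induction on the path length $k$: it suffices to show that whenever there is a path of length $k\geq 3$ from $u$ to $w$, there is already a shorter path (of length $\geq 2$ but $<k$, or even the edge $(u,w)$ itself), so the minimal-length path from $u$ to $w$ has length at most $2$.

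The main technical engine will be a ``three-point'' lemma: if $a|bx$ and $b|cx$ are both $\delta$-triplets (so $(a,b),(b,c)$ is a directed $2$-path in $TD(S,x)$), then either $(a,c)$ is a directed edge of $TD(S,x)$, or $(c,a)$ is, or $\delta(a,x)=\delta(c,x)$. To see this, note the edge relations give $\delta(a,b)=\delta(b,x)$, $\delta(a,x)\neq\delta(b,x)$, $\delta(b,c)=\delta(c,x)$, and $\delta(b,x)\neq\delta(c,x)$; now consider the triple $\{a,c,x\}$. By Property (P1) (Helly-type), $\delta(a,c,x)\in\{\delta(a,c),\delta(a,x),\delta(c,x)\}$. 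Since $a,c\in S_i=S_y$ or $S_z$, Property (P4) forces $\{a,c\}\cap H=\emptyset$, so there is no $\delta$-tricycle on $\{a,c,x\}$ (a tricycle $x||ac$ would need $a$ or $c$ below the hybrid, i.e.\ in $H$; a tricycle $a||cx$ or $c||ax$ would put $x\in H$, contradicting $x\in X-R'$, or again put an element of $S_i$ into $H$). Hence by Table~\ref{tabtri} the triple $\{a,c,x\}$ is either a $\delta$-fork (giving $\delta(a,x)=\delta(c,x)$) or a $\delta$-triplet among the three leaves; the triplets $a|cx$, $c|ax$ give exactly the edges $(a,c)$ or $(c,a)$, while the triplet $x|ac$ gives $\delta(a,x)=\delta(c,x)$. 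This establishes the three-point lemma.

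With this in hand, I would run the induction. Given a path $u=a_0,\ldots,a_k=w$ with $k\geq 3$, apply the three-point lemma to $a_0,a_1,a_2$. If $(a_0,a_2)$ is an edge we have a shorter path; if $\delta(a_0,x)=\delta(a_2,x)$ we argue that combining $\delta(a_1,a_2)=\delta(a_2,x)$ with $\delta(a_0,x)=\delta(a_2,x)$ and the triple $\{a_0,a_1,a_2\}$ (again not a tricycle by (P4), so governed by Table~\ref{tabtri}) forces either $a_1|a_0a_2$ — impossible since that would need $\delta(a_0,x)\neq\delta(a_1,x)=\delta(a_2,x)$ together with $\delta(a_0,a_1)=\delta(a_1,a_2)$, combine carefully — or yields the edge $(a_0,a_1)$ being reversed, a contradiction with acyclicity considerations; the remaining option $(a_2,a_0)$ being an edge would create a $2$-cycle with $(a_0,a_2)$, which Proposition~\ref{cytdx} (together with Property~(P2) ruling out $3$-cycles when $|M|=2$, and the general argument for $TD$ not containing short cycles, as used in {\sc Build-Cycle} line 13) must be reconciled with — so the only consistent outcome is a strictly shorter path. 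Iterating (and using Property (P6) to know $TD(S_i,x)$ does not depend on the choice of $x\in H$ and contains no directed cycle) collapses any path to length $\leq 2$, which is exactly the statement.

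\textbf{Main obstacle.} The hard part will be the case $\delta(a_0,x)=\delta(a_2,x)$ in the three-point lemma: here one does \emph{not} immediately get a shortcut edge, and one must chase the equalities through Table~\ref{tabtri} for the triple $\{a_0,a_1,a_2\}$ and possibly invoke Property (P2) on a four-element subset (e.g.\ $\{a_0,a_1,a_2,x\}$ or $\{a_0,a_1,a_2,a_3\}$) to rule out the configuration that would otherwise block the induction. Keeping track of which of $\delta(a_j,a_{j+1})$ equal $\delta(a_{j+1},x)$ versus $\delta(a_j,x)$, and ensuring (P4) genuinely excludes all tricycle configurations on the relevant triples at every step, is where the bookkeeping is delicate; the acyclicity of $TD(S_i,x)$ guaranteed by (P6) is what ultimately prevents the shortcutting process from looping.
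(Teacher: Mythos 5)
Your ``three-point lemma'' is fine as far as it goes: by Property (P4) there is no $\delta$-tricycle on $\{a,c,x\}$, so Table~\ref{tabtri} leaves only the fork or the triplet $x|ac$ (both giving $\delta(a,x)=\delta(c,x)$) or one of the two triplets that yield an edge between $a$ and $c$. But the induction you build on it breaks down precisely at the case you flag as the ``main obstacle'', and that case cannot be argued away. Take a side of the cycle carrying $a_0,a_1,a_2,a_3$ with labels $A,B,A,B$ at $v_C(a_0),\ldots,v_C(a_3)$ and $t(r(C))\notin\{A,B\}$: then $(a_0,a_1),(a_1,a_2),(a_2,a_3)$ are all edges of $TD(S_i,x)$, $\delta(a_0,x)=\delta(a_2,x)=A$, and $\{a_0,a_2,x\}$ is a $\delta$-fork, so neither $(a_0,a_2)$ nor $(a_2,a_0)$ is an edge. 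This configuration is realized by a genuine level-1 representation, so your claim that in the case $\delta(a_0,x)=\delta(a_2,x)$ ``the only consistent outcome is a strictly shorter path'' is simply false: no contradiction is available from (P2), from Table~\ref{tabtri} applied to $\{a_0,a_1,a_2\}$, or from Proposition~\ref{cytdx}, and the induction stalls. (A secondary slip: the edge $(a,b)$ of $TD(S,x)$ means $a|bx$ is a $\delta$-triplet, i.e.\ $\delta(a,b)=\delta(a,x)\neq\delta(b,x)$, not $\delta(a,b)=\delta(b,x)\neq\delta(a,x)$ as you write; your three-point lemma survives this, but any equality-chasing in the hard case would not.)

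The paper's proof avoids shortcutting along the path altogether. It first applies exactly your three-point analysis to the pair $\{u,w\}$ itself: $(u,w)$ is not an edge by hypothesis and $(w,u)$ is not an edge by acyclicity (P6), so $\{u,w,x\}$ is a fork or the triplet $x|uw$, whence $\delta(u,x)=\delta(w,x)$. Then, with $v_1$ the second vertex of the given path, the edge $(u,v_1)$ forces $\delta(v_1,x)\neq\delta(u,x)=\delta(w,x)$, so the triple $\{v_1,w,x\}$ can be neither a fork nor $x|wv_1$ (nor a tricycle, by (P4)); it must be $w|v_1x$ or $v_1|wx$, and the former is excluded because the edge $(w,v_1)$ concatenated with the subpath $v_1,\ldots,v_k,w$ would form a directed cycle, contradicting (P6). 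Hence $(v_1,w)$ is an edge and $u\to v_1\to w$ is the required $2$-path. To repair your argument you would need to replace the inductive shortcutting of consecutive triples by this direct comparison of $w$ with $v_1$; that comparison, not the shortcut, is where the content of the lemma lies.
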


\begin{proof}
By symmetry, we may assume $i=y$.
Suppose there
exists a directed path $v_0=u, v_1, \ldots, v_k, v_{k+1}=w$,
some $k\geq 0$, from $u$ to $w$ in $TD(S_y,x)$
and that $(u,w)$ is not a directed
edge on that path. Then $k\geq 1$ and, so, $v_1\not\in\{u,w\}$.
It suffices to show that $(v_1,w)$
is a directed edge of $TD(S_y,x)$.

Observe first that, in view of Property (P6), $(w,u)$ is not a
directed  edge in $TD(S_y,x)$ as otherwise $TD(S_y,x)$ would contain a
directed cycle. Combined with the definition of $S_y$ it follows
that either $x|uw$ is a $\delta$-triplet or we have
a $\delta$-fork on $\{x,u,w\}$. In either case, $\delta(u,x)=\delta(w,x)$
holds. Since $(u,v_1)$ is a directed edge in $TD(S_y,x)$, we also have
that $xv_1|u$ is a $\delta$-triplet.
Hence, $\delta(v_1,x)\neq\delta(x,u)=\delta(w,x)$ and so we cannot have
a $\delta$-fork on $\{x,w,v_1\}$. Since, in view of Property (P4),
we cannot have a 
$\delta$-tricycle on $\{x,w,v_1\}$ either 
$\delta(w,v_1)=\delta(w,x)$ or $\delta(w,v_1)=\delta(v_1,x)$ follows.

If the first equality holds, then $v_1x|w$ is a 
$\delta$-triplet and, so, $(w,v_1)$ is a directed edge in $TD(S_y,x)$.
Consequently, the directed path $v_1, \ldots, v_k, w$ concatenated with
that edge forms a directed cycle in $TD(S_y,x)$, which is impossible in view
of Property (P6) holding.
Thus, $\delta(w,v_1)=\delta(v_1,x)$ must hold. Consequently, $wx|v_1$
is a $\delta$-triplet and, so, $(v_1,w)$ is an edge in $TD(S_y,x)$,
as required.
\end{proof}

To establish the main result of this section (Theorem~\ref{iffsubsets}), 
we need to be able to distinguish between the sets defined in lines~8 and 9 of
algorithm {\sc Build-Cycle} when given a symbolic 3-dissimilarity $\delta$
on $X$ and the restriction $\delta|_Y$ of $\delta$ to a
subset $Y\subseteq X$ with $|Y|\geq 3$.
To this end, we augment for a symbolic 3-dissimilarity $\kappa$ on $X$
the definition of those sets by writing $S_i(\kappa)$
rather than $S_i$, $i=y,z$.

Observe first that if $\delta$ is level-1 representable and $Y\subseteq X$
such that $|Y|\geq 3$, then the restriction $\delta|_Y$
of $\delta$ to $Y$ is clearly level-1 representable. Indeed, a
level-1 representation $\mathcal N(\delta|_Y)$ of $\delta|_Y$
can be obtained from a level-1 representation $\mathcal N(\delta)$
of $\delta$ using the following 2-step process.
First, remove all leaves in $X-Y$ and 
their respective incoming edges from $\mathcal N(\delta)$ and then suppress
all resulting degree two vertices. Next, 
apply algorithm \textsc{Transform} to the resulting network. This begs
the question of when level-1 representations of symbolic 3-dissimilarities
on subsets of $X$ give rise to a level-1 representation of a symbolic
3-dissimilarity on $X$. To answer this question which is the purpose of
Theorem~\ref{iffsubsets} we require the next result.

\begin{proposition} \label{lem:n=4,5}
  Let $\delta$ be a symbolic 3-dissimilarity on $X$. Then the following
  statements hold.
  \begin{enumerate}
  \item[(i)] If $|X|\geq 6$ and $\delta$ does not satisfy Property (Pi),
    $i\in\{1,2,\ldots, 8\}$,
    then there exists some
    $Y\subseteq X$ with $3\leq |Y|\leq 5$ such that that property is also not
    satisfied by $\delta|_Y$.
    \item[(ii)] If $|X|\geq 6$ and $\delta$ is not level-1 representable then
      there exists some $Y\subseteq X$ with $3\leq |Y|\leq 5$ such
      that $\delta|_Y$
    is also not level-1 representable.
\end{enumerate}
\end{proposition}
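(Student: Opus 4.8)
\emph{Overall plan.} I would first deduce (ii) from (i): if $\delta$ is not level-1 representable then, by Theorem~\ref{theo:char}, it fails some (Pj) with $j\in\{1,\dots,8\}$; part (i) gives $Y\subseteq X$ with $3\le |Y|\le 5$ on which $\delta|_Y$ fails (Pj); and Theorem~\ref{theo:char} applied to $\delta|_Y$ shows that $\delta|_Y$ is not level-1 representable. So everything reduces to (i), which I would prove by a case analysis on $j$.

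\emph{The mechanism behind (i).} The point is that each of (P1)--(P8) is a universally quantified statement whose ingredients are the values of $\delta$ on $2$- and $3$-subsets and the \emph{type} (fork, triplet, or tricycle) of the $\delta$-trinet supported on a $3$-subset, and, by Lemma~\ref{3tri} and Table~\ref{tabtri}, this type depends only on the restriction of $\delta$ to that $3$-subset and so is preserved under restriction to any larger set. Hence a failure of (Pj) always carries a finite \emph{witness}: a bounded collection of $\delta$-trinets together with a bounded number of extra elements tied by $\delta$-values. I would take $Y$ to be the union of the leaf sets of those trinets with the extra elements; then the witness survives restriction to $Y$, so $\delta|_Y$ also fails (Pj), and it only remains to check $|Y|\le 5$. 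For $j=1$ the witness is one violating triple, so $|Y|=3$. For $j=2$ the entire statement of (P2) for a quadruple $Q$ is determined by $\delta|_Q$, so a violating quadruple gives $|Y|=4$. For $j=3$ I would first observe that (P3) is equivalent to its instances on \emph{adjacent} vertices of $\mathcal C(\delta)$ (a path in $\mathcal C(\delta)$ witnessing a violation contains two consecutive tricycles $\tau,\tau'$ with $\delta(L(\tau))\neq\delta(L(\tau'))$); since edges of $\mathcal C(\delta)$ with both endpoints' leaf sets inside $Y$ remain edges of $\mathcal C(\delta|_Y)$, taking $Y=L(\tau)\cup L(\tau')$ gives $|Y|=4$.

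\emph{The cases $j\in\{4,\dots,8\}$.} I would first reduce to the case that all (Pi) with $i<j$ hold (otherwise we are done by an earlier case); in particular the quantities $H,R',S_y',S_z',S_y,S_z$ and the graphs $TD(\cdot,\cdot)$ and $CL(\cdot,\cdot,\cdot)$ are well behaved and Proposition~\ref{cytdx} and Lemma~\ref{shortpath} are available. Reading off the lines of \textsc{Build-Cycle} that check each property, a violation is then witnessed by the chosen $\delta$-tricycle $x||yz$, at most two ``side-leaf'' tricycles of the form $x||uz$ or $x||yv$ (certifying membership in $S_y'$ or $S_z'$, possibly accompanied by a $\delta$-triplet certifying membership in some $S_i-S_i'$), and the offending tuple itself -- an extra element for (P4) and (P5), a $CL$-edge or a pair of $CL$-edges for (P7) and (P8), or a directed triangle in some $TD(S_i,x)$ for (P6) (such a triangle exists whenever $TD(S_i,x)$ contains a directed cycle, by Proposition~\ref{cytdx}(i)). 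Since the side-leaf tricycles all share the pair $\{x,z\}$ or $\{x,y\}$, they are pairwise adjacent in $\mathcal C(\delta|_Y)$, hence fall into one component of $\mathcal C(\delta|_Y)$ once their leaves lie in $Y$; this is what stops the component-derived sets $H,R'$ from shrinking the wrong way under restriction. To keep $|Y|\le 5$ I would use two reductions: for (P5), if an offending pair $u\in S_y$, $v\in S_z$ has $\delta(u,w)=\delta(y,w)$ and $\delta(v,w)=\delta(z,w)$ I replace it by the pair $\{y,z\}$ (certified by $x||yz$ alone), and otherwise by $\{y,u\}$ or $\{z,v\}$; and for the clauses of (P6) and (P8) that compare $TD$- or $CL$-graphs along paths, Lemma~\ref{shortpath} lets me shorten any witnessing path to length at most two, so only three vertices of $S_i$ besides $x$ enter. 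A leaf count then shows that the worst configurations -- three tricycles glued along a common pair, or two side-leaf tricycles plus one extra element after the (P5)-reduction -- involve at most five distinct elements.

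\emph{Main obstacle.} The routine part is $j\in\{1,2,3\}$; the technical heart is making the previous paragraph precise for $j\in\{4,\dots,8\}$. I expect the hardest points to be: (1) the clauses $S_y\cap H=S_z\cap H=S_y\cap S_z=\emptyset$ of (P4), where I must certify the offending element's membership in $H$ using only trinets whose leaves are already in $Y$, i.e.\ rule out needing an arbitrarily long path in $\mathcal C(\delta)$; and (2) the non-isomorphism clause of (P6), where I must exhibit a bounded family of elements witnessing $TD(S_i,u)\not\cong TD(S_i,u')$ while simultaneously certifying $u,u'\in H$, all within five elements. Both should follow from the regularity forced by the already-assumed properties together with Lemma~\ref{shortpath}, rather than from pure bookkeeping.
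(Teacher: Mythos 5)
Your proposal follows essentially the same route as the paper's proof: deriving (ii) from (i) via Theorem~\ref{theo:char}, extracting a bounded witness for each property using the fact that $\delta$-trinet types are determined by restriction to $3$-sets, assuming Properties (P$j'$) with $j'<j$ when treating (P$j$), invoking Proposition~\ref{cytdx}(i) to obtain a directed triangle for (P6), and using Lemma~\ref{shortpath} to shorten witnessing paths for (P8). The two points you flag as hardest are handled in the paper exactly by the mechanisms you name --- small explicit witness sets such as $\{x,y,z,u\}$ for the intersection clauses of (P4) and $\{v,u,u',w,z\}$ for the non-isomorphism clause of (P6) --- so there is no substantive divergence.
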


\begin{proof}[Proof]
  (i) The proposition is straight-forward to show for Properties
  (P1) and (P2), since  they involve three
  and four elements of $X$, respectively. Note that to
  see  Property (P$i$), $3\leq i\leq 8$,
  we may assume without loss of generality that Properties (P$j$),
  $3\leq j\leq i-1$, are satisfied by $\delta$. For ease of readability,
  we put $S_y:=S_y(\delta)$.

  If $\delta$ does not satisfy Property (P3) then there exists a connected
  component $C$ of $\mathcal C(\delta)$ and $\delta$-tricycles $\tau,
  \tau' \in V(C)$ such that
  $\delta(L(\tau))\not=\delta(L(\tau'))$. Without loss of generality,
  we may assume that $\tau$ and $\tau'$ are adjacent. Then
  $|L(\tau)\cap L(\tau')|=2$. Let $x,y,z\in X$
  such that  $\tau=x||yz$. Then either $\tau'=x'||yz$ or $\tau'=x||yz'$
  where $x',z'\in X$. But then Property (P3) is not
  satisfied either for $\delta$ restricted to
  the 5-set $Z=\{x,y,z,x',z'\}$.

  For the remainder, let $(H,R')$ denote the pair
  returned by algorithm \textsc{Find-Cycles} when given $\delta$
  and let $x\in H$
  and $y,z\in R'$ such that $x||yz$ is a vertex in the connected
  component $C$ of $\mathcal C(\delta)$ corresponding to $(H,R')$.
  Suppose $\delta$ does not satisfy Property (P4). Assume first that
  the second part of Property (P4) is not satisfied.
  Then if there exists any element
  $u$ contained in $ H\cap S_y$  or in $ H\cap S_z$  or in
  $ S_z\cap S_y$ then $u$ is also contained in the corresponding intersections
  involving the sets  $S_y(\delta|_Z)\subseteq S_y$ and
  $S_z(\delta|_Z) \subseteq S_z$ 
  found by \textsc{Build-Cycle} in its lines lines 8 and 9
  for $\delta$ restricted to  $Z=\{ x,y,z,u\}$.
  Thus, the second part of Property (P4) does not hold for $\delta|_Z$.

  Now assume that the first part of Property (P4) does not hold
  for $\delta$, that is,
$S_i'\not =A:=\{w\in S_i : \delta(w,x)\not=\delta(y,z)\}$.
By symmetry, we may assume without loss of generality that $i=y$. 
Then since $S_y'\subseteq A$ clearly
holds there must exists some $w\in A-S_y'$. Put $U=\{x,y,z,w\}$.
Then $w\not\in S_y'(\delta|_U)$ as $w\not\in S_y'$. However
we clearly have that $w\in S_y(\delta|_U)$ and
$\delta|_U(w,x)\not=\delta|_U(y,z)$.
Thus, the first part of Property (P4) is not satisfied with
$\delta$ replaced by $\delta|_U$.

  If $\delta$ does not satisfy Property (P5) then since
   $y\in R:=H\cup S_y\cup S_z$
  it follows for $u:=y$ and $v$ and $w$ as in the statement of
  Property (P5) that the restriction of $\delta$ to 
  $\{x,u,z,v,w\}$ does not satisfy Property (P5) either.

  If $\delta$ does not satisfy Property (P6)
   then either (a)
   there exist elements $u,u'\in H$ such that $TD(S_y, u)$
   and $TD(S_y, u')$ are not isomorphic or (b) there exists some
   $u\in H$ such that $TD(S_y, u)$ has a directed cycle $C$.
   
   Assume first that Case (a) holds. Then there must exist
   distinct vertices $v$
   and $w$ in $S_y$ such that $(v,w)$ is a directed edge in
   $TD(S_y, u)$ but not in $TD(S_y, u')$. With $Z=\{v,u,u',w,z\}$
   it follows that $S_v(\delta|_Z)=\{v,w\}$. Since  the directed
   edge $(v,w)$ is clearly contained in the TopDown graph
   $TD(\{v,w\},u)$ associated to $\delta|_Z$ but not in the
   TopDown graph $TD(\{v,w\},u')$ associated to $\delta|_Z$,
   Property (P6) is not satisfied for $\delta|_Z$.

Thus, Case (b) must hold. In view of
   Proposition~\ref{cytdx}(i), we may assume that the size of
   $C$ is three. Hence, the subgraph $G$ of $TD(S_y,u)$ induced by
   $Z=V(C)\cup \{z,u\}$ also contains a cycle of length 3. Since
   $G$ coincides with the TopDown graph $TD(V(C),u)$ for $\delta|_Z$
   and $|Z|=5$ holds, it follows that $\delta|_Z$ does not satisfy
   Property (P6).

  If $\delta$ does not satisfy Property (P7) then there must exist
  undirected edges $e=\{a,b\}$ and $e'=\{a',b'\}$ in
  $CL(H, S_y,S_z)$ such that $\delta(a,b)\not=\delta(a',b')$.
  Then for at least one of $e$ and $e'$, say $e$, we must have
  that $\delta(a,b)\not=\delta(y,z)$. Put $Z=\{x,y,z,a,b\}$.
  Then since $\{y,z\}$ is also
  an undirected edge in $CL(H, S_y(\delta|_Z),S_z(\delta|_Z))$ it follows that
  $\delta|_Z$
   does not satisfy Property (P7) either.

   Finally, suppose that $\delta$ does not satisfy Property (P8).
   Considering both alternatives in the statement of Property (P8) together,
there must exist vertices $u\in S_y$ and $v,w\in S_y \cup H$
such that both $(u,v)$ and $(u,w)$ are directed edges of $CL(H, S_y,S_z)$ and
  $\delta(u,v)\not=\delta(u,w)$. Independent of whether
  $v,w\in S_y$ or $v,w\in H$ or $v\in S_y$ and $w\in H$,
it follows that either $\delta(u,x)\neq\delta(u,v)$ or
$\delta(u,x) \neq \delta(u,w)$.  Assume without loss of generality
that $\delta(u,x)\not=\delta(u,v)$. Note that $(u,x)$ is also a directed
edge in $CL(H, S_y,S_z)$.

If $v \in H$, then $\delta|_Z$ does not satisfy Property
(P8) for $Z=\{x,y,z,u,v\}$.
So assume $v\not\in H$. Then $v \in S_y$. Since $(u,v)$ is a directed
edge in $CL(H, S_y,S_z)$ it follows that there exists
a directed path $P$ from $u$ to $v$ in $TD(S_y,x)$. By
Lemma~\ref{shortpath}, either (a) $P$ has a single directed edge
or (b) there exists some $v_1 \in S_y$ such that both
$(u,v_1)$ and $(v_1,v)$ are directed edges of $TD(S_y,x)$.
  
If Case (a) holds, then $\delta|_Z$ does not satisfy Property (P8)
for $Z=\{x,y,z,u,v\}$. So assume that Case (b) holds. Then
$\delta|_{Z'}$ does not satisfy Property (P8) for $Z'=\{x,y,z,u,v,v_1\}$. 
Since the definition of $TD(S_y,x)$ implies that  $xv|v_1$ is a
$\delta$-triplet, it follows that $\delta(x,v) \neq \delta(x,v_1)$.
Hence, either $\delta(v,x) \neq \delta(v,z)$ or
$\delta(v_1,x) \neq \delta(v,z)$. By Properties (P3) and (P4)
it follows  in the first case that $x||vz$ is a $\delta$-tricycle,
and that $x||v_1z$ is a $\delta$-tricycle in the second case.
Thus, either $v$ or $v_1$ can play the role of $y$ in $\tau$. Consequently,
$\delta$ restricted to $Z=Z'-\{y\}$ does not satisfy Property (P8).

  (ii) This is a straight-forward consequence of
  Theorem~\ref{theo:char} and Proposition~\ref{lem:n=4,5}(i).
\end{proof}

\begin{theorem}\label{iffsubsets}
  Let $\delta$ be a symbolic 3-dissimilarity on a set $X$ such that
  $|X| \geq 6$. Then $\delta$ is level-1 representable if and only if for
all subsets $Y \subseteq X$ of size $|X|-1$, the restriction
  $\delta|_Y$ is level-1
  representable.
\end{theorem}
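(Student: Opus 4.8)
The plan is to obtain the theorem as a short corollary of Theorem~\ref{theo:char} and Proposition~\ref{lem:n=4,5}, so that essentially no new work is required. The forward implication is the easy one and is in fact already recorded in the paragraph preceding Proposition~\ref{lem:n=4,5}: if $\delta$ is level-1 representable then so is $\delta|_Y$ for every $Y\subseteq X$ with $|Y|\ge 3$, and in particular for every $Y$ of size $|X|-1$; one simply deletes the leaves in $X-Y$ from a level-1 representation of $\delta$, suppresses the resulting degree-two vertices, and tidies the outcome up with algorithm \textsc{Transform}.

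For the converse I would argue by contradiction. Suppose $\delta|_Y$ is level-1 representable for every $Y\subseteq X$ of size $|X|-1$, but $\delta$ itself is not level-1 representable. Since $|X|\ge 6$, Proposition~\ref{lem:n=4,5}(ii) produces a subset $Z\subseteq X$ with $3\le|Z|\le 5$ for which $\delta|_Z$ is not level-1 representable. The key (and essentially only) bookkeeping point is that $|X|\ge 6$ forces $|Z|\le 5\le |X|-1<|X|$, so $Z$ is contained in some \emph{proper} subset $Y$ of $X$ with $|Y|=|X|-1$. By hypothesis $\delta|_Y$ is then level-1 representable, and hence so is its further restriction $(\delta|_Y)|_Z=\delta|_Z$ (again by the easy forward direction, now applied inside $Y$, which is legitimate because $|Z|\ge 3$). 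This contradicts the choice of $Z$ and completes the proof.

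I do not expect a genuine obstacle here, since all the substance sits in Proposition~\ref{lem:n=4,5}, whose proof is already given. The single thing to be careful about is precisely where the hypothesis $|X|\ge 6$ enters: it is exactly what guarantees that a witnessing subset of size at most $5$ can be enlarged to a subset of size $|X|-1$ that is still proper in $X$; for $|X|\le 5$ this is no longer automatic, and the four-element example displayed just before the statement shows the equivalence genuinely fails in small cases. If one prefers a direct argument, one can instead invoke Theorem~\ref{theo:char} to replace ``level-1 representable'' by ``satisfies (P1)--(P8)'' and run the same size-bookkeeping property by property using Proposition~\ref{lem:n=4,5}(i) in place of (ii); both routes are equally short.
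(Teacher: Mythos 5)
Your proof is correct and follows essentially the same route as the paper's: the forward direction via the restriction observation preceding Proposition~\ref{lem:n=4,5}, and the converse by contradiction, using Proposition~\ref{lem:n=4,5}(ii) to produce a bad subset of size at most $5$ and then enlarging it to a subset of size $|X|-1$ whose restriction must then also fail to be level-1 representable. Your explicit remark on where the hypothesis $|X|\geq 6$ is used matches the paper's discussion of the four-element counterexample.
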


\begin{proof}
  Suppose first that $\delta$ is level-1 representable. Then, by
  the observation preceding
  Proposition~\ref{lem:n=4,5}, $\delta|_Y$ is level-1 representable, for all
  subsets $Y\subseteq X$ of size $|X|-1$.

  Conversely, suppose that $X$ is such that for all subsets $Y \subseteq X$ of
  size $|X|-1$, the restriction
  $\delta|_Y$ is level-1 representable but that $\delta$ is
  not level-1 representable. Then, by Proposition~\ref{lem:n=4,5}
  there exists a
  subset $Y\subseteq X$ with $|Y|\in \{3,4,5\}$ such that 
  $\delta|_Y$ is also not level-1 representable. But then
  $\delta$ restricted to any
  subset $Z$ of $X$ size $|X|-1$ that contains $Y$ 
    also is not level-1 representable which is impossible.
  \end{proof}

\section{Conclusion}
\label{sec:conclusion}

In this paper, we have introduced the novel {\sc Network-Popping}
algorithm. It takes as input an orthology relation,
formalized as a symbolic 3-dissimilarity $\delta$, and finds, 
 in polynomial time,
a level-1 representation of $\delta$
precisely if such a representation exists. In addition to 
this representation being a
discriminating symbolic representation of $\delta$ precisely if such a
tree is supported by $\delta$, {\sc Network-Popping} enjoys
several other attractive properties. As part of our analysis of
{\sc Network-Popping},
we characterize level-1 representable symbolic 3-dissimilarities
$\delta$ in terms of eight natural
properties that $\delta$ must satisfy. Last-but-not-least,
we also characterize a level-1 representable symbolic 3-dissimilarity
$\delta$ on some set $X$ with $|X|\geq 6$
in terms of level-1 representable orthology
relations induced by  $\delta$ on subsets of $X$  of size
$|X|-1$. Combined with the
polynomial run-time of {\sc Network-Popping} this suggests that
it could potentially be applied to large data sets
within a Divide-and-Conquer framework thus 
providing an alternative to tree-based reconciliation or
error correction approaches for orthology relations.

However a number of open questions remain. For example can other
types of phylogenetic networks be used to also represent orthology
relations. Interesting types of such networks might be tree-child networks
\cite{vIM14} as they are uniquely determined by the trinets they induce and
also regular networks \cite{W11} as they are known to be uniquely determined
by the phylogenetic trees they induce, a property that is not shared by
phylogenetic networks in general \cite{GH12}. For those  networks it would
also be interesting to understand how the representation
of an orthology relation
in terms of those trees relates to the way such a relation is represented
by the labelled network displaying the trees. Motivated by the point made in
\cite[Chapter 12]{F03} on $k$-estimates, $k\geq 3$, 
already mentioned above it might also be
interesting to investigate if symbolic $k$-dissimilarities for $k\geq 4$
lend themselves as a useful formalization of orthology relations.

A further question concerns the fact that by evoking parsimony
we only distinguish between 3 types of trinets associated to an
orthology relation. Thus it might be interesting to investigate what
can be done if this framework is replaced by e.\,g.\,a probabilistic one
which assigns probability values to the trinets.

\section*{Acknowledgements}
The authors would like to thank M. Taylor for stimulating discussions
on orthology relations.
  
\bibliographystyle{abbrv}
\bibliography{bibliography}{}

\begin{thebibliography}{10}

\bibitem{Altenhoff:09}
A.~M. Altenhoff and C.~Dessimoz.
\newblock Phylogenetic and functional assessment of orthologs inference
  projects and methods.
\newblock {\em PLoS Computational Biology}, 5:e1000262, 2009.

\bibitem{BD98}
S.~B\"ocker and A.~W.~M. Dress.
\newblock Recovering symbolically dated, rooted trees from symbolic
  ultrametrics.
\newblock {\em Advances in {M}athematics}, 138:105--125, 1998.

\bibitem{F03}
J.~Felsenstein.
\newblock {\em Inferring Phylogenies}.
\newblock Sinauer Associates, 2003.

\bibitem{GH12}
P.~Gambette and K.~T. Huber.
\newblock On encodings of phylogenetic networks of bounded level.
\newblock {\em Journal of {M}athematical {B}iology}, 61(1):157--180, 2012.

\bibitem{HHHMSW13}
M.~Hellmuth, M.~Hernandez-Rosales, K.~T. Huber, V.~Moulton, P.~F. Stadler, and
  N.~Wieseke.
\newblock Orthology relations, symbolic ultrametrics and cographs.
\newblock {\em Journal of Mathematical Biology}, 66(1-2):399--420, 2013.

\bibitem{HW15}
M.~Hellmuth and N.~Wieseke.
\newblock On symbolic ultrametrics, cotree representation, and cograph edge
  decomposition and partition.
\newblock {\em Computing and {C}ombinatorics}, 9198:609--623, 2015.

\bibitem{HM13}
K.~T. Huber and V.~Moulton.
\newblock Encoding and constructing 1-nested phylogenetic networks with
  trinets.
\newblock {\em Algorithmica}, 66(3):714--738, 2013.

\bibitem{HIMS15}
K.~T. Huber, L.~J.~J. van Iersel, V.~Moulton, and C.~Scornavacca.
\newblock Reconstructing phylogenetic level-1 networks from nondense binet and
  trinet sets.
\newblock {\em Algorithmica}, in press.

\bibitem{HRS10}
D.~Huson, R.~Rupp, and C.~Scornavacca.
\newblock {\em Phylogenetic {N}etworks}.
\newblock Cambridge {U}niversity {P}ress, 2010.

\bibitem{MM15}
M.~Kordi and M.~Bansal.
\newblock On the complexity of duplication-transfer-loss reconciliation with
  non-binary gene trees.
\newblock {\em {IEEE/ACM} {T}ransactions on {C}omputational {B}iology and
  {B}ioinformatics}, in press.

\bibitem{LeM15}
M.~Lafond and N.~El-Mabrouk.
\newblock Orthology relation and gene tree correction: complexity results.
\newblock In {\em W{ABI} 2015, {A}lgorithms in {B}ioinformatics}, volume 9289
  of {\em LNCS}, pages 966--79, 2015.

\bibitem{N13}
L.~Nakhleh.
\newblock Computational approaches to species phylogeny inference and gene tree
  reconciliation.
\newblock {\em Trends in {E}cology \& {E}volution}, 28(12):719--728, 2013.

\bibitem{SS03}
C.~Semple and M.~Steel.
\newblock {\em Phylogenetics}.
\newblock Oxford {U}niversity {P}ress, 2003.

\bibitem{vIM14}
L.~J.~J. van Iersel and V.~Moulton.
\newblock Trinets encode tree-child and level-2 phylogenetic networks.
\newblock {\em Journal of {M}athematical {B}iology}, 68(7):1707--1729, 2014.

\bibitem{W11}
S.~Willson.
\newblock Regular networks are determined by their trees.
\newblock {\em IEEE/ACM Transactions on Computational Biology and
  Bioinformatics}, 8:785--796, 2011.

\end{thebibliography}

\end{document}